\newenvironment{appendix-lemma}[1]{\vspace{0.1in}\noindent{\bf Lemma~#1~} \em }{\vspace{0.1in}}
\newenvironment{appendix-claim}[1]{\vspace{0.1in}\noindent{\bf Claim~#1~} \em }{\vspace{0.1in}}
\newenvironment{appendix-theo}[1]{\vspace{0.1in}\noindent{\bf Theorem~#1~} \em }{\vspace{0.1in}}
\tikzset{cross/.style={cross out, draw=black, minimum size=2*(#1-\pgflinewidth), inner sep=0pt, outer sep=0pt},
	cross/.default={1pt}}
\tikzstyle{vertex}=[auto=left,circle,draw=black!80,fill=none,minimum size=15pt,inner sep=0pt]
\tikzset{
    photon/.style={decorate, decoration={snake}, draw=red}}
\newcommand{\etal}{\textit{et al}. }
\begin{document}
\title{Optimal Capacity Modification for Stable Matchings with Ties\thanks{A preliminary version~\cite{RanjanNN25} of this work appeared in the proceedings of the 34th International Joint Conference on Artificial Intelligence (IJCAI) 2025}} 
%

\titlerunning{Stable Matchings with Ties} 

\author{Keshav Ranjan\inst{1}\,\thanks{This work was done while the author was at IIT Madras.} \and Meghana Nasre\inst{2} \and
Prajakta Nimbhorkar\inst{3}}
%
%
\authorrunning{K. Ranjan \etal } 
\institute{Chennai Mathematical Institute, Chennai, Tamil Nadu, India \and IIT Madras, Chennai, Tamil Nadu, India \and
Chennai Mathematical Institute and UMI ReLaX, Chennai, Tamil Nadu, India}
%

\newcommand{\HR}{\mbox{{\sf HR}}}
\newcommand{\prefa}{\mbox{{\sf Pref($a$)}}}
\newcommand{\prefap}{\mbox{{\sf Pref($a_1$)}}}
\newcommand{\prefal}{\mbox{{\sf List($a$)}}}
\newcommand{\preflqa}{\mbox{{\sf PrefC($a$)}}}
\newcommand{\prefb}{\mbox{{\sf Pref($b$)}}}
\newcommand{\prefh}{\mbox{{\sf Pref($h$)}}}
\newcommand{\prefhp}{\mbox{{\sf Pref($h'$)}}}
\newcommand{\prefr}{\mbox{{\sf Pref($r$)}}}
\newcommand{\prefrp}{\mbox{{\sf Pref($r'$)}}}
\newcommand{\prefrpp}{\mbox{{\sf Pref($r''$)}}}
\newcommand{\prefrT}{\mbox{{\sf Pref($r_2$)}}}
\newcommand{\prefbi}{\mbox{{\sf Pref($b_i^{i_3}$)}}}

\newcommand{\prefu}{\mbox{{\sf Pref($u$)}}}
\newcommand{\rankv}[1]{\mbox{{\sf $rank_{v}(#1)$}}}
\newcommand{\ranku}[1]{\mbox{{\sf $rank_{u}(#1)$}}}
\newcommand{\ranka}[1]{\mbox{{\sf $rank_{a}(#1)$}}}
\newcommand{\rankb}[1]{\mbox{{\sf $rank_{b}(#1)$}}}
\newcommand{\prefv}{\mbox{{\sf Pref($v$)}}}
\newcommand{\HRLQ}{\mbox{{\sf HRLQ}}}
\newcommand{\SMLQ}{\mbox{{\sf SM-1C}}}
\newcommand{\RSM}{\mbox{{\sf RSM}}}
\newcommand{\RWSM}{\mbox{{\sf RSM}}}
\newcommand{\SMTI}{\mbox{{\sf SMTI}}}
\newcommand{\HRTTLQ}{\mbox{{\sf HR-2T-2LQ}}}
\newcommand{\LQ}{\mbox{{\sf LQ}}}

\newcommand{\CL}{\mbox{{\sf CL}}}
\newcommand{\Df}[1]{\mbox{{\sf $def(#1)$}}}
\newcommand{\Dfa}[1]{\mbox{{\sf $def_{\mathcal{A}}(#1)$}}}
\newcommand{\Dfb}[1]{\mbox{{\sf $def_{\mathcal{B}}(#1)$}}}
\newcommand{\Dfh}[1]{\mbox{{\sf $def_{\mathcal{H}}(#1)$}}}
\newcommand{\Dfr}[1]{\mbox{{\sf $def_{\mathcal{R}}(#1)$}}}
\newcommand{\MAXEFM}{\mbox{{\sf MAXEFM}}}
\newcommand{\MAXSMTI}{\mbox{{\sf MAX SMTI}}}
\newcommand{\CLIQUE}{\mbox{{\sf CLIQUE}}}
\newcommand{\minBP}{\mbox{{\sf MinBP}}}
\newcommand{\minBR}{\mbox{{\sf MinBR}}}
\newcommand{\ESDA}{\mbox{{\sf ESDA}}}
\newcommand{\MSDA}{\mbox{{\sf MSDA}}}
\newcommand{\CSM}{\mbox{{\sf CSM}}}
\newcommand{\EFHRLQ}{\mbox{{\sf{EF}\text{-}\sf{HR}\text{-}\sf{LQ}}}}
\newcommand{\HH}{\mathcal{H}}
\newcommand{\RR}{\mathcal{R}}
\newcommand{\RRM}{\mathcal{R}_m}
\newcommand{\RRMU}{\mathcal{R}_m^u}
\renewcommand{\AA}{\mathcal{A}}
\newcommand{\BB}{\mathcal{B}}
\newcommand{\CC}{\mathcal{C}}
\newcommand{\D}{\mathcal{D}}
\newcommand{\EE}{\mathcal{E}}
\newcommand{\clr}{\textcolor{blue}}
\newcommand{\rvrO}{\textcolor{blue}}
\newcommand{\rvrT}{\textcolor{orange}}
\newcommand{\rvrTh}{\textcolor{cyan}}
\newcommand{\clrR}{\textcolor{red}}
\newcommand{\clrg}{\textcolor{lightgray}}
\renewcommand{\S}{{s}}
\newcommand{\T}{{t}}
\newcommand{\MINSUM}{\mbox{{\sf \textsc{MinSum}}}}
\newcommand{\MINSUMR}{\mbox{{\sf \textsc{MinSum-R}}}}
\newcommand{\MINMAX}{\mbox{{\sf \textsc{MinMax}}}}
\newcommand{\MINSUMSP}{\mbox{{\sf \textsc{MinSumSP}}}}
\newcommand{\MINMAXSP}{\mbox{{\sf \textsc{MinMaxSP}}}}
\newcommand{\MINSUMSSP}{{\sf \textsc{MinSumSSP}}}
\newcommand{\MINSUMSS}{{\sf \textsc{MinSum-SS}}}
\newcommand{\MINCOST}{\mbox{{\sf \textsc{MinCost}}}}

\newcommand{\MINMAXSSMP}{{\sf \textsc{MinMaxSSM} Problem}}
\newcommand{\CAPSSMP}{{\sf \textsc{1-or-2 Capacity-SS problem}}}
\newcommand{\RPCAPSSMP}{{\sf 1-or-2 Capacity \textsc{SSP} problem}}
\newcommand{\MINMAXSS}{{\sf \textsc{MinMax-SS}}}
\newcommand{\RPMINMAXSS}{{\sf \textsc{MinMax-SSP}}}
\newcommand{\MINMAXSSBT}{{\sf \textsc{MinMax-SS-BT}}}
\newcommand{\MINSUMSSFP}{{\sf \textsc{MinSum-SS-FP}}}
\newcommand{\MINSUMSSFE}{{\sf \textsc{MinSum-SS-FE}}}
\newcommand{\MINSUMCOST}{{\sf \textsc{MinSum-Cost}}}
\newcommand{\MINMAXSSRP}{\mbox{{\sf \textsc{MinMaxSSRP }}}}
\newcommand{\SMI}{\mbox{{\sf SMI}}}
\newcommand{\HRHT}{\mbox{{\sf HR-HT}}}
\newcommand{\HRT}{\mbox{{\sf HRT}}}
\newcommand{\AugSSM}{{\sf 1-or-2 Capacity SSM}}
\newcommand{\SAT}{\mbox{{\sf 1-in-3 SAT}}}
\newcommand{\MSAT}{{{\sf \textsc{Monotone 1-in-3 SAT}}}}
\newcommand{\MNESAT}{{\sf \textsc{Monotone Not-All-Equal 3-SAT}}}
\newcommand{\SSM}{{strongly stable matching}}
\newcommand{\SBP}{{strong blocking pair}}
\newcommand{\SBPM}{{strong blocking pair with respect to the matching $M$}}
\newcommand{\keshav}[1]{{\leavevmode\color{red}{\ (Keshav: #1)}}}

\newcommand{\ct}[1]{\mbox{{$C_{#1}$}}}
\newcommand{\gt}[1]{\mbox{{$G_{#1}$}}}
\newcommand{\pref}[1]{\sf Pref(#1)}
\newcommand{\II}{\mathcal{I}}

\newcommand{\tG}{\widetilde{G}}
\newcommand{\br}[1]{\overline{#1}}
\newcommand{\tth}{\widetilde{h}}
\newcommand{\tM}{\widetilde{M}}
\newcommand{\tq}{\widetilde{q}}
\newcommand{\tHH}{\tilde{\mathcal{H}}}
\newcommand{\tL}{\mathcal{L'}}
\newcommand{\tLR}{\tilde{\mathcal{L}_r}}
\newcommand{\tLH}{\tilde{\mathcal{L}_h}}
\newcommand{\tE}{\tilde{{E}}}

\newcommand{\pn}[1]{{\leavevmode\color{teal}{\ (pn: #1)}}}

\newcounter{dummy} \numberwithin{dummy}{section}
\newtheorem{thm}[dummy]{Theorem}
\newtheorem{defi}[dummy]{Definition}
\newtheorem{lem}[dummy]{Lemma}
\newtheorem{cl}[dummy]{Claim}
\newtheorem{obs}[dummy]{Observation}

\maketitle              
\begin{abstract}
We consider the Hospitals/Residents (\HR) problem in the presence of ties in hospital preferences. Among the three notions of stability, namely weak stability, strong stability, and super-stability, we focus on the notion of strong stability. Strong stability has many desirable properties, both theoretically and in practice; however, its existence is not guaranteed.
In this paper, our objective is to optimally increase the quotas of hospitals to ensure that a strongly stable matching exists in the modified instance. 

We explore two natural optimization criteria: (i) minimizing the total capacity increase across all hospitals (\MINSUM) and (ii) minimizing the maximum capacity increase for any hospital (\MINMAX). We show that the \MINSUM\ problem admits a polynomial-time algorithm. 
We also establish an analog of the well-known rural hospitals theorem [Gale \& Sotomayor, 1985; Roth, 1986], adapted to the \MINSUM\ augmentation setting. 

We consider a generalization of the \MINSUM\ problem in which each hospital incurs a cost per
unit increase in its quota. We show that the cost version of the \MINSUM\ problem is NP-hard and inapproximable within any multiplicative factor, even if the costs are zero or one. 
For the \MINSUM\ objective with a set of {\em forced }edges, we give a polynomial-time algorithm.

In contrast to the above results for the \MINSUM\ problem, we show that the \MINMAX\ problem is NP-hard. 
When hospital preference lists have ties of length at most $\ell+1$, we give a polynomial-time algorithm that increases each hospital’s quota by at most $\ell$, ensuring the resulting instance admits a strongly stable matching. Moreover, among all such augmentations, 
our algorithm outputs the best strongly stable matching from the residents' perspective. 

\keywords{Hospitals/Residents Problem \and Ties \and Preferences \and Strongly Stable Matching \and Capacity Modification.}
\end{abstract}

\section{Introduction}
The Hospitals/Residents (\HR) problem~\cite{GusfieldIrvingBook,RothSotomayor1990Book} is a many-to-one generalization of the classical stable marriage problem~\cite{GS62}. As the name suggests, the \HR\ problem models the assignment of junior doctors (residents)
to hospitals where agents in both sets are allowed to rank acceptable agents from the other set in a preference order. The problem has been extensively investigated, as it has applications in a number of centralized matching schemes in many countries, including the National Resident Matching Program in the USA ({\sf NRMP}), the Canadian Resident Matching Service ({\sf CaRMS}), and the Scottish Foundation Allocation Scheme ({\sf SFAS}), to name a few. In addition, the \HR\ problem models several real-world applications, such as assigning children to schools~\cite{abdulkadirouglu2005new} and students to undergraduate programs~\cite{BaswanaCCKP19}, where agents need to be matched to programs, and both sets express preferences over each other. 

We consider a generalization of the \HR\ problem where {\em ties} are allowed in the preference lists. That is, agents can be indifferent between multiple agents from the other set. This problem is known as the Hospitals/Residents problem with ties (\HRT). 
Ties in preference lists play an important role in real-world matching applications. For instance, hospitals with many applicants often find it difficult to generate
strict preference lists and, within the framework of a centralized matching scheme, have expressed a desire to include ties in their preference lists~\cite{swat/IrvingMS00}. In the context of college admissions, it is natural for colleges to place all students with equal scores in a single tie in their preference lists.

The classical notion of stability defined for strict preferences has been generalized in the literature for the case of ties, in three different ways --- weak stability, strong stability and super-stability (see Definition~\ref{def:strong} and the footnote therein). As indicated by the names, super-stability is the strongest notion, and weak stability is the weakest among the three. It is well-known that every instance of the \HRT\ problem admits a weakly stable matching (and it can be obtained by breaking ties arbitrarily and computing a stable matching in the resulting strict-list instance); however, strong or super-stable matchings are not guaranteed to exist~\cite{irving1994stable}.

The strongest notion of stability is super-stability. However, as noted in~\cite{irving2003strong}, insisting on super-stability in practical scenarios can be overly restrictive and is less likely to be attainable. Moreover, in applications such as college admissions, students are naturally expected to
express strict preferences over colleges, even though colleges may place students with equal scores in a single tie\footnote{We refer to this as \HRHT\ in this paper.}. In such scenarios, super and strong stability coincide. On the other hand, weak stability is too weak, and as justified in~\cite{Manlove02Lattice}, it is susceptible to compromise through persuasion or bribery (also see~\cite{irving2003strong,kunysz2016characterisation} for further details). Moreover, from a social perspective, weak stability may not be an acceptable notion despite its guaranteed existence. For instance, according to the equal treatment policy used in Chile and Hungary, it is not acceptable that a student is rejected from a college preferred by her, even though other students with the same score are admitted (see~\cite{CsehH20Restricted} and the references therein). Thus, strong stability is not only appealing but also essential.

Since strong stability is desirable but not always guaranteed to exist, a natural approach is to increase the quotas of hospitals (or colleges) so that the modified instance admits a strongly stable
matching. We address this problem in this paper. We use the hospital-residents terminology, as is customary in many-to-one stable matchings. Thus, we seek to increase or {\em augment }the hospital quotas to obtain a modified instance that admits a strongly stable matching. We note that, unlike the stable matching problem with strict preference lists\textemdash which is monotone with respect to capacity increases\textemdash the strongly stable matching problem with ties is not monotone with respect to capacity augmentation. That is, an instance that admits a strongly stable matching may stop admitting one after the capacities of certain hospitals are increased. We illustrate this with the following example. 

 Consider an instance $G=(\RR\cup\HH, E)$ shown in Fig.~\ref{fig:NonMonotonic} where $\RR=\{r_1, r_2,r_3,r_4\}$, $\HH=\{h\}$ and the quota of the hospital $h$ is $q(h)=1$. The instance does not admit a \SSM, since either $r_1$ or $r_2$ remains unmatched and forms a blocking pair with $h$. If we set $q(h)=2$, then $\{(r_1,h),(r_2,h)\}$ is a \SSM\ in the resulting instance. However, if we further augment the quota of $h$ and set $q(h)=3$, then the resulting instance does not admit a \SSM, since at least one resident must remain unmatched in any matching and forms a blocking pair with $h$.


\begin{figure}
\begin{tabular}{ll p{0.3 cm}  p{0.3 cm} ll}
			$r_1$ :& $h$ & & & &\\[2pt]
			$r_2$ :& $h$ & & & &\\[2pt]
                $r_3$ :& $h$ & & & &\\[2pt]
                $r_4$ :& $h$& & & $[1]\ h$ :& $(r_1,r_2), (r_3,r_4)$\\[2pt]
		\end{tabular}
            \caption{An instance $G$ illustrating the non-monotonicity of the \SSM\ problem.}
        \label{fig:NonMonotonic}
\end{figure}

Therefore, an instance that admits a strongly stable matching may cease to admit one after the capacities of certain hospitals are increased. This non-monotonicity brings new challenges.

We explore two natural optimization criteria: (i) minimizing the total increase (sum) in quotas across all hospitals (\MINSUM), and (ii) minimizing the maximum increase in the quota for any hospital (\MINMAX).
Our work falls under the broad theme of capacity planning/modification, which has received significant attention~\cite{ChenCsaji23CapAAMAS,GokhaleSNV24AAMASCap,BobbioCLT22CapacityVar,BobbioCLRT23Capacity,AfacanDL24CapDesign}, motivated by practical
applications in which quotas are not rigid. To the best of our knowledge, our work is the first one to explore capacity augmentation for the 
notion of strong stability, although capacity modification has received considerable attention in the strict list setting (see Section~\ref{subsec:relWork}).

\subsection{Preliminaries and notation}
The input to our problem is a bipartite graph $G=(\RR\cup\HH, E)$, where the vertex set $\RR$ represents the set of residents, $\HH$ represents the set of hospitals and the edge set $E$ represents mutually acceptable resident-hospital pairs. We set $n = |\RR| + |\HH|$ and $m = |E|$.
Each hospital $h\in \HH$ has an associated quota $q(h)$, denoting the maximum number of residents that may be assigned to it in any valid matching.
Each vertex $v\in\RR\cup\HH$ ranks its neighbors according to its preference ordering, called {\em the preference list of $v$} and denoted \prefv.
We say that a vertex strictly prefers a neighbor with a smaller rank over another neighbor with a larger rank. If a vertex is allowed to be indifferent between some of its neighbors and is allowed to assign the same rank to such neighbors, it is referred to as a {\em tie}. The length of a tie is the number of neighbors having equal rank. If ties are not allowed (or equivalently, all ties have length 1), the preference lists are said to be {\em strict}. We use $u_1\succ_v u_2$ to denote that $v$ strictly prefers $u_1$ over $u_2$ and 
 $u_1\succeq_v u_2$ to denote that $v$ either strictly prefers $u_1$ over $u_2$ or is indifferent between them.

A matching $M$ in $G$ is a subset of $E$ such that for each resident $r\in\RR$ we have $|M(r)|\le 1$ and for each hospital $h\in\HH$ we have $|M(h)|\le q(h)$ where $M(v)$ denotes the set of matched partners of $v$ in $M$. 
For a resident $r$, if $|M(r)|=0$, then $r$ is unmatched in $M$. In this case, we denote the matched partner of $r$ by $M(r)=\bot$. 
A hospital $h\in\HH$ is said to be fully subscribed in $M$ with respect to its quota $q(h)$ if $|M(h)| = q(h)$, and under-subscribed in $M$ if $|M(h)| < q(h)$. By a slight abuse of terminology, we say that $h$ is over-subscribed in $M$ if $|M(h)| > q(h)$. If left unspecified, the quota under consideration for these terms is the original quota $q(h)$. If $h$ is under-subscribed, then we implicitly match the remaining $q(h)-|M(h)|$ many positions of $h$ to as many copies of $\bot$. A vertex prefers any of its neighbors over $\bot$. 

\begin{defi}[\bf Strong stability]\label{def:strong} For a matching $M$, an edge  $(r, h)\in E\setminus M$ is a \SBP\ with respect to $M$, if either (i) or (ii) holds:
    \begin{itemize} 
    \item[(i)]  $h \succ_r M(r)$ and  there exists  $r' \in M(h)$ such that $ r \succeq_h r'$ 
    \item[(ii)] 
    $h \succeq_r M(r)$ and there exists   $r' \in M(h) $ such that $r \succ_h r'$. 
    \end{itemize}
     A matching $M$ is a \SSM\ if there does not exist any \SBP\ with respect to $M$. \footnote{A pair $(r,h)$ is a {\em super blocking pair} if both prefer each other strictly or equally to their matched partners. Also, $(r,h)$ forms a {\em weak blocking pair} if they prefer each other strictly more than their matched partners.}
\end{defi}

Throughout the paper, we refer to a strong blocking pair as a blocking pair. 
We give a simple example to illustrate that a strongly stable matching is not guaranteed to exist. Consider an instance with one resident $r$ and two hospitals $h_1,h_2$, where $r$ has $h_1$ and $h_2$ tied at rank 1, whereas $h_1,h_2$ have a unit quota each and both of them rank $r$ as a rank-1 vertex. No matching in this instance is strongly stable, since the matching $M_1 = \{(r,h_1)\}$ is blocked by $(r, h_2)$ and $M_2=\{(r,h_2)\}$ is blocked by $(r, h_1)$. 
\begin{center}
\begin{tabular}{ll l l  |  ll l l}
 $r$ & : & $(h_1,h_2)$ & \ \ \ &\ \ \ $[1]\ h_1$ & : & $r$\\ 
 &  &  & \ \ \  &\ \ \ $[1]\ h_2$ & : & $r$
\end{tabular}
\end{center}
Moreover, the same example illustrates that increasing hospital quotas alone may not help in obtaining an instance that admits a \SSM. This happens because there are ties in residents' preference lists, whereas quota augmentation is possible for hospitals only.

What if resident preference lists are strict and ties appear only in hospitals' preferences? We call such instances \HRHT\  (Hospitals/Residents problem with ties on hospitals' side only). There exist simple instances of  \HRHT\  that do not admit a \SSM; however, for any such instance, we can construct an {\em augmented instance} $G'$  by setting the quota of each hospital $h$ equal to its degree in $G$. 
It is easy to observe that the matching $M'$ that assigns each resident to its rank-1 hospital, is a \SSM\ in $G'$. Thus, unlike the general \HRT\ case, an \HRHT\ instance can always be augmented so that the instance admits a \SSM. Our objective in this paper is to optimally increase hospitals' quotas to ensure that a \SSM\ exists in the modified instance.  We are ready to formally define our problems. 

\subsection{Our problems and contributions}\label{sec:problems}
Throughout the paper, unless stated otherwise, we assume that the given \HRHT\ instance $G=(\RR\cup\HH,E)$ does not admit a \SSM. Deciding whether an \HRT\ instance admits a strongly stable matching can be done in polynomial time using the algorithm by Irving \etal~\cite{irving2003strong} (see Appendix~\ref{sec:irving}). Throughout this paper, the {\em augmented} instance $G'$ is the same as $G$ except that $q'(h) \ge q(h)$ for each $h$. Our first objective is to minimize the total increase in quotas across all hospitals, which we denote as the \hyperref[prob:minsumss]{\MINSUMSS} problem.

\vspace{0.1in}

\phantomsection
\label{prob:minsumss}
\noindent\textbf{\underline\MINSUMSS:}  Given an  \HRHT\ instance $G=(\RR\cup\HH,E)$, construct an augmented instance $G'$ 
such that $G'$ admits a \SSM\ and the sum of the increase in quotas over all hospitals (that is, $\sum_{h\in\HH} (q'(h)-q(h))$) is minimized.

\begin{thm}\label{thm:MinSum}
     The \hyperref[prob:minsumss]{\MINSUMSS} problem is solvable in polynomial time.
\end{thm}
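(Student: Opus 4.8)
\emph{Proof plan.}

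\textbf{Step 1: reduction to a minimum‑cost ``realizable'' matching.}
Call a matching $M$ of $G$ (residents unit‑capacity, quotas ignored) \emph{realizable} if the instance $G'$ obtained by setting $q'(h):=\max\{q(h),|M(h)|\}$ admits $M$ as a \SSM. Since residents have strict lists, for $(r,h)\notin M$ we have $h\succeq_r M(r)\iff h\succ_r M(r)$, so by Definition~\ref{def:strong} the pair $(r,h)$ is a \SBP\ w.r.t.\ $M$ iff $h\succ_r M(r)$ and $h$ is either under‑subscribed or has some $r'\in M(h)$ with $r\succeq_h r'$. Hence $M$ is realizable iff for every hospital $h$: whenever some resident $r$ has $h\succ_r M(r)$, the hospital $h$ is fully subscribed w.r.t.\ $q(h)$ and every $r'\in M(h)$ strictly $h$‑prefers $r$. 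Define $c(M):=\sum_{h}(|M(h)|-q(h))^{+}$. A short argument shows the optimum of \hyperref[prob:minsumss]{\MINSUMSS} equals $\min\{c(M): M \text{ realizable}\}$: for any valid augmentation $(q'',N)$ we have $\sum_h(q''(h)-q(h))\ge\sum_h(|N(h)|-q(h))^{+}\ge c(N)$ and $N$ is realizable; conversely a realizable $M$ yields, via $q'(h)=\max\{q(h),|M(h)|\}$, an augmentation of cost $c(M)$. (Feasibility is clear: the degree‑quota instance is realizable, as already noted for \HRHT.)

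\textbf{Step 2: the algorithm.}
I would run a resident‑proposing deferred‑acceptance procedure in the spirit of the strong‑stability algorithm of Irving \etal~\cite{irving2003strong}, but \emph{growing} quotas instead of failing. Each resident always points to its most preferred hospital that has not yet rejected it. A hospital $h$ with current applicant set $A$ groups $A$ into its preference tiers $T_1\succ_h T_2\succ_h\cdots$ and keeps the shortest tier‑prefix $T_1\cup\cdots\cup T_i$ whose size is at least $q(h)$ (keeping all of $A$ if no such prefix exists), rejecting the rest; it never splits a tie. If this kept prefix has size $>q(h)$ (a tie straddles the capacity boundary), we raise $q'(h)$ to that size. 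Rejected residents move to their next choice, possibly displacing residents in tiers $h$ retains, and this cascades in the usual way. Each resident's pointer only advances down its (strict) list, so there are $O(m)$ proposals and each is processed in time polynomial in the hospital's degree; thus the procedure is polynomial. Let $(M,q')$ be its output.

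\textbf{Step 3: correctness.}
The key invariant is that the $h$‑worst retained tier of every hospital improves monotonically throughout the run: a new applicant is either rejected outright (strictly below the current boundary) or pushes the boundary strictly up. Consequently, once $h$ rejects $r$, every resident $h$ retains afterwards strictly $h$‑prefers $r$; since $r$ can never subsequently reach a hospital it prefers to $h$, any hospital $h$ with $h\succ_r M(r)$ is precisely one that rejected $r$, so all of $M(h)$ strictly $h$‑prefers $r$. Moreover a hospital that never becomes saturated retains every resident that ever pointed to it, hence has no resident $r$ with $h\succ_r M(r)$ at all, and its final load is $\le q(h)$. Putting this together with the characterization of Step 1 shows $M$ is realizable, i.e.\ $G'$ admits the \SSM\ $M$, so $G'$ is a feasible output.

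\textbf{Step 4: optimality — the main obstacle.}
It remains to show $c(N)\ge c(M)$ for every realizable $N$; by Step 1 this proves the algorithm is optimal. I would prove the stronger pointwise statement $|N(h)|\ge|M(h)|$ for every hospital $h$ and every realizable $N$ — which suffices since $x\mapsto(x-q(h))^{+}$ is non‑decreasing. This is the analogue, in our non‑monotone setting, of resident‑optimality / the rural hospitals theorem, and I expect it to be the delicate part precisely because (as the excerpt stresses) the strongly‑stable problem is not monotone under capacity increase. The route is: (a) a ``first‑deviation'' argument — take the earliest moment in the run at which some hospital $h$ rejects a resident $r$ with $N(r)\succeq_r h$; using that $h$ then held $\ge q(h)$ residents strictly $h$‑above $r$'s tier, that each such retained resident had been rejected only by hospitals it prefers to $h$ (and, by minimality, weakly prefers $h$ to its $N$‑partner), and that ties are indivisible, one derives $N(r)=h$ and, ruling out an $N$‑envious retained resident (which would force a \SBP\ for $N$), concludes no such deviation exists, i.e.\ $M(r)\succeq_r N(r)$ for all $r$; (b) given that, if $|N(h)|<|M(h)|$, picking the $h$‑worst resident of $M(h)\setminus N(h)$ forces $N(h)$ to be fully subscribed by residents strictly $h$‑above a whole retained tier of $M(h)$, which (since $M(h)$ is the \emph{shortest} tier‑prefix reaching $q(h)$) requires ``extra'' residents at $h$ that $M$ assigned to strictly better hospitals; tracing where these extras came from yields a finite descent through distinct hospitals that cannot terminate, a contradiction. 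Combining (a)--(b) gives $|N(h)|\ge|M(h)|$ for all $h$, hence $c(N)\ge c(M)$; together with the polynomial running time of Step 2 this establishes Theorem~\ref{thm:MinSum}.
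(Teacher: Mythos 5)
Your Steps 1--3 are fine (the reduction to ``realizable'' matchings is correct, and your tie-respecting resident-proposing procedure does output a feasible augmented instance; it is close in spirit to the paper's Algorithm~2). The genuine gap is in Step~4, and it is not just a missing detail: both statements you propose to prove there are false. Consider $\RR=\{r_1,r_2,r_3\}$, $\HH=\{h_1,h_2\}$ with $q(h_1)=q(h_2)=1$, every resident having the strict list $h_1,h_2$, and $\pref{h_1}=r_1,(r_2,r_3)$, $\pref{h_2}=(r_2,r_3),r_1$. This instance admits no \SSM, and your algorithm outputs $M=\{(r_1,h_1),(r_2,h_2),(r_3,h_2)\}$ with $q'(h_2)=2$, which is indeed optimal ($c(M)=1$). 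But the matching $N=\{(r_1,h_1),(r_2,h_1),(r_3,h_1)\}$ with $q'(h_1)=3$ is realizable (every resident has its first choice, so there is no \SBP), and it has $|N(h_2)|=0<2=|M(h_2)|$, refuting the pointwise claim $|N(h)|\ge |M(h)|$; moreover $N(r_2)=h_1\succ_{r_2}h_2=M(r_2)$, refuting the intermediate claim in (a) that $M(r)\succeq_r N(r)$ for every realizable $N$. The first-rejection argument you sketch breaks exactly at the point you flag as delicate: when $h$ rejects $r$ while holding $q(h)$ strictly better residents, you cannot conclude that $N$ must evict one of those residents from $h$, because $N$ may live in an instance whose quota at $h$ is larger and simply keep them all alongside $r$. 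Since realizable matchings range over \SSM s of \emph{all} augmentations (including wasteful ones that pile load onto popular hospitals), neither pointwise load domination nor resident-optimality of $M$ over all realizable matchings can hold, and your route to $c(N)\ge c(M)$ collapses.

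For contrast, the paper avoids this entirely by running a \emph{hospital}-proposing procedure and proving lower bounds valid for every augmented instance $\tG$ admitting a \SSM\ $\tM$: every resident matched in the output $M'$ is matched in $\tM$ at least as well (Claim~\ref{cl:mustBeMatched}), and every hospital under-subscribed in $M'$ satisfies $|\tM(h)|\le |M'(h)|$ (Claim~\ref{cl:UnderSubscribed}); these two facts give an aggregate lower bound $k$ on the total augmentation which the algorithm attains exactly (Lemma~\ref{lem:optimalK}). If you wish to keep your resident-proposing algorithm, the optimality argument would have to be an aggregate counting argument of this kind (e.g.\ showing your $M$ matches exactly the forced set of residents and fills the under-subscribed hospitals to the forced extent), not a pointwise or resident-optimality comparison against arbitrary realizable matchings.
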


We establish an analog of the well-known rural hospitals theorem~\cite{galeSotomayor1985,roth1986allocation} for the \hyperref[prob:minsumss]{\MINSUMSS} problem. We show that all strongly stable matchings across optimal augmented instances match the same set of residents. Furthermore, we prove that under-subscribed hospitals 
are filled to the same extent in every \SSM\ across all optimal augmented instances. These results are stated in Section~\ref{sec:MinSumSSM} as Theorem~\ref{theo:MinSumProp} and Corollary~\ref{cor:fullHospitals}. 

Given the polynomial-time solution to the \hyperref[prob:minsumss]{\MINSUMSS} problem, we consider the optimal total quota augmentation (if possible) required for matching a specified subset of edges $Q\subseteq E$ in $G$.  We denote this problem as the \hyperref[prob:minsumssfe]{\MINSUMSSFE} (forced edges). 
The \hyperref[prob:minsumssfe]{\MINSUMSSFE} problem is motivated by course allocation scenarios, such as assigning final-year university students to specific courses \textit{required} for graduation. These mandatory student-course pairs can be naturally modeled as forced edges. We formally define the problem below. 

\vspace{0.1in}

\phantomsection
\label{prob:minsumssfe}
\noindent\textbf{\underline\MINSUMSSFE:}
 Given an \HRHT\ instance $G = (\RR \cup \HH, E)$, which possibly admits a \SSM, and a subset of forced edges $Q \subseteq E$, construct an augmented instance $G'$, if possible, such that $G'$ admits a \SSM\ that includes all edges in $Q$, and the total increase in hospital quotas, defined as $\sum_{h \in \HH} (q'(h) - q(h))$, is minimized.

\begin{thm}
\label{thm:ForcedEdges}
The \hyperref[prob:minsumssfe]{\MINSUMSSFE} problem is solvable in polynomial time.
\end{thm}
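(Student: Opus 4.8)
The plan is to reduce \MINSUMSSFE\ to \MINSUMSS\ by a structural modification of the instance that forces the edges in $Q$ into every strongly stable matching. First I would observe that since resident preference lists are strict, each resident $r$ appears in at most one forced edge of $Q$ (if $Q$ contains two edges incident to the same resident, the instance is trivially infeasible and we reject). So write $Q = \{(r_1,h_{j_1}),\dots,(r_k,h_{j_k})\}$ with the $r_i$ distinct. The key idea is a gadget transformation: for each forced edge $(r,h)\in Q$, truncate \prefr\ so that $h$ is the unique (rank-1, hence only) hospital $r$ finds acceptable, and simultaneously promote $r$ to the top of \prefh\ — but we must be careful, because $h$ may already have other residents tied at rank 1, and promoting $r$ alone does not guarantee $(r,h)$ is used; it only guarantees $r$ is matched to $h$ or remains unmatched. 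Truncating \prefr\ to just $\{h\}$ handles exactly this: in any \SSM\ of the transformed instance $\widetilde{G}$, if $r$ is unmatched then $(r,h)$ is a strong blocking pair (since $h\succ_r \bot$ and $r$ is rank-1 for $h$), so $r$ must be matched, and its only possible partner is $h$. Hence every \SSM\ of $\widetilde G$ contains all of $Q$.

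Next I would verify the converse correspondence at the level of augmented instances: an augmentation $q'$ of $G$ admits a \SSM\ $M\supseteq Q$ if and only if the same augmentation applied to $\widetilde G$ admits a \SSM. The forward direction requires checking that a \SSM\ $M$ of $G$ with $Q\subseteq M$ remains strongly stable in $\widetilde G$; the only new potential blocking pairs involve a promoted resident $r$ with $M(r)=h$, and since $\widetilde{\textsf{Pref}}(r)=\{h\}$, $r$ cannot be part of any blocking pair in $\widetilde G$, while blocking pairs $(r',h)$ of the form "$r'$ prefers $h$, $h$ weakly prefers $r'$ to someone in $M(h)$" can only be created if promoting $r$ displaced the comparison — but $r\in M(h)$ already, so the set $M(h)$ is unchanged and any such $(r',h)$ would have blocked $M$ in $G$ too (using that $r$ was already acceptable to $h$, so the relative ranks of the other residents are unaffected by the promotion). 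The reverse direction is immediate from the previous paragraph. Therefore the optimal \MINSUMSSFE\ value for $(G,Q)$ equals the optimal \MINSUMSS\ value for $\widetilde G$ (and feasibility is preserved), and we can invoke Theorem~\ref{thm:MinSum} on $\widetilde G$, then lift the resulting augmentation and \SSM\ back to $G$ by re-inserting the truncated parts of the resident lists and de-promoting — the matching stays strongly stable in $G$ since it contains $Q$ and we argued the correspondence is exact.

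The main obstacle I anticipate is the subtlety in the promotion step: strong stability is sensitive to ties on the hospital side, and one must rule out that forcing $r$ to the top of \prefh\ either (a) fails to pin $(r,h)$ because $h$ is under-subscribed and $r$ could still be left out — handled by the resident-side truncation — or (b) destroys strong stability elsewhere because now some resident tied with $r$ under the original \prefh\ is strictly worse off in a way that creates a blocking pair. Point (b) needs the observation that since $r$ was already on \prefh\ and is now merely moved up, for any two residents $r',r''\neq r$ their relative order in $\widetilde{\textsf{Pref}}(h)$ is identical to that in \prefh, and $M(h)$ is literally the same set; combined with the fact that the matched partners of all residents other than the promoted ones are unchanged, no new blocking pair can arise. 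Once these two points are nailed down, the reduction is clean and the polynomial bound follows directly from Theorem~\ref{thm:MinSum}. I would also remark that the non-monotonicity of strong stability under capacity increase (noted in the introduction) does not interfere here, because we are reducing to \MINSUMSS\ as a black box rather than reasoning monotonically about the augmentation.
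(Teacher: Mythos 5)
Your reduction is not sound: the crux is the claimed equivalence ``an augmentation of $G$ admits a \SSM\ containing $Q$ iff the same augmentation of $\widetilde{G}$ admits a \SSM$\,$'', and its reverse direction fails. Truncating \prefr\ and promoting $r$ in \prefh\ deletes exactly the constraints that make forced edges hard. Two concrete failures: (a) removing the edges $(r,h')$ with $h'\succ_r h$ means a \SSM\ of $\widetilde{G}$ certifies nothing about such an $h'$; in the paper's own one-resident example ($r\colon h_1\succ h_2$, both quotas $1$, $Q=\{(r,h_2)\}$), no augmentation of $G$ works, yet $\widetilde{G}$ admits the \SSM\ $\{(r,h_2)\}$ with zero augmentation, so your algorithm wrongly declares feasibility. (b) Promoting $r$ erases the constraint coming from residents $r'$ with $r'\succeq_h r$ in the \emph{original} list of $h$: take residents $r,r'$ and hospitals $h,h_2$ with unit quotas, $r\colon h$, $r'\colon h\succ h_2$, $\pref{h}=(r,r')$ (a tie), $\pref{h_2}=r'$, and $Q=\{(r,h)\}$. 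The true optimum is $1$ (one must also match $r'$ to $h$, else $(r',h)$ blocks), but in $\widetilde{G}$, where $r\succ_h r'$, the matching $\{(r,h),(r',h_2)\}$ is strongly stable with zero augmentation; lifting it back to $G$ gives a matching blocked by $(r',h)$, and the reported cost $0$ undercuts the optimum. Note that your forward direction is fine, but by itself it only shows that the value you compute is a \emph{lower bound} on the \MINSUMSSFE\ optimum; it does not make the output matching strongly stable in $G$.

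The precise gap in your write-up is the sentence ``the reverse direction is immediate from the previous paragraph'': knowing that every \SSM\ of $\widetilde{G}$ contains $Q$ does not show that such a matching is strongly stable under the original preferences once the truncated edges are re-inserted and the promotion is undone. This is exactly the difficulty the paper's proof addresses head-on: it prunes edges along which \emph{distracting residents} (those with $r'\succeq_h r$ for some forced $(r,h)$) and \emph{distracting hospitals} (those with $h'\succ_r h$) cannot be matched, runs Algorithm~\ref{algo:MinSum} on the pruned instance, declares infeasibility when a distracting hospital stays under-subscribed with respect to its original quota (Lemma~\ref{lem:noSolMinSumMatch}), and pays extra, provably necessary, augmentations to match unmatched distracting residents to $last(r')$; the optimality of these extra units hinges on the rural-hospitals analog (Corollary~\ref{cor:fullHospitals}). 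Your gadget hides rather than enforces these constraints, so the reduction to \MINSUMSS\ as a black box cannot work without something playing the role of those checks.
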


 Next, we consider a generalization of the \hyperref[prob:minsumss]{\MINSUMSS} problem, where increasing the quota of a hospital incurs a cost per unit increase in the quota. This is denoted as the \hyperref[prob:minsumcost]{\MINSUMCOST} problem.
 
\vspace{0.1in}

\phantomsection
\label{prob:minsumcost}
\noindent\textbf{\underline\MINSUMCOST:} 
Given an \HRHT\ instance $G = (\RR \cup \HH, E)$, where each hospital $h \in \HH$ has an associated cost $c(h)$ for each unit increase in its quota, construct an augmented instance $G'$ such that $G'$ admits a \SSM, and the total cost of quota augmentation, defined as $\sum_{h \in \HH} c(h) \cdot (q'(h) - q(h))$, is minimized.

In contrast to the polynomial-time solvability of the \hyperref[prob:minsumss]{\MINSUMSS} problem, we establish the following hardness result for the \hyperref[prob:minsumcost]{\MINSUMCOST} problem.

\begin{thm}
\label{thm:minsumcost}
The \hyperref[prob:minsumcost]{\MINSUMCOST} problem is {\sf NP}-hard, even when both resident preferences and hospital preferences are derived from respective master lists. Moreover, the problem is inapproximable within any multiplicative factor.
\end{thm}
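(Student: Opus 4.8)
The plan is to reduce from \MSAT\ (Monotone 1-in-3 SAT) to \MINSUMCOST, producing instances in which the cost function takes only the values $0$ and $1$ and the preference lists on both sides are restrictions of master lists, and arranging a single sharp gap: the optimal augmentation cost is $0$ when the input formula is satisfiable and at least $1$ otherwise. This one gap gives the whole theorem. On one hand, deciding ``is the optimal cost $0$?'' is then \textsf{NP}-hard. On the other hand, any algorithm approximating \MINSUMCOST\ within a multiplicative factor $\alpha$ would, on a satisfiable instance, be forced to output a solution of cost at most $\alpha\cdot 0=0$, hence decide satisfiability; so no multiplicative approximation exists unless $\textsf{P}=\textsf{NP}$. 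Note that feasibility is never in question: any \HRHT\ instance admits a \SSM\ once every hospital's quota is raised to its degree, so we need no auxiliary ``always-works'' hospital.

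Given a monotone $3$-CNF formula $\varphi$ over variables $x_1,\dots,x_n$ and clauses $c_1,\dots,c_m$, I would build an \HRHT\ instance $G_\varphi$ around a fixed strict master list over hospitals and a fixed master preorder over residents (so that ties on the hospital side are exactly those inherited from the preorder); every gadget must be laid out consistently inside these two orderings, which is the first technical constraint. A \emph{variable gadget} for $x_i$ contains a cost-$0$ hospital $f_i$ plus a few auxiliary residents and hospitals, built so that in any \SSM\ of any augmented instance the quota of $f_i$ can be raised by at most one, and the (cost-free) choice of whether to raise it encodes $x_i=\textsf{true}$ or $x_i=\textsf{false}$. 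A \emph{clause gadget} for $c_j=(x_{j_1}\vee x_{j_2}\vee x_{j_3})$, also made only of cost-$0$ hospitals and residents and sharing the hospitals $f_{j_1},f_{j_2},f_{j_3}$ with the three variable gadgets, is tuned so that it can be completed without a \SBP\ precisely when exactly one of $x_{j_1},x_{j_2},x_{j_3}$ is set \textsf{true}; if zero, two, or three are set true, the only way to remove the resulting \SBP\ is an extra augmentation, and every hospital able to absorb such an extra augmentation is assigned cost $1$. A simple \HRHT\ seed sub-instance that admits no \SSM\ is attached so that $G_\varphi$ does not already admit a \SSM.

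Completeness (satisfiable $\Rightarrow$ cost $0$) is the easy direction: from a $1$-in-$3$ model of $\varphi$ one sets $q'(f_i)=q(f_i)+[\,x_i\text{ is true}\,]$ and the analogous cost-$0$ increases inside each clause gadget, writes down the intended matching, and checks by a direct case analysis that it admits no \SBP. Soundness is where the real work lies and is the main obstacle: one must prove that \emph{every} augmented instance that admits a \SSM\ using only cost-$0$ increases induces a well-defined, globally consistent truth assignment that is a $1$-in-$3$ model of $\varphi$. Two things make this delicate. First, \emph{spurious fixes} must be excluded --- cost-$0$ augmentation patterns (raising some $q(f_i)$ by more than one, or raising the copies of $f_i$ ``inconsistently'' across the clauses containing $x_i$, or exotic patterns inside a clause gadget) that kill all blocking pairs without matching any valid assignment; this is ruled out by making the gadgets rigid, so that every \SSM\ of every cost-$0$ augmentation collapses to one of a constant number of canonical shapes, with variable--clause consistency forced through the shared hospital $f_i$ and the residents that link the gadgets. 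Second, strong stability is \emph{not monotone} under capacity increase (Appendix~\ref{append:nonMonotonic}), so raising a quota can destroy and create \SBP s at once; the analysis therefore cannot be confined to the ``intended'' augmentations but must, gadget by gadget, enumerate the finitely many reachable cost-$0$ augmentation patterns and exhibit a surviving \SBP\ for every pattern that is not the valid one. Combining the two directions yields ``optimal cost $=0$ iff $\varphi$ is satisfiable'', which proves \textsf{NP}-hardness (with only $0/1$ costs and master lists on both sides) and, as argued in the first paragraph, inapproximability within any multiplicative factor.
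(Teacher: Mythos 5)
Your overall strategy is the same as the paper's: reduce from \MSAT, use only costs $0$ and $1$, create a gap ``optimal cost $=0$ iff the formula has a 1-in-3 satisfying assignment,'' and observe that this gap kills any multiplicative approximation. That framing is correct. However, there is a genuine gap: the heart of the theorem is the concrete gadget construction and the soundness analysis, and your proposal supplies neither. The variable and clause gadgets are only described by the properties they are supposed to have (``built so that,'' ``tuned so that,'' ``made rigid''), and the crucial direction --- that \emph{every} cost-$0$ augmentation admitting a \SSM\ induces a consistent 1-in-3 assignment --- is explicitly deferred rather than proved. In particular, the stipulation that ``in any \SSM\ of any augmented instance the quota of $f_i$ can be raised by at most one'' is not something you can simply impose: cost-$0$ augmentations are unbounded in magnitude, so rigidity has to be engineered by exhibiting a surviving \SBP\ for every unwanted augmentation pattern (exactly the non-monotonicity-driven case analysis you acknowledge but do not carry out), and the master-list compatibility of your unspecified gadgets is likewise asserted, not verified. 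Without the lists, quotas, and these verifications, the reduction is a plan rather than a proof.

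For comparison, the paper's construction resolves these issues with much lighter machinery and a different encoding than you sketch. Per variable $X_p$ there is just one resident $a_p$ and one cost-$0$ hospital $v_p$ (ranking $a_p$ first, then all of $X_p$'s clause-residents in a single tie); per clause there is one cost-$1$ hospital $w^s$, three $b$-residents and two $d$-residents, and no separate ``seed'' is needed because the clause gadgets themselves force every unaugmented instance to have a \SBP. Truth is encoded the other way around from your scheme: $X_p$ \emph{true} means $v_p$ is \emph{not} augmented (so its $b$-resident must go to $w^s$), while \emph{false} means $v_p$ is augmented up to its degree at zero cost; cross-clause consistency then falls out of the tie in $v_p$'s list (if one $b$-resident of $X_p$ sits at a $w$-hospital while another sits at $v_p$, the first one blocks), and the $d$-residents force $|M'(w^s)|=1$ under zero-cost augmentation, giving ``exactly one true per clause.'' This choice avoids having to bound the size of any cost-$0$ augmentation at all, which is precisely the point where your sketch would have to do substantial extra work.
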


We now turn our attention to the alternative objective\textemdash minimizing the maximum increase in quota for any hospital\textemdash and define the \hyperref[prob:minmaxss]{\MINMAXSS} problem.

\vspace{0.1in}

\phantomsection
\label{prob:minmaxss}
\noindent\textbf{\underline\MINMAXSS:} 
Given an \HRHT\ instance $G=(\RR\cup\HH, E)$, construct an augmented instance $G'$ 
such that $G'$ admits a \SSM, and the maximum increase in the quota for any hospital is minimized, that is, $\max_{h\in\HH}\{q'(h)-q(h)\}$ is minimized.

Our result for the \hyperref[prob:minmaxss]{\MINMAXSS} problem is stated in Theorem~\ref{thm:minmax}.

\begin{thm}
\label{thm:minmax}
The \hyperref[prob:minmaxss]{\MINMAXSS} problem is {\sf NP}-hard even when resident preferences are single-peaked and hospital preferences are derived from a master list. Moreover, the same minimization objective with the goal of constructing an instance that admits a resident-perfect \SSM\ (one that matches all residents) is also {\sf NP}-hard.
\end{thm}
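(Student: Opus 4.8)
The plan is to prove Theorem~\ref{thm:minmax} by a reduction from a known $\sf NP$-hard restricted variant of \MAXSMTI\ (or a related packing/satisfiability problem) to the decision version of \hyperref[prob:minmaxss]{\MINMAXSS}: given an \HRHT\ instance $G$ and an integer $k$, does there exist an augmented instance $G'$ with $\max_{h\in\HH}\{q'(h)-q(h)\}\le k$ that admits a \SSM? The natural target is $k=1$ or $k=0$-versus-$k=1$, since a gap at small $k$ is what forces hardness; I expect to reduce from a variant of \SAT\ or \MAXSMTI\ whose instances are already single-peaked/master-list-structured, so that the structural restrictions claimed in the theorem are preserved by the reduction. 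For the single-peaked part on residents and master-list part on hospitals, the cleanest route is likely to start from an \MAXSMTI\ hardness result that already has the master-list property on one side (such results appear in the matching-with-ties literature) and engineer residents' lists to be single-peaked by carefully ordering the hospital gadget vertices along a line.

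First I would fix the gadget design. For each variable I would introduce a small constant-size collection of residents and hospitals so that, in any \SSM\ of an augmented instance respecting the budget $k$, the hospital that "absorbs" the extra unit of capacity encodes a truth assignment, and the ties on the hospital side (arising from the master list) create exactly the blocking-pair constraints that mimic clause satisfaction. The key mechanism to exploit is the non-monotonicity of strong stability noted in the introduction: a hospital receiving an extra seat can eliminate one blocking pair but simultaneously create a new one unless the rest of the matching is configured consistently — this is precisely what lets a local capacity choice propagate a global constraint. Clause gadgets would then connect variable gadgets through shared residents whose preference lists (single-peaked over the linear arrangement of hospitals) are satisfied in an \SSM\ iff the corresponding clause is satisfied.

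Second I would prove the two directions of correctness. For the forward direction, from a satisfying assignment (or a large \MAXSMTI\ matching) I would explicitly construct $G'$ by raising exactly the chosen hospitals' quotas by one and exhibit a \SSM, checking case-by-case that no edge is a \SBP\ per Definition~\ref{def:strong}. For the reverse direction, given $G'$ with budget $k$ admitting a \SSM\ $M$, I would argue that the budget constraint forces at most one extra seat per hospital, read off the assignment from which hospitals are fully subscribed beyond their original quota, and show that any unsatisfied clause yields a \SBP\ in $M$, contradicting strong stability. The resident-perfect variant would be handled by the same gadget with a final "sink" layer of dummy hospitals that can absorb all residents only when the assignment is consistent, so that a resident-perfect \SSM\ exists iff the \SAT\ instance is satisfiable; alternatively one reduces from the perfect-matching version of \MAXSMTI.

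The main obstacle I anticipate is simultaneously (a) making the reduction work for the \emph{MinMax} objective rather than MinSum — since MinMax only constrains the per-hospital maximum, the gadget must prevent "cheating" by spreading augmentations thinly, which means every hospital's budget must be individually tight — and (b) doing this while keeping residents' lists single-peaked and hospitals' lists master-list-derived, two rigidity constraints that severely limit the freedom in list construction. Reconciling the tie structure needed to create the right blocking pairs with the master-list requirement on hospitals is the delicate part; I would likely need the master list to be a refinement that ties together exactly the residents appearing in a common clause gadget, and then verify that single-peakedness on the resident side is compatible with the induced hospital ordering. If a direct single-peaked construction proves too constrained, a fallback is to first prove plain $\sf NP$-hardness of \hyperref[prob:minmaxss]{\MINMAXSS} by a looser reduction and then add the structural restrictions in a second, more careful pass.
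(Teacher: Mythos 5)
There is a genuine gap: what you have written is a plan rather than a proof. The entire content of the theorem lies in the concrete gadget and the two correctness directions, and your proposal leaves both unspecified — there are no preference lists, no argument for why the original instance admits no \SSM, and no proof of either implication. Most importantly, the central difficulty you yourself flag in your last paragraph — that under the \MINMAX\ objective with budget $1$ \emph{every} hospital may independently receive an extra seat, so hardness cannot come from scarcity of augmentations but must come entirely from the combinatorics of strong stability — is acknowledged but never resolved. The paper resolves it with a specific mechanism: each clause gadget has a hospital $w^s$ wanted by three $b$-residents (tied) and by dummy $d$-residents; if all three variable-hospitals $v_i^s,v_j^s,v_k^s$ of a clause are augmented, then all three $b$-residents leave $w^s$, an unmatched $d$-resident blocks, so at most two augmentations per clause are forced; if none is augmented, a $b$-resident is left out and blocks, so at least one is forced; and a tie between $b$-residents of different gadgets in each $v$-hospital's list forces consistency of the assignment across occurrences of a variable. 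Without such a device your reduction does not get off the ground.

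Your choice of source problem is also off in a way that matters. You propose \SAT\ (1-in-3) or a \MAXSMTI-type problem, but the ``at least one and at most two augmented per clause'' structure that the $\ell=1$ \MINMAX\ budget naturally enforces is exactly a not-all-equal constraint; the paper accordingly reduces from \MNESAT\ (with each variable in exactly four clauses). An exactly-one constraint, as in 1-in-3 SAT, is what the paper uses for the \emph{cost} version (\MINSUMCOST), where zero-cost hospitals can absorb unbounded augmentation while unit-cost hospitals cannot — a control knob that \MINMAX\ does not give you, since nothing in the \MINMAX\ objective stops two or three variable-hospitals of a clause from all taking their one extra seat. Finally, the single-peaked and master-list properties are not obtained by starting from a structured source instance, as you suggest; in the paper they are verified directly on the constructed instance by exhibiting explicit linear orders of the hospitals and residents, and the resident-perfect variant is handled (as you correctly guessed) by adding last-resort hospitals for the $d$-residents. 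So your high-level outline points in roughly the right direction, but the proof as proposed would not go through without supplying the gadget, the NAE-type source problem, and the blocking-pair arguments that make the budget-$1$ augmentation pattern encode a consistent satisfying assignment.
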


Finally, we consider a variant of the \MINMAX\ problem where hospitals' preference lists have bounded tie lengths. We refer to this as the \hyperref[prob:minmaxssbt]{\MINMAXSSBT} problem.

\phantomsection
\label{prob:minmaxssbt}
\noindent\textbf{\underline\MINMAXSSBT:}
Given an \HRHT\ instance $G=(\RR\cup\HH, E)$, where the length of ties in the preference lists of hospitals is bounded by $\ell + 1$, determine the existence of an augmented instance $G'$ such that $G'$ admits a \SSM, and $\max_{h\in\HH}\{q'(h)-q(h)\}\le \ell$.

We use the term \textit{$\ell$-augmented instance} to denote an augmented instance that admits a \SSM\ and is obtained from $G$ by at most $\ell$ augmentations per hospital, that is, for every hospital $h$ we have $q'(h)-q(h) \le \ell$. An $\ell$-augmented instance $G'$ is a \textit{resident-optimal $\ell$-augmented instance} of $G$ if the resident-optimal\footnote{A \SSM\ $M$ is \textit{resident-optimal} if for each resident $r\in \RR$, $M(r)$ is the best possible hospital to which $r$ can get matched in any \SSM.} \SSM\ in $G'$ is the \textit{best} for residents across \textit{all} $\ell$-augmented instances of $G$.

\begin{thm}\label{thm:MINMAXBT}
    For an instance of the \hyperref[prob:minmaxssbt]{\MINMAXSSBT} problem, where the tie length is at most $\ell + 1$, an $\ell$-augmented instance exists. Moreover, a resident-optimal $\ell$-augmented instance can be computed in polynomial time. Therefore, a \SSM\ that matches the maximum number of residents across all $\ell$-augmented instances can be computed in polynomial time.
\end{thm}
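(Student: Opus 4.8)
The plan is to adapt the classical algorithm of Irving et al.~\cite{irving2003strong} for strongly stable matchings in \HRT, which proceeds via iterated proposals together with a critical-set / Hall-type deletion step, and to run it on a carefully chosen ``maximally augmented'' instance rather than on $G$ itself. Concretely, I would first form the instance $G^{+\ell}$ in which every hospital $h$ has quota $q(h)+\ell$. The key structural observation I would aim to establish is the following: because every tie on a hospital's list has length at most $\ell+1$, whenever the strong-stability algorithm needs to reject a resident from a tie at the tail of some hospital $h$'s current provisional assignment, the number of residents tied at that critical rank and competing for $h$ is at most $\ell+1$ more than what $h$ can already absorb, so augmenting $h$ by at most $\ell$ additional seats suffices to accommodate \emph{all} of them without any strong blocking pair forming at that rank. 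This is exactly the place where the bound $\ell+1$ on tie length is used, and I expect it to be the crux of the argument.

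Next I would argue existence of an $\ell$-augmented instance. Run the Irving-style procedure on $G^{+\ell}$: residents propose down their (strict) lists; a hospital provisionally holds its best residents; when a hospital is over-full, it shortlists by removing residents strictly worse than its last held resident, and critical sets are used to delete edges that can never be in a strongly stable matching. The claim is that with quota $q(h)+\ell$ available, the algorithm never has to delete an edge merely because a tie overflows by between $1$ and $\ell$ — so the run terminates with a provisional assignment whose ``collapse'' yields a perfect provisional matching in the reduced graph, which (by the correctness of the Irving et al.\ framework, restated in Appendix~\ref{sec:irving}) is a \SSM\ in the augmented instance where each hospital's quota has been raised by exactly the number of extra seats it actually used — at most $\ell$. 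Thus an $\ell$-augmented instance exists, giving the first part of Theorem~\ref{thm:MINMAXBT}.

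For the resident-optimality and the ``best across all $\ell$-augmented instances'' claim, I would use a cross-instance exchange/lattice argument. Let $M$ be the \SSM\ produced by the algorithm on $G^{+\ell}$, and let $G'$ be any $\ell$-augmented instance with \SSM\ $N$. I would show by induction on the sequence of rejections in our run that no resident $r$ is ever rejected by a hospital $h$ in our run while some $\ell$-augmented instance matches $r$ to $h$ or better: if $r$ were rejected from $h$, then at that point $h$ provisionally held $q(h)+\ell$ residents each at least as good (to $h$) as $r$, and by the inductive hypothesis each of those residents does at least as well in $N$ as our run promises them, forcing $|N(h)| \le q(h)+\ell$ residents all weakly \emph{preferred by $h$} to $r$, so $r$ cannot be matched in $N$ to $h$ or better without creating a strong blocking pair (here one checks both clauses (i) and (ii) of Definition~\ref{def:strong}). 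Hence every resident's partner in $M$ is weakly preferred by that resident to its partner in every $\ell$-augmented instance, i.e.\ $M$ is resident-optimal among all $\ell$-augmented instances. Since a resident matched in any \SSM\ in any $\ell$-augmented instance must then be matched (to a weakly better hospital, hence in particular matched) in $M$, the matching $M$ attains the maximum number of matched residents over all $\ell$-augmented instances, which is the last sentence of the theorem. The main obstacle, as noted, is making the tie-length-to-augmentation accounting precise inside the critical-set deletion step — ensuring that the $\ell$ extra seats are always enough to prevent a forced deletion and that the exchange argument's inductive invariant survives that step.
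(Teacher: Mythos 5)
Your plan is essentially the paper's: inflate every quota to $q(h)+\ell$, run the resident-proposing strong-stability procedure of Irving~\etal (which, since resident lists are strict in \HRHT, needs none of the critical-set machinery you mention), set each final quota to the number of seats actually used (i.e.\ $q'(h)=\max\{q(h),|M'(h)|\}$), and prove resident-optimality over all $\ell$-augmented instances by a ``first bad rejection'' induction that yields either a strong blocking pair in the competing instance or an earlier bad rejection. Your optimality argument matches Lemma~\ref{lem:BestForRes} almost verbatim, and the maximum-cardinality consequence follows from it exactly as in Claim~\ref{cl:minmaxBTcorr}.

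The one place your outline is off is the stated ``key structural observation''. Running with quotas $q(h)+\ell$ does \emph{not} let you avoid deletions: a hospital still rejects an entire tail rank whenever it exceeds $q(h)+\ell$, and the resulting tentative matching $M'$ is in general \emph{not} strongly stable in the inflated instance $G_t$ (the paper notes this explicitly), so you cannot conclude stability by appealing to the correctness of the Irving~\etal framework on $G_t$, nor by claiming the extra seats ``accommodate the whole tie''. What the tie bound of $\ell+1$ actually buys (Claim~\ref{cl:atLeastQ1}) is that any single rejection discards at most $\ell+1$ residents from a hospital holding more than $q(h)+\ell$, so after every rejection the hospital still holds at least $q(h)$ residents, all strictly preferred by it to everyone it has deleted. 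Consequently, after lowering each quota to $\max\{q(h),|M'(h)|\}$, every hospital that ever rejected anyone is exactly full with strictly better residents, which is what eliminates all candidate blocking pairs (Lemma~\ref{lem:Gssm}) while keeping every increase within $\ell$ (Lemma~\ref{lem:vaildL}). Replace ``the $\ell$ extra seats prevent forced deletions'' by ``every deletion leaves the hospital with at least $q(h)$ provisional residents'' and your argument coincides with the paper's proof.
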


\subsection{Related Work}\label{subsec:relWork}
\noindent {\bf Capacity Modification:} Chen and Cs{\'{a}}ji~\cite{ChenCsaji23CapAAMAS} studied a problem similar to ours in the setting of strict preference lists, with the goal of augmenting hospital quotas so that the resulting instance admits a resident-perfect stable matching. They showed that, under the \MINMAX\ objective, the problem admits a polynomial-time algorithm. In contrast, perhaps surprisingly, the \MINMAX\ objective for \SSM\ yields an NP-hardness result (Theorem~\ref{thm:minmax}). They also consider the \MINSUM\ objective and show that the problem of obtaining an optimal augmented instance that admits a stable and resident-perfect matching under \MINSUM\ is NP-hard. This also implies that constructing an augmented instance in the \HRHT\ setting that admits a strongly stable, resident-perfect matching is NP-hard. However, without the resident-perfectness requirement, our result in Theorem~\ref{thm:MinSum} gives a polynomial-time algorithm.

Capacity modification to achieve specific objectives has attracted significant interest in recent years. Bobbio~\etal~\cite{BobbioCLT22CapacityVar} explored the complexity of determining the optimal variation (augmentation or reduction) of hospital quotas to achieve the best outcomes for residents, subject to stability and capacity variation constraints, and showed NP-hardness results. In a follow-up work, Bobbio~\etal\cite{BobbioCLRT23Capacity} developed a mixed integer linear program to address this issue, and provided a comprehensive set of tools for obtaining near-optimal solutions. 
Gokhale~\etal~\cite{GokhaleSNV24AAMASCap} considered the problem of modifying hospitals' quotas to achieve two objectives -- (i) to obtain a stable matching so as to match a given pair, and (ii) to stabilize a given matching, either by only augmenting or only reducing hospital quotas.
Afacan~\etal\cite{AfacanDL24CapDesign} examined capacity design in the \HR\ setting, to achieve a stable matching that is not Pareto-dominated by any other stable matching.

Kavitha and Nasre~\cite{KavithaN11PopularVar}  and Kavitha~\etal~\cite{KNP12} addressed the capacity augmentation problem for {\em popular} matchings in the one-sided preference list setting (where every hospital is indifferent between its neighbors). It is known that a popular matching is not guaranteed to exist in this setting. Therefore, their objective was to optimally increase hospital quotas to create an instance that admits a popular matching. Although we consider a different setting (two-sided preference lists) and a different optimality notion\textemdash namely, strong stability\textemdash it is noteworthy that our results closely resemble those of Kavitha and Nasre~\cite{KavithaN11PopularVar} and Kavitha~\etal~\cite{KNP12}.

\noindent {\bf Strong Stability:} The notion of strong stability was first studied in the one-to-one setting (\textit{i.e. $q(h)=1$ for all $h\in \HH$}) for balanced, complete bipartite graphs by Irving~\cite{irving1994stable}, where he gave an $O(n^4)$ algorithm to compute a \SSM\ if it exists. Since then, the \SSM\ problem has received significant attention in the literature. Manlove~\cite{manlove1999stable} extended the results in~\cite{irving1994stable} to the general one-to-one setting (\textit{i.e.} incomplete bipartite graphs) and also showed that all strongly stable matchings have the same size and match the same set of vertices. Irving \etal~\cite{irving2003strong} further extended these results to the \HRT\ setting and gave an $O(m^2)$ algorithm for the strongly stable matching problem, which was later improved to $O(mn)$ by Kavitha~\etal~\cite{KavithaMMP07}. Manlove~\cite{Manlove02Lattice} studied the structure of the set of \SSM s and showed that, similar to the classical stable matchings, the set of \SSM s forms a distributive lattice. Kunysz~\etal~\cite{kunysz2016characterisation} showed that there exists a partial order with $O(m)$ elements representing all \SSM s and also provided an $O(mn)$ algorithm to construct such a representation. In the presence of edge weights, Kunysz~\cite{Kunysz18MaxWt} showed that when edge weights are small, the maximum weight \SSM\ problem can be solved in $O(mn)$ time, and in $O(mn\ \log(Wn))$ if the maximum weight of an edge is $W$. Strong stability with respect to restricted edges, namely forced, forbidden, and free edges, has been studied by Cseh and Heeger~\cite{CsehH20Restricted} and by Boehmer and Heeger~\cite{BoehmerH23ForcedForbidden}.

\vspace{0.1in}

\noindent\textbf{Organization of the paper:}

In Sections~\ref{sec:MinSumSSM},~\ref{sec:ForcedMatch}, and~\ref{sec:costVersion}, the objective of our problems is to minimize the total increase in quotas. In Section~\ref{sec:MINMAXProblem}, our objective is to minimize the maximum increase in quotas; there, we study the \MINMAXSS\ problem and its variant with bounded tie length. Finally, Section~\ref{sec:Conclusion} summarizes the contributions and discusses potential directions for future research. 
\section{MINSUM-SS problem}
\label{sec:MinSumSSM}

In this section, we present an efficient algorithm for the \hyperref[prob:minsumss]{\MINSUMSS} problem. Since the input instance does not admit a \SSM, we need to increase the quotas of certain hospitals to obtain $G'$. Our algorithm (pseudo-code given in Algorithm~\ref{algo:MinSum}) involves a sequence of proposals from hospitals and is inspired by the hospital-oriented algorithm for super-stability~\cite{swat/IrvingMS00}.

The algorithm begins with every resident unmatched, or equivalently, with every resident matched to the dummy hospital $\bot$. Call this matching $M'$ (see line~\ref{algoMinSum:InitMatch} of Algorithm~\ref{algo:MinSum}).
During the course of the algorithm, let $h$ be a hospital that is under-subscribed in $M'$, and $t$ be the most preferred rank in \prefh\ at which $h$ has not yet made a proposal. Then,  $h$ simultaneously proposes to all residents at rank $t$ in \prefh\ (see Line~\ref{algoMinSum:propose}). Since a hospital $h$ proposes to all the residents at a particular rank simultaneously, it may lead to the over-subscription of that hospital. 
A fully subscribed or over-subscribed hospital does not propose further, and the sequence of proposals terminates when either no hospital is under-subscribed or all under-subscribed hospitals have exhausted proposing to all residents on their preference lists. When a resident $r$ receives a proposal from $h$, the resident accepts or rejects the proposal based on the resident's preference between $h$ and its current matched partner $M'(r)$. 
Let $M'$ represent the set of matched edges when the proposal sequence terminates. 
Since $G$ does not admit a \SSM, at least one hospital $h$ must be over-subscribed in $M'$.
Let $G'$ denote the instance with the modified quotas where the quota of each hospital $h\in\HH$ is set to $q'(h)=\max\{q(h),|M'(h)|\}$. The algorithm returns the augmented instance $G'$ and the matching  $M'$.
Observe that each hospital proposes along each edge at most once; hence, the running time of the algorithm is $O(m)$.

\begin{algorithm}
    \caption{Algorithm for \MINSUMSS }\label{algo:MinSum}
    \DontPrintSemicolon
    \SetAlgoLined
    $M'$ = $\{ (r, \bot )  \ \  \ | \ \ \  \mbox{for every resident $r \in \RR$ }\}$ \label{algoMinSum:InitMatch}\;
    \While{$\exists$  $h$ that is under-subscribed in $M'$ w.r.t. $q(h)$ and $h$ has not exhausted $\prefh$}{
         $h$ proposes to all residents at the most preferred rank $t$  that $h$ has not yet proposed\label{algoMinSum:propose}\;
         \For {every resident $r$ that receives a proposal from $h$} {
              \If { $h \succ_r M'(r)$ } 
                  { $M' = M' \setminus \{ (r, M'(r))\} \cup \{(r, h)\} $ }
         }
    }
     $G'$ is the same as $G$, except quotas are set as follows \;
     For each $h\in \HH$, set $q'(h)=\max\{q(h),|M'(h)|\}$\label{algoMinSum:capUpdate}\;
    \Return $G'$ and $M'$ \label{algoMinSum:return}
\end{algorithm}

Next, we prove the correctness and optimality of our algorithm. 
\begin{lem}\label{lem:SSMInstance}
    The matching $M'$ returned by Algorithm~\ref{algo:MinSum} is a \SSM\ in the augmented instance $G'$.
\end{lem}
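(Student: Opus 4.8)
The plan is to check two things: that $M'$ is a feasible matching in $G'$, and that $G'$ admits no strong blocking pair with respect to $M'$. Feasibility is immediate. A resident holds at most one hospital at every point of the run of Algorithm~\ref{algo:MinSum}, since the acceptance step deletes the edge to the previously held hospital before adding the new one. For every hospital $h$ we set $q'(h)=\max\{q(h),|M'(h)|\}\ge |M'(h)|$, so no hospital is over-subscribed in $G'$; hence $M'$ is a matching in $G'$.

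The engine of the argument is a monotonicity invariant: during the run, a resident $r$'s assigned partner $M'(r)$ only ever changes to a strictly more preferred hospital, because an acceptance occurs precisely when the proposing hospital beats the current partner. Consequently, once $h$ proposes to $r$, from that moment onwards $M'(r)\succeq_r h$, and this in particular holds at termination. The second structural fact I will use is that each under-subscribed hospital $h$ always proposes to the most preferred rank of \prefh\ not yet used, and to all residents at that rank at once; therefore the set of ranks at which $h$ has proposed is, at all times, an initial segment of \prefh. In particular, if $h$ never proposes to a neighbour $r$, then every resident $h$ has proposed to is ranked by $h$ strictly above $r$, and moreover $M'(h)$ (which can only contain residents that $h$ proposed to, apart from copies of $\bot$) consists of residents $r'$ with $r'\succ_h r$, together possibly with $\bot$.

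For the no-blocking-pair part, suppose toward a contradiction that $(r,h)\in E\setminus M'$ is a strong blocking pair w.r.t. $M'$ in $G'$. Since resident preference lists are strict and $h\ne M'(r)$, both clauses of Definition~\ref{def:strong} force $h\succ_r M'(r)$, and both require some $r'\in M'(h)$ with $r\succeq_h r'$. Now apply the while-loop termination condition to $h$: either $h$ has exhausted \prefh, or $h$ is fully or over-subscribed w.r.t. $q(h)$ in $M'$. In the first case $h$ proposed to $r$ (being a neighbour of $r$), so $M'(r)\succeq_r h$ by monotonicity, contradicting $h\succ_r M'(r)$. In the second case $q'(h)=|M'(h)|$, so $M'(h)$ contains only genuine residents and no $\bot$; if $h$ did propose to $r$ we again obtain the contradiction $M'(r)\succeq_r h$, and if $h$ did not propose to $r$ then by the initial-segment fact every $r'\in M'(h)$ satisfies $r'\succ_h r$, contradicting the existence of $r'\in M'(h)$ with $r\succeq_h r'$. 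In every case the assumption fails, so $M'$ has no strong blocking pair in $G'$ and is therefore strongly stable there.

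I expect the only genuinely delicate point to be the bookkeeping around the quota update: one must observe that a hospital under-subscribed w.r.t. $q'(h)$ is exactly a hospital under-subscribed w.r.t. its original $q(h)$ in $M'$, which by the loop's termination condition must have exhausted \prefh\ and hence proposed to every one of its neighbours — this is precisely what rules out a blocking pair exploiting a $\bot\in M'(h)$. The monotonicity invariant and the initial-segment property are routine once the proposal rule of Line~\ref{algoMinSum:propose} is unwound, and do not interact with the non-monotonicity phenomenon discussed earlier (hospitals may re-enter the loop, but their proposed ranks still form a prefix).
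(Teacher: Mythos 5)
Your proof is correct and follows essentially the same route as the paper's: assume a strong blocking pair $(r,h)$, use the fact that hospitals propose in rank order to conclude $h$ must have proposed to $r$ (or, in your Case~2, that every matched resident of $h$ outranks $r$), and then use the residents-only-improve invariant to contradict $h\succ_r M'(r)$. Your explicit case split on the loop's termination condition and the handling of $\bot\in M'(h)$ is a slightly more careful write-up of the same argument, which the paper leaves implicit.
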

\begin{proof}
By the way quotas of the hospitals are set in $G'$, it is clear that $M'$ is a valid matching in $G'$. 
    Suppose for contradiction, $M'$ is not strongly stable in $G'$. 
    This implies that there exists a \SBP, say $(r,h)$, with respect to $M'$ in $G'$.  
    Therefore, $h\succ_r M'(r)$ and there exists $r'\in M'(h)$ such that $r\succeq_h r'$. Since hospitals propose in order of their preference list, $h$ must have proposed $r$ during the course of Algorithm~\ref{algo:MinSum}. Since $h\not= M'(r)$, the resident $r$ must have rejected the proposal from $h$. Thus, at the time when $r$ rejected $h$, the resident $r$ must have been matched to a better-preferred hospital, say $h'$. Since during the course of the algorithm residents improve their matched hospital, the final matched hospital $M'(r)$ of $r$ must be such that $M'(r)\succeq_r h' \succ_r h$. This contradicts the fact that $h \succ_r M'(r)$ and completes the proof. 
\qed\end{proof}

To prove the optimality of our capacity increase, we establish useful properties of {\em any} augmented instance $\tG$ (not necessarily optimal), obtained from $G$, such that $\tG$ admits a strongly stable matching. Let $\tM$ be a \SSM\ in $\tG$. 
In Claim~\ref{cl:mustBeMatched}, we show that if a resident $r$ is matched to $h$ (not equal to $\bot$) in $M'$ output by Algorithm~\ref{algo:MinSum}, then 
$r$ is matched in $\tG$ and is matched to either $h$ or a better-preferred hospital in $\tG$.

\begin{cl}\label{cl:mustBeMatched}
    Let $r$ be matched to $h \in \HH$ in $M'$ at the end of Algorithm~\ref{algo:MinSum} ($h\neq \bot$). Then the resident $r$ is matched in $\tM$, and $\tM(r)\succeq_r h$.
\end{cl}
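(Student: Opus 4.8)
The plan is to argue by contradiction, assuming there is a resident $r$ with $M'(r) = h \neq \bot$ for whom the claim fails in $\tM$, i.e. either $r$ is unmatched in $\tM$ or $\tM(r) \prec_r h$. To get a contradiction I would build a ``Rural-Hospitals''-style counting argument comparing how hospitals fill up in $M'$ versus $\tM$, localized to the set of hospitals that are at least as good as $h$ from the perspective of residents who end up doing strictly worse in $\tM$. The key structural fact driving everything is the one already used in Lemma~\ref{lem:SSMInstance}: during Algorithm~\ref{algo:MinSum} residents only improve, and a hospital proposes down its list only while under-subscribed, so if $h$ ended up matched to some resident set in $M'$ it must have proposed along every edge to a resident it prefers at least as much as its worst $M'$-partner.

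Concretely, I would first record the monotonicity/propagation consequences of the algorithm: (a) every hospital $h'$ that is \emph{not} fully subscribed in $M'$ (w.r.t.\ its original quota $q(h')$) has exhausted $\prefh'$, hence proposed to \emph{all} its neighbours; (b) if $h'$ proposed to a resident $r'$ and $r'$ rejected it (or later left it), then $M'(r') \succ_{r'} h'$; (c) the set of residents matched in $M'$ is exactly the set that received at least one proposal they accepted, and $M'(r')$ is $r'$'s best hospital among those that ever proposed to it. Then I would suppose $r$ does strictly worse (or is unmatched) in $\tM$ and chase a violation: since $h = M'(r)$ proposed to $r$, in particular $h$ proposed to all residents it ranks at least as highly as the worst resident in $M'(h)$. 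If in $\tM$ the hospital $h$ is not ``full enough'' — e.g.\ $|\tM(h)| < |M'(h)| \le q'(h)$, or $h$ has an $\tM$-partner worse than some resident it prefers to $r$ — then I would exhibit a strong blocking pair for $\tM$ in $\tG$, contradicting strong stability of $\tM$. The delicate point is that $\tM$ lives in a \emph{different} augmented instance $\tG$ with possibly different quotas, so ``full'' means full w.r.t.\ $\tq(h)$, and I need to compare $\tq(h)$ with $|M'(h)|$; the cleanest route is a global deficiency count over the ``upward closure'' $S$ of hospitals reachable from $r$ via the algorithm's proposal structure, showing $\sum_{h' \in S} |\tM(h')| < \sum_{h' \in S} |M'(h')|$ would force some hospital in $S$ to have a strong-blocking resident, while $\ge$ forces $r$ itself (who is freed up in $\tM$) to strongly block with $h$ or with whatever hospital absorbed the surplus.

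I expect the main obstacle to be handling the fact that $\tG$'s quotas are arbitrary and uncorrelated with the quotas $q'$ produced by the algorithm, so a naive per-hospital comparison fails; the right tool is an aggregate (Hall-type / deficiency) argument over a carefully chosen set of hospitals — namely those that residents doing worse in $\tM$ find acceptable and at-least-as-good as their $M'$-partner — together with the fact that in $M'$ every hospital outside this set that any such resident prefers was already ``maxed out'' by the algorithm. A secondary subtlety is the tie structure on the hospitals' side: when I claim a strong blocking pair I must be careful to invoke the correct clause of Definition~\ref{def:strong} (indifference on one side and strict improvement on the other is enough), and to check that $r$ being merely \emph{unmatched} in $\tM$ (so $\tM(r) = \bot$, which every hospital beats) is in fact the easiest case rather than a separate headache. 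Once the counting inequality is set up, both cases close quickly by producing an explicit strong blocking pair for $\tM$.
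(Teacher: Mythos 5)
There is a genuine gap: the core of your plan --- the aggregate ``deficiency'' dichotomy over the set $S$ --- is never established, and as stated neither branch of it is sound. In the branch $\sum_{h'\in S}|\tM(h')| \ge \sum_{h'\in S}|M'(h')|$ you assert that $r$ must strongly block with $h$ ``or with whatever hospital absorbed the surplus,'' but a strong blocking pair $(r,h)$ requires $h$ to have an $\tM$-partner that $h$ ranks weakly below $r$ (or an empty slot w.r.t.\ $\tq(h)$); a hospital can perfectly well be filled in $\tM$ at least as much as in $M'$ entirely with residents it strictly prefers to $r$, and those residents may simply be ones who rejected $h$ during the algorithm because they got better offers. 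That situation is not a contradiction for a single fixed $r$ --- it only says some \emph{other} resident $r_1$ has $M'(r_1)\succ_{r_1}\tM(r_1)$, i.e.\ another violating pair, and your localized-to-$r$ contradiction framing then regresses to $r_1$, then to $r_2$, and so on. The other branch is equally uncontrolled: since $\tq$ is arbitrary and $M'$ may over-subscribe hospitals beyond $q$, the quantities $|\tM(h')|=\tq(h')\ge q(h')$ and $|M'(h')|$ are not comparable per hospital or in aggregate over $S$, so the inequality $\sum_S|\tM| < \sum_S|M'|$ does not by itself produce a blocking resident. Your ingredients (a)--(c) are correct, but they do not yet cut the regress.

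What is missing is a well-founded induction that terminates the regress, and this is exactly what the paper's (much shorter) proof supplies: order events by the algorithm's proposal sequence and consider the \emph{first} proposal from any hospital $h$ to any resident $r$ with $h\succ_r \tM(r)$. Strong stability of $\tM$ plus $\tq(h)\ge q(h)$ forces $\tM(h)$ to contain at least $q(h)$ residents strictly preferred to $r$, all of whom $h$ proposed to earlier; since $h$ was still under-subscribed w.r.t.\ $q(h)$ when it reached $r$'s rank, some $r_1\in\tM(h)$ must have already rejected $h$, so $M'(r_1)\succ_{r_1}h=\tM(r_1)$, meaning an earlier proposal already violated the condition --- contradicting minimality. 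No counting over hospital sets is needed. If you want to salvage your approach, replace the Hall-type count by this temporal (or rank-wise) induction; without some such well-ordering the argument does not close.
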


\begin{proof}

We proceed by contradiction. Suppose the claim fails, \textit{i.e.}, there exists a resident $r$ matched to some hospital $h\in\HH$ in $M'$ such that $h\succ_{r}\widetilde{M}(r)$. Since $M'(r)=h\in \HH$ (as $h\neq \bot$), hospital $h$ proposed to $r$ during the execution of Algorithm~\ref{algo:MinSum}, and $r$ accepted the proposal. In particular, $r$ received at least one proposal. Among all proposals made during Algorithm~\ref{algo:MinSum} that violate the claim, consider the \textit{first} such proposal in the proposal sequence\textemdash that is, the first proposal by some hospital $h$ to some resident $r$ such that $h\succ_{r}\widetilde{M}(r)$. 

Since $h\succ_{r}\widetilde{M}(r)$, and $\widetilde{M}$ is strongly stable, $h$ must be fully subscribed in $\widetilde{M}$ with better preferred residents than $r$ with respect to its capacity $\tilde{q}(h)$. Since $\tilde{q}(h)\geq q(h)$, there must be at least $q(h)$ many neighbors for $h$ in $\widetilde{M}(h)$, which $h$ strictly prefers to $r$. In Algorithm~1, $h$ proposes to all the residents in $\widetilde{M}(h)$ before proposing $r$.

Since $h$ proposes to $r$ in Algorithm~1, at least one resident, say $r_1$, in $\widetilde{M}(h)$ must have rejected the proposal of $h$ before $h$ proposed to $r$. Thus $M'(r_1)\succ_{r_1} h=\widetilde{M}(r_1)$. This contradicts the assumption that $h$ to $r$ is the first proposal such that $h\succ_{r} \widetilde{M}(r)$.
\qed\end{proof}

In the next claim, we show that any hospital that remains under-subscribed in $M'$with respect to $q(h)$ continues to remain under-subscribed (to the same extent or more) in a strongly stable matching $\tM$ of any augmented instance $\tG$.
\begin{cl}\label{cl:UnderSubscribed}
    Let $M'$ be the output of Algorithm~\ref{algo:MinSum} and $h\in \HH$ be any hospital such that $|M'(h)|< q(h)$. Also, assume that $\widetilde{G}$ is an augmented instance obtained from $G$, and $\widetilde{M}$ is a \SSM\ in $\widetilde{G}.$ Then, $|\tM(h)|\le |M'(h)|$. 
\end{cl}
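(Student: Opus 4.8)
The plan is to argue by contradiction: assume $|\tM(h)| > |M'(h)|$ for a hospital $h$ that is under-subscribed in $M'$ with respect to $q(h)$. Since $h$ ends up under-subscribed in $M'$, the while-loop guarantees that $h$ must have exhausted $\prefh$ --- that is, $h$ proposed to \emph{every} resident on its preference list during Algorithm~\ref{algo:MinSum}. The key structural fact I want to extract is: for every resident $r \in \prefh$, either $r \in M'(h)$, or $r$ rejected $h$ because at the time of the proposal $r$ was already matched to some hospital $r$ strictly prefers to $h$; and since residents only improve, $M'(r) \succ_r h$ in the latter case. So every neighbor of $h$ not in $M'(h)$ satisfies $M'(r) \succ_r h$.

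Next I would combine this with Claim~\ref{cl:mustBeMatched}. Because $|\tM(h)| > |M'(h)|$, there exists some resident $r^\ast \in \tM(h) \setminus M'(h)$ (more precisely, $\tM(h)$ cannot be a subset of $M'(h)$ by cardinality, and I need one element of $\tM(h)$ that is not matched to $h$ in $M'$). For such $r^\ast$: since $r^\ast \in \prefh$ but $r^\ast \notin M'(h)$, the paragraph above gives $M'(r^\ast) \succ_{r^\ast} h$. In particular $M'(r^\ast) \neq \bot$, so $r^\ast$ is matched to some genuine hospital $h'' = M'(r^\ast)$ in $M'$ with $h'' \succ_{r^\ast} h = \tM(r^\ast)$. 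But Claim~\ref{cl:mustBeMatched}, applied to $r^\ast$ and $h''$, says $\tM(r^\ast) \succeq_{r^\ast} h''$, hence $\tM(r^\ast) \succeq_{r^\ast} h'' \succ_{r^\ast} h = \tM(r^\ast)$, a contradiction.

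The one subtlety to get right --- and the step I expect to need the most care --- is the existence of $r^\ast \in \tM(h)\setminus M'(h)$ when $|\tM(h)| > |M'(h)|$. This is immediate if we think of $M'(h)$ and $\tM(h)$ as sets of residents (not copies of $\bot$): by the pigeonhole principle $\tM(h)$ has strictly more resident-members than $M'(h)$, so it cannot be contained in $M'(h)$, and any witness $r^\ast \in \tM(h)\setminus M'(h)$ works. I should state explicitly that $\bot$'s are excluded from this count so the cardinality comparison is about real residents. Everything else is bookkeeping: the monotone-improvement property of residents during Algorithm~\ref{algo:MinSum} (already used in the proof of Lemma~\ref{lem:SSMInstance}), and the fact that an under-subscribed-at-termination hospital has exhausted its list (directly from the while-condition). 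With these in hand the contradiction closes the claim.
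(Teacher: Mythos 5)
Your proof is correct and follows essentially the same route as the paper's: under-subscription forces $h$ to have exhausted its preference list, every neighbour outside $M'(h)$ ends with $M'(r)\succ_r h$, and Claim~\ref{cl:mustBeMatched} then yields $\tM(r)\succeq_r M'(r)\succ_r h$, ruling such residents out of $\tM(h)$; the paper states this directly as a containment $\tM(h)\subseteq M'(h)$ rather than via your contradiction witness $r^\ast$, which is an immaterial difference. One small point: your case split should also mention residents who first accept $h$ and later abandon it for a better hospital, but the monotone-improvement property you already invoke covers that case identically.
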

\begin{proof}
    Since $|M'(h)|< q(h)$, the hospital $h$ exhausted proposing all residents in $\prefh$ during the execution of Algorithm~\ref{algo:MinSum}. Clearly, all neighbors of $h$ received proposals from $h$. If there exists any resident, say $r$, who rejected $h$ during the execution of Algorithm~\ref{algo:MinSum}, then $r$ must have been matched
    in $M'$ to $M'(r)$ where $M'(r) \succ_r h$. 
    Using  Claim~\ref{cl:mustBeMatched}, we conclude that  $\tM(r) \succeq_r M'(r) \succ_r h$. 
    Thus, no resident who rejected $h$ during the execution of Algorithm~\ref{algo:MinSum} can be matched to $h$ in $\tM$, implying that $|\tM(h)| \le |M'(h)|$.  
\qed\end{proof}

Now, we show that the total increase in quotas of all hospitals incurred by Algorithm~\ref{algo:MinSum} is optimal.

\begin{lem}
\label{lem:optimalK}
The total quota increase by Algorithm~\ref{algo:MinSum} is optimal.
\end{lem}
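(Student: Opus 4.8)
The plan is to show that for any augmented instance $\tG$ admitting a \SSM\ $\tM$, the total quota increase in $\tG$ is at least that of $G'$, i.e. $\sum_{h} (\tq(h) - q(h)) \ge \sum_{h} (q'(h) - q(h))$. Since $q'(h) = \max\{q(h), |M'(h)|\}$ by line~\ref{algoMinSum:capUpdate}, the increase at $h$ in our solution is $\max\{0, |M'(h)| - q(h)\}$, which is nonzero only for hospitals over-subscribed in $M'$. So it suffices to show that for each hospital $h$ over-subscribed in $M'$, we have $\tq(h) \ge |M'(h)|$, and moreover that these contributions can be charged disjointly — but actually since we are summing over hospitals, we just need the per-hospital bound $\tq(h) - q(h) \ge q'(h) - q(h)$ whenever $q'(h) > q(h)$, together with $\tq(h) - q(h) \ge 0$ always (which holds since $\tG$ is augmented). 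Hence the whole lemma reduces to: \emph{if $|M'(h)| > q(h)$ then $\tq(h) \ge |M'(h)|$.}

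First I would fix a hospital $h$ with $|M'(h)| > q(h)$ and let $t$ be the last (least preferred) rank at which $h$ proposed in Algorithm~\ref{algo:MinSum}; the over-subscription arises because all residents tied at rank $t$ accepted simultaneously. Let $R_t$ be the set of residents at rank $t$ that $h$ proposed to and who are in $M'(h)$, and let $R_{<t} = M'(h) \setminus R_t$ be those matched to $h$ at strictly better ranks; note $|R_{<t}| < q(h)$ since $h$ was still under-subscribed when it proposed at rank $t$. By Claim~\ref{cl:mustBeMatched}, every resident $r \in M'(h)$ satisfies $\tM(r) \succeq_r h$. I then want to argue that in fact \emph{every} resident in $M'(h)$ is matched to $h$ itself in $\tM$ (not to something strictly better), which forces $|\tM(h)| \ge |M'(h)|$ and hence $\tq(h) \ge |M'(h)|$. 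Suppose some $r \in M'(h)$ has $\tM(r) \succ_r h$. Since $h$ proposed to $r$, and before proposing at rank $t$ the hospital $h$ had proposed to all residents it ranks strictly above $R_t$, the hospital $h$ is under-subscribed \emph{with respect to $\tq(h)$} in $\tM$ unless at least $\tq(h) \ge q(h)$ of its strictly-better-ranked neighbours are in $\tM(h)$. Those neighbours all received proposals from $h$ in Algorithm~\ref{algo:MinSum} (they are ranked no worse than rank $t$), so by Claim~\ref{cl:mustBeMatched} they are matched in $\tM$ to $h$ or better — if enough of them are matched to $h$, then $h$ is full in $\tM$ with residents each $\succeq_h r$, and since $\tM(r) \neq h$ with $r \succeq_h$ some member of $\tM(h)$... this gives a blocking pair $(r,h)$ unless $r$ strictly prefers $\tM(r)$ to $h$ — which is exactly our assumption, so no contradiction yet; I need to be more careful.

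The cleaner route is a counting/pigeonhole argument combining Claims~\ref{cl:mustBeMatched} and~\ref{cl:UnderSubscribed} across all hospitals simultaneously rather than one hospital at a time. Consider the bipartite structure: by Claim~\ref{cl:mustBeMatched}, the map sending each resident $r$ matched in $M'$ to $\tM(r)$ is well-defined and $\tM(r) \succeq_r M'(r)$; by Claim~\ref{cl:UnderSubscribed}, each under-subscribed (in $M'$) hospital has $|\tM(h)| \le |M'(h)| < q(h)$, so it does not absorb any ``new'' residents beyond its $q(h)$ budget. The number of residents matched in $M'$ equals $\sum_h |M'(h)|$, and each such resident is matched in $\tM$ to some hospital they weakly prefer. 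I would argue that residents can only ``flow'' toward more preferred hospitals and that the hospitals which are full-or-over in $M'$ are exactly where the excess accumulates, so $\sum_{h : |M'(h)| \ge q(h)} |\tM(h)| \ge \sum_{h} |M'(h)| - \sum_{h : |M'(h)| < q(h)} |M'(h)| = \sum_{h: |M'(h)|\ge q(h)} |M'(h)|$, using Claim~\ref{cl:UnderSubscribed} to bound the under-subscribed side. Then $\sum_h \tq(h) \ge \sum_h |\tM(h)| \ge \sum_{h:|M'(h)|\ge q(h)} |M'(h)| + \sum_{h:|M'(h)|<q(h)} q(h) \ge \sum_h q'(h)$ — wait, this needs the direction of flow to be correct, which is the crux.

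The main obstacle I anticipate is precisely justifying that residents do not ``escape'' from the over-subscribed hospitals to \emph{other} hospitals in $\tM$ in a way that lets the total augmentation be smaller. The resolution should be a potential/exchange argument: follow an alternating path in $M' \oplus \tM$ starting from an over-subscribed hospital; Claim~\ref{cl:mustBeMatched} says residents' partners only improve, and Claim~\ref{cl:UnderSubscribed} blocks the path from ending at an under-subscribed hospital, so the path must end at another full-or-over hospital, giving a matching (in the graph-theoretic sense) between the ``excess'' of $M'$ and the ``capacity slack'' available in $\tG$; summing gives $\sum_h (\tq(h)-q(h)) \ge \sum_h (q'(h)-q(h))$. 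I would set this up carefully as a Hall-type inequality on the set $S$ of residents matched in $M'$ to over-subscribed hospitals together with the hospitals reachable from them, and conclude optimality.
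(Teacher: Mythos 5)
Your proposal is not correct as written, although the correct argument is buried inside it. The reduction you set up in the first paragraph --- ``it suffices to show that if $|M'(h)| > q(h)$ then $\tq(h) \ge |M'(h)|$'' --- is false, and the paper itself contains a counterexample: for the instance in Fig.~\ref{subfig:prefGBT1}, Algorithm~\ref{algo:MinSum} over-subscribes $h_1$ (so $|M'(h_1)|=2>q(h_1)=1$ and $q'(h_1)=2$, $q'(h_2)=1$), yet the augmentation $\tq(h_1)=1,\ \tq(h_2)=2$ also admits the \SSM\ $\{(r_3,h_1),(r_2,h_2),(r_1,h_2)\}$ with the same total increase. So an adversarial $\tG$ need not augment the same hospitals at all, and no per-hospital comparison of $\tq(h)$ with $|M'(h)|$ can work; optimality is inherently an aggregate statement. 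Your second paragraph correctly senses this and stalls, and your final alternating-path/Hall-type sketch is both unnecessary and unproved (the assertion that Claim~\ref{cl:UnderSubscribed} ``blocks the path from ending at an under-subscribed hospital'' is never established, and no such path machinery is needed).

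The counting in your third paragraph is essentially the paper's proof, and it already suffices --- but your closing chain of inequalities has a wrong step: $\sum_h |\tM(h)| \ge \sum_{h:|M'(h)|\ge q(h)} |M'(h)| + \sum_{h:|M'(h)|<q(h)} q(h)$ is false, since for every under-subscribed hospital Claim~\ref{cl:UnderSubscribed} gives $|\tM(h)| \le |M'(h)| < q(h)$. The repair is to bound quotas, not matched loads, on the under-subscribed side: by Claim~\ref{cl:mustBeMatched} every resident matched in $M'$ is matched in $\tM$, and by Claim~\ref{cl:UnderSubscribed} at most $\sum_{h\in\HH_u}|M'(h)|$ of them can sit at hospitals in $\HH_u=\{h: |M'(h)|<q(h)\}$ in $\tM$; hence $\sum_{h\in\HH_f}|\tM(h)| \ge \sum_{h\in\HH_f}|M'(h)|$ where $\HH_f=\HH\setminus\HH_u$ (this you derived correctly). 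Then
\[
\sum_{h\in\HH}\tq(h)\ \ge\ \sum_{h\in\HH_f}|\tM(h)|+\sum_{h\in\HH_u}q(h)\ \ge\ \sum_{h\in\HH_f}|M'(h)|+\sum_{h\in\HH_u}q(h)\ =\ \sum_{h\in\HH}q'(h),
\]
using $\tq(h)\ge|\tM(h)|$ on $\HH_f$ and $\tq(h)\ge q(h)$ on $\HH_u$, which is exactly the total-increase lower bound $k=\sum_{h\in\HH_f}(|M'(h)|-q(h))$ that Algorithm~\ref{algo:MinSum} attains. The ``direction of flow'' worry you flag as the crux is already fully handled by Claims~\ref{cl:mustBeMatched} and~\ref{cl:UnderSubscribed}; with the bookkeeping fixed as above, no further argument is required.
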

\begin{proof}
Let $\RRM \subseteq\RR$ be the set of residents who received some proposal during the execution of Algorithm~\ref{algo:MinSum} and hence residents in $\RRM$ are matched in $M'$. By Claim~\ref{cl:mustBeMatched}, every $r\in\RR_m$ must be matched in $\tM$.
    Let $\HH_u$ be the set of  hospitals such that $|M'(h)|<q(h)$, and $\HH_f = \HH \setminus \HH_u $.
    Let the $\RRMU$ denote the set of residents matched in $M'$ to hospitals in $\HH_u$.
    By Claim~\ref{cl:UnderSubscribed}, the quota utilization over all hospitals $\HH_u$ in a strongly stable matching $\tM$ of any instance $\tG$  must be at most $|\RRMU|$.  This implies that at least $|\RRM \setminus \RRMU|$ many residents must be matched to hospitals in $\HH_f$ in the matching $\tM$. Let $k = |\RRM \setminus \RRMU| - \sum_{h\in \HH_f} q(h)$. Thus, the total quota increase in any instance $\tG$ is at least $k$. Algorithm~\ref{algo:MinSum} increases the quotas of hospitals in $\HH_f$ only and matches the residents in $\RRM \setminus \RRMU$ to hospitals in $\HH_f$. Thus, the total quota increase of hospitals in $G'$ is exactly $k$ which is optimal.
\qed\end{proof}

Lemma~\ref{lem:SSMInstance} and Lemma~\ref{lem:optimalK}  together imply Theorem~\ref{thm:MinSum}.

It is well known that when an \HRHT\ instance admits a strongly stable matching, all its strongly stable matchings match the same set of residents~\cite{irving2003strong}. In a similar spirit, we prove that all optimal solutions of a given \hyperref[prob:minsumss]{\MINSUMSS} instance match the same set of residents.

\begin{thm}\label{theo:MinSumProp}
Let $G'$ be the instance returned by Algorithm~\ref{algo:MinSum} and $\RRM$ denote the set of residents matched in the \SSM\ $M'$. Then, for any optimal augmented instance $G_{opt}$, the set of residents matched in any \SSM\ of $G_{opt}$ is exactly $\RRM$.
\end{thm}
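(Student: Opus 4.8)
<br>

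The goal is to show that every optimal augmentation $G_{opt}$ produces strongly stable matchings that match exactly the set $\RRM$ of residents matched by Algorithm~\ref{algo:MinSum}. The plan is to prove containment in both directions. For the direction $\RRM \subseteq \{\text{residents matched in a \SSM\ of } G_{opt}\}$, I would simply invoke Claim~\ref{cl:mustBeMatched}: any augmented instance $\tG$ admitting a \SSM\ has the property that every resident of $\RRM$ is matched, and since $G_{opt}$ is in particular such an instance, this direction is immediate. Moreover, by the well-known fact~\cite{irving2003strong} that all \SSM s of a fixed instance match the same set of residents, it does not matter which \SSM\ of $G_{opt}$ we pick.

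The reverse containment is the substantive part. Let $\tM_{opt}$ be a \SSM\ in $G_{opt}$ and let $S$ be the set of residents it matches; I want to show $S \subseteq \RRM$, i.e. $|S| \le |\RRM|$ combined with $\RRM\subseteq S$ forces equality. Here I would reuse the counting argument from Lemma~\ref{lem:optimalK}. Partition $\HH$ into $\HH_u$ (hospitals under-subscribed in $M'$ w.r.t. $q(h)$) and $\HH_f = \HH\setminus\HH_u$. By Claim~\ref{cl:UnderSubscribed}, in $\tM_{opt}$ the total occupancy of hospitals in $\HH_u$ is at most $|\RRMU|$ (the residents $M'$ assigns to $\HH_u$). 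Hence the number of residents of $S$ matched to $\HH_f$ in $\tM_{opt}$ is at least $|S| - |\RRMU|$. On the other hand, since $G_{opt}$ is \emph{optimal}, by Lemma~\ref{lem:optimalK} the total quota increase in $G_{opt}$ equals $k = |\RRM\setminus\RRMU| - \sum_{h\in\HH_f} q(h)$, so the hospitals in $\HH$ have total capacity in $G_{opt}$ at most $\sum_{h\in\HH} q(h) + k$, and in particular the residents matched to $\HH_f$ number at most $\sum_{h\in\HH_f} q(h) + k = |\RRM\setminus\RRMU| = |\RRM| - |\RRMU|$ — provided the extra $k$ units of capacity are all spent on $\HH_f$. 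This last provision needs a small argument: any optimal augmentation can only usefully increase capacities of hospitals in $\HH_f$ (increasing a hospital in $\HH_u$ is wasted, since by Claim~\ref{cl:UnderSubscribed} it cannot be filled beyond $|M'(h)| < q(h)$ anyway), so w.l.o.g.\ $G_{opt}$ augments only $\HH_f$. Combining, $|S| - |\RRMU| \le |\RRM| - |\RRMU|$, giving $|S|\le|\RRM|$, and with $\RRM\subseteq S$ we get $S=\RRM$.

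The step I expect to be the main obstacle is making the ``w.l.o.g.\ $G_{opt}$ augments only $\HH_f$'' argument airtight: a priori an optimal $G_{opt}$ could place some of its augmentation budget on a hospital $h\in\HH_u$. I would argue that if $h\in\HH_u$ has $q_{opt}(h) > q(h)$ in $G_{opt}$, then replacing $q_{opt}(h)$ by $q(h)$ yields an instance $G'_{opt}$ with strictly smaller total increase; it remains to check $G'_{opt}$ still admits a \SSM. This follows because $\tM_{opt}$ restricted appropriately is still strongly stable in $G'_{opt}$: by Claim~\ref{cl:UnderSubscribed}, $|\tM_{opt}(h)| \le |M'(h)| < q(h) \le q(h)$, so $h$ remains (at most) under-subscribed at capacity $q(h)$, hence no new strong blocking pair involving $h$ is created (a blocking pair needs $h$ to prefer $r$ at least as much as some resident it is matched to, but an under-subscribed hospital prefers any neighbour to its $\bot$-slot — one must check this does not spuriously create a blocking pair, which it does not, because if $(r,h)$ blocked in $G'_{opt}$ it would already block in $G_{opt}$ where $h$ has even more room). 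This contradicts optimality of $G_{opt}$, so indeed $q_{opt}(h) = q(h)$ for all $h\in\HH_u$. Once this is settled, the counting goes through cleanly and the theorem follows.
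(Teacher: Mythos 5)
Your proof is correct and follows essentially the same route as the paper: one containment via Claim~\ref{cl:mustBeMatched}, and the other via the counting argument of Lemma~\ref{lem:optimalK} combining Claim~\ref{cl:UnderSubscribed} with the optimality value $k$. The only deviation is your ``w.l.o.g.\ $G_{opt}$ augments only $\HH_f$'' detour, which is correctly justified but unnecessary: since the augmentation placed on $\HH_f$ is at most the total augmentation $k$, the bound on residents matched to $\HH_f$ already holds, which is how the paper's (shorter) contradiction argument proceeds.
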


\begin{proof}
    Theorem~\ref{thm:MinSum} asserts that the instance $G'$ returned by Algorithm~\ref{algo:MinSum} is an optimal augmented instance for $G$. Let $M_{opt}$ be a \SSM\ in $G_{opt}$. Applying Claim~\ref{cl:mustBeMatched}, we know that $M_{opt}$ must match all residents in $\RRM$.  If $M_{opt}$ matches any resident $r\notin \RRM$, then $M_{opt}$ must match more than $|\RRM|$ many residents for the instance $G_{opt}$. Using Claim~\ref{cl:UnderSubscribed}, we observe that any hospital $h$ that is under-subscribed in $M'$ with respect to $q(h)$ is matched to at most $|M'(h)|$ many residents in $M_{opt}$. Thus, the matching $M_{opt}$ must match $r$ to a hospital $h$ such that $|M'(h)|\ge q(h)$. Therefore, the total increase in quotas by $G_{opt}$ is more than that of $G'$. This contradicts the optimality of $G_{opt}$.
\qed\end{proof}

 Using Claim~\ref{cl:UnderSubscribed} and Theorem~\ref{theo:MinSumProp}, we have the following corollary
\begin{corollary}\label{cor:fullHospitals}
    Let $G'$ be the instance returned by Algorithm~\ref{algo:MinSum}. Also, assume that $G_{opt}$ is any optimal augmentation and $M_{opt}$ is a \SSM\ in $G_{opt}$. Then, $|M'(h)|\ge q(h)$ for a hospital $h$ implies that $|M_{opt}(h)|\ge q(h)$. Moreover, if $|M'(h)|< q(h)$, then $|M_{opt}(h)|=|M'(h)|$.
\end{corollary}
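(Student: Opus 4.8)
The plan is to derive Corollary~\ref{cor:fullHospitals} as an almost immediate consequence of the two claims and Theorem~\ref{theo:MinSumProp}, so the proof is short and structural rather than computational. Let $G_{opt}$ be any optimal augmentation and $M_{opt}$ a \SSM\ in $G_{opt}$. I would split into the two cases according to whether a hospital $h$ is ``full'' or ``under-subscribed'' in $M'$.

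\textbf{Case $|M'(h)|\ge q(h)$.} Here I want to show $|M_{opt}(h)|\ge q(h)$. I would argue by contradiction: suppose $|M_{opt}(h)| < q(h)$, so $h$ is strictly under-subscribed in $M_{opt}$ with respect to its original quota. First note every resident in $M'(h)$ lies in $\RRM$ (they received a proposal from $h$), so by Claim~\ref{cl:mustBeMatched} each $r\in M'(h)$ is matched in $M_{opt}$ with $M_{opt}(r)\succeq_r h$. If some such $r$ has $M_{opt}(r) = h$ we still need the count to work out; the cleaner route is to use that $h$ proposed to all residents in $M'(h)$ and, since $h$ is under-subscribed in $M_{opt}$, it never became full during the run restricted to those residents — more precisely, in Algorithm~\ref{algo:MinSum} $h$ continued proposing down its list, so it proposed to every resident $r$ with $r \succeq_h r'$ for some $r' \in M'(h)$; any such $r$ not matched to $h$ in $M_{opt}$ with $h \succ_r M_{opt}(r)$ would form a \SBP. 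I would then show this forces at least $|M'(h)|\ge q(h)$ residents to be matched to $h$ in $M_{opt}$, contradicting $|M_{opt}(h)| < q(h)$.

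\textbf{Case $|M'(h)| < q(h)$.} Claim~\ref{cl:UnderSubscribed} already gives $|M_{opt}(h)| \le |M'(h)|$ (applied with $\tG = G_{opt}$, $\tM = M_{opt}$). For the reverse inequality $|M_{opt}(h)| \ge |M'(h)|$, I would use a counting argument via Theorem~\ref{theo:MinSumProp}: $M_{opt}$ matches exactly the resident set $\RRM$, the same set matched by $M'$. Summing $|M_{opt}(h)|$ over the under-subscribed hospitals $\HH_u$ and over the full hospitals $\HH_f$ separately: by Claim~\ref{cl:UnderSubscribed} we have $\sum_{h\in\HH_u}|M_{opt}(h)| \le \sum_{h\in\HH_u}|M'(h)| = |\RRMU|$, and by the first case together with optimality of $G_{opt}$ (which by Lemma~\ref{lem:optimalK} has total augmentation exactly $k = |\RRM\setminus\RRMU| - \sum_{h\in\HH_f}q(h)$), the hospitals in $\HH_f$ can absorb at most $\sum_{h\in\HH_f}q(h) + k = |\RRM\setminus\RRMU|$ residents. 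Since $|M_{opt}(\HH_u)| + |M_{opt}(\HH_f)| = |\RRM| = |\RRMU| + |\RRM\setminus\RRMU|$, both inequalities must be tight; in particular $\sum_{h\in\HH_u}|M_{opt}(h)| = \sum_{h\in\HH_u}|M'(h)|$, and combined with the per-hospital bound $|M_{opt}(h)|\le |M'(h)|$ from Claim~\ref{cl:UnderSubscribed} this forces $|M_{opt}(h)| = |M'(h)|$ for every $h\in\HH_u$.

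The main obstacle I anticipate is the first case: making rigorous the step that an under-subscribed $h$ in $M_{opt}$ must nonetheless be matched to at least $|M'(h)|$ residents. The subtlety is that residents in $M'(h)$ may have moved to strictly better hospitals in $M_{opt}$ (Claim~\ref{cl:mustBeMatched} only gives $M_{opt}(r)\succeq_r h$), so I cannot simply say they stay with $h$; I must instead produce replacement residents from $h$'s preference list whom $h$ weakly prefers and who would otherwise strongly block. Handling ties carefully here — distinguishing $r \succ_h r'$ from $r \sim_h r'$ when invoking the definition of a \SBP\ — is where the argument needs the most care. Once that case is settled, the second case and the global tightness argument are routine bookkeeping on top of Theorem~\ref{theo:MinSumProp} and Lemma~\ref{lem:optimalK}.
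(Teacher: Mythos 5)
Your Case~2 is essentially the paper's intended argument: the paper derives the corollary directly from Claim~\ref{cl:UnderSubscribed} and Theorem~\ref{theo:MinSumProp} via exactly the global counting you describe (with Lemma~\ref{lem:optimalK} supplying the budget $k$), and that counting is sound. The genuine gap is Case~1, and it is not just a matter of tightening the bookkeeping you flagged: a purely local stability argument of the kind you sketch cannot work, because the statement ``$|M'(h)|\ge q(h)$ implies $|\tM(h)|\ge q(h)$'' is \emph{false} for non-optimal augmentations, so any proof must use the optimality of $G_{opt}$ somewhere. Concretely, take $h''$ with $q(h'')=1$ and strict list $r_0, r$; resident $r_0$ lists only $h''$; resident $r$ lists $h'', h$ in that order; $h$ has $q(h)=1$ and lists only $r$; add a disjoint gadget ($h_3$ with a tie $(r_4,r_5)$ and $q(h_3)=1$) so that $G$ admits no \SSM. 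Algorithm~\ref{algo:MinSum} gives $M'(h)=\{r\}$, so $|M'(h)|=q(h)$, yet the augmented instance that raises $q(h'')$ to $2$ (and $q(h_3)$ to $2$) admits the \SSM\ $\{(r_0,h''),(r,h''),(r_4,h_3),(r_5,h_3)\}$ in which $h$ is matched to nobody. Your argument that an under-subscribed $h$ in $M_{opt}$ forces back at least $|M'(h)|$ residents via blocking pairs never invokes optimality, so it would ``prove'' a false statement; the residents of $M'(h)$ can legitimately migrate to strictly better hospitals whose quotas were (wastefully) augmented.

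The repair is already contained in your own Case~2, provided you reorganize: run the counting first, for all hospitals at once. By Theorem~\ref{theo:MinSumProp}, $M_{opt}$ matches exactly $\RRM$; by Claim~\ref{cl:UnderSubscribed}, $\sum_{h\in\HH_u}|M_{opt}(h)|\le|\RRMU|$; and by optimality (Lemma~\ref{lem:optimalK}), $\sum_{h\in\HH_f}|M_{opt}(h)|\le\sum_{h\in\HH_f}q_{opt}(h)\le\sum_{h\in\HH_f}q(h)+k=|\RRM\setminus\RRMU|$ --- note this second bound needs only that total augmentation equals $k$ and quotas never decrease, \emph{not} ``the first case,'' so the circular reference in your write-up should be deleted. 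Since the two upper bounds sum to $|\RRM|$, both are tight: tightness of the first, combined with the per-hospital bound from Claim~\ref{cl:UnderSubscribed}, gives $|M_{opt}(h)|=|M'(h)|$ for every $h\in\HH_u$ (your Case~2); tightness of the second forces every $h\in\HH_f$ to be fully subscribed w.r.t.\ $q_{opt}(h)\ge q(h)$, which is precisely Case~1. With that restructuring the proof is complete and coincides with the paper's derivation.
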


Now, let us consider a variant of the \hyperref[prob:minsumss]{\MINSUMSS} problem where our goal is to determine the existence of an augmented instance that admits a resident-perfect \SSM. Let us denote this problem by {\sf \textsc{MinSum-SS-RP}}.  Chen and Cs{\'{a}}ji~\cite{ChenCsaji23CapAAMAS} studied a special case of this problem called {\sf \textsc{MinSum Cap Stable and Perfect}} problem. Given an \HR\ instance (strict list), say $G$, and a budget $\ell$, the {\sf \textsc{MinSum Cap Stable and Perfect}} problem asks whether it is possible to obtain an augmented instance $G'$ from $G$, only by increasing the quotas of some hospitals, such that $G'$ admits a resident-perfect stable matching, and the sum of the increase in quotas over all hospitals is at most $\ell$.  They showed that this problem is NP-complete even for a very restricted case. Therefore, we conclude that   {\sf \textsc{MinSum-SS-RP}} problem is NP-complete.
\section{MINSUM-SS with Forced Edges} \label{sec:ForcedMatch}

In this section, we consider the \hyperref[prob:minsumssfe]{\MINSUMSSFE} problem. Here, in addition to an \HRHT\ instance $G$, we are given a subset $Q\subseteq E$. Our goal in this problem is to determine whether there exists an augmented instance $G'$ that admits a \SSM\ $M'$ with $Q\subseteq M'$. Note that if $Q$ contains two edges $(r,h_1)$ and $(r,h_2)$ for some resident $r$ and distinct hospitals $h_1\neq h_2$, then no such augmented instance $G'$ exists. Therefore, we assume that for any two edges $(r_1,h_1), (r_2,h_2)\in Q$, we have $r_1\neq r_2$.

We remark that such an augmented instance need not exist even when $|Q|=1$. For instance, consider an instance with one resident $r$ and two hospitals $h_1,h_2$ where $r$ prefers $h_1$ over $h_2$. Let $Q=\{(r,h_2)\}$. There is no way to augment the quotas of hospitals to get a \SSM\ which contains $Q$. We show that the \hyperref[prob:minsumssfe]{\MINSUMSSFE} problem admits a polynomial-time algorithm.  
Whenever an augmentation is possible, we output the optimally augmented instance.   

\subsection{Overall idea of our algorithm}

First, we present the overall idea of our algorithm.
Our algorithm constructs a pruned graph by deleting certain edges\textemdash specifically, any edge whose inclusion in a matching, together with the forced edges $Q$, would introduce a strong blocking pair.  The pruned graph may or may not admit a \SSM. Hence, we use Algorithm~\ref{algo:MinSum} (for the \hyperref[prob:minsumss]{\MINSUMSS} problem)  to obtain an augmented instance. While the augmented instance admits a strongly stable matching, the matching along with the forced edges $Q$ may not be strongly stable. To finally determine whether an augmentation is possible, we crucially use the analog of the rural hospitals theorem for the \hyperref[prob:minsumss]{\MINSUMSS} problem, established as  Theorem~\ref{theo:MinSumProp} and Corollary~\ref{cor:fullHospitals}.
  
 For a hospital $h\in \HH$, let $Q(h)=\{r\ | \ (r,h)\in Q\}$.
 If such an augmented instance exists, then the quota of each hospital $h$ must be at least $|Q(h)|$. Therefore, for each hospital $h$, without loss of generality, we assume $q(h)\ge|Q(h)|$.

\subsection{Pruned instance}
Let $(r,h)\in Q$. Suppose there exists a resident $r'$ such that $r'\succeq_{h} r$. Then $r'$ cannot be matched to any $h'$ where $h\succ_{r'} h'$ in a \SSM\ $M$ containing the edge $(r,h)$; otherwise,  $(r',h)$ will block $M$. We call such a resident $r'$ a \textit{distracting resident} for the forced edge $(r,h)\in Q$. Similarly, for the edge $(r,h)\in Q$, if there exists a hospital $h'$ such that $h'\succ_{r} h$, then $h'$ cannot be matched to any resident $r'$ where $r\succeq_{h'} r'$ in any \SSM\ which contains the edge $(r,h)$. We call such a hospital $h'$  a \textit{distracting hospital} for the forced edge $(r,h)$. 
The notion of distracting agents is borrowed from~\cite{GokhaleSNV24AAMASCap}, where a special case of our problem is considered in the strict list setting with $|Q| = 1$. 
For each distracting resident/hospital, we remove any edge along which the distracting agent cannot be matched in any \SSM\ containing the edge $(r,h)$. It is possible that the set of deleted edges may contain edges from $Q$. If it happens, then there exists an edge that needs to be deleted, but is also a forced edge. Hence, there does not exist an augmented instance $G'$ which admits a \SSM\ containing the set $Q$, and therefore, we output ``No augmentation possible to get a \SSM\ containing Q".  In the remaining part of the section, we assume that the set of deleted edges does not contain any edge from $Q$.

 Given the edge set $Q$ of edges, we define a set $\RR_d$ of distracting residents and a set $\HH_d$ of distracting hospitals corresponding to all the edges in $Q$. Thus,  
$ \RR_d=\{r'\ | \ r'\succeq_{h} r \text{ for some } (r,h)\in Q\}$ and $\HH_d=\{h'\ | \ h'\succ_{r} h \text{ for some } (r,h)\in Q\}$. Next, we formally describe the pruning steps discussed above. We use the edge set $E_d$ to keep track of deleted edges.

\begin{enumerate}[label=\arabic*.]
    \item \label{itm:Pruned1} Initialize $E_d=\emptyset$. For each edge $(r,h)\in Q$: 
    \begin{enumerate}[label = (\alph*)]
    \item\label{itm:pruned11} For each $r'\in \RR_d$, if $r'$ is distracting for edge $(r,h)$, then we delete all $h'$ from $\prefrp$ and $r'$ from $\prefhp$ such that $h\succ_{r'} h'$. Add all such $(r',h')$ to $E_d$.

    \item\label{itm:pruned12} For each $h'\in \HH_d$, if $h'$ is distracting for edge $(r,h)$, then we delete all $r'$ from $\prefhp$ and $h'$ from $\prefrp$ such that $r\succeq_{h'} r'$. Add all such $(r',h')$ to $E_d$.
    
\end{enumerate}

\item  Let $Q_{\RR}=\{r\ |\ (r,h)\in Q\}$. 
Since every resident in $Q_{\RR}$ has a forced match, we delete all edges incident on $r \in Q_{\RR}$. We include these deleted edges in $E_d$. Observe that after this step, the edges in $Q$ are in $E_d$. To account for matching residents in $Q_{\RR} $ to their forced partners, we appropriately reduce the quotas of the corresponding hospitals. 

\end{enumerate}

 We call this pruned instance $G_p$. We note that during this pruning process, some residents may become isolated, and the quotas of some hospitals may go down to zero. We denote the set of edges in $G_p$ by $E_p$. That is, $E_p=E\setminus E_d$. Thus, the pruned instance $G_p=(\RR\cup \HH, E_p)$. For each hospital $h$, its quota in $G_p$ is denoted by  $q_p(h)$ and  $q_p(h)=q(h)-|Q(h)|$. It is easy to observe that both steps described above run in $O(m)$ time. Therefore, the pruned graph $G_p$ can be constructed in $O(m)$ time.

\subsection{Our algorithm}
 We now describe our algorithm. Observe that the pruned instance $G_p$ may or may not admit a \SSM. If $G_p$ does not admit a \SSM, we run Algorithm~\ref{algo:MinSum} for the \hyperref[prob:minsumss]{\MINSUMSS} problem to obtain an augmented instance $G_p'$ that admits a \SSM, say $M_p'$. For the instance $G_p'  = (\RR \cup \HH, E_p)$, let $q'_p(h)$ denote the quota of hospital $h\in\HH$. Even in the case when the pruned graph  $G_p$ itself admits a \SSM, for convenience of notation we rename the instance $G_p$ as $G_p'$ and a \SSM\ in $G_p'$ as $M_p'$. Although $M_p'$ is a \SSM\ in $G_p'$, for the matching $M_p'\cup Q$ to be a \SSM, we need the following checks. 

\begin{enumerate}[left=0pt, label = Step \arabic*.]
        \item \label{itm:step1}  If there exists some  $h \in \HH_d$ such that $|M_p'(h)| < q(h)$ (\textit{i.e.} $h$ is under-subscribed with respect to its original quota), then we declare ``No augmentation possible to get a \SSM\ containing $Q$''. This step is justified in Lemma~\ref{lem:noSolMinSumMatch}.
        
        \item \label{itm:step2} Otherwise, we obtain a further augmented instance $G_p^*$ and a matching $M_p^*$ as described next. For a resident $r\in\RR$, let $last(r)$ denote the least-preferred hospital of $r$ in the pruned instance $G_p$. Suppose $r'\in \RR_d$ is a distracting resident for some edge $(r,h)$. This implies $r'\succeq_h r$. We observe that for any augmented instance of $G$, no \SSM\ containing the edge $(r,h)$ can match the resident $r'$ to a hospital $h'$ such that $last(r') \succ_{r'} h'$, otherwise $(r', last(r'))$ blocks the resulting matching. For the same reason as above, no $r'\in \RR_d$ can be left unmatched.  Let $\RR_d^u$ denote the residents in $\RR_d$ that are left unmatched in the matching $M_p'$.  Then, for every $r' \in \RR_d^u$ in the instance $G_p'$, we match $r'$ to the hospital $last(r')$ to get the matching $M_p^*$ and increase the quota of $last(r')$ by 1.  We denote by $q^*_p(h)$ the quota of every hospital in the instance $G_p^*$.

        \item \label{itm:step3} For each $(r, h) \in Q$ we add $(r,h)$ to $M_p^*$ and increase the quota of $h$ by 1. We also restore all the edges in $E_d$.  With a slight abuse of notation, we call this instance $G_p^*$ with quota $q_p^*(h)$ for all $h\in\HH$. We return $G_p^*$ and the matching $M_p^*$.
\end{enumerate}

This completes the description of our algorithm. 

\vspace{0.1in}

\noindent\textbf{Running time of our algorithm.} Recall that the pruned graph $G_p$ can be constructed in $O(m)$ time. As noted in Section~\ref{sec:MinSumSSM}, executing Algorithm~\ref{algo:MinSum} takes $O(m)$ time to compute $G_p'$ and $M_p'$. Constructing $M_p^*$ from $M_p'$ also takes $O(m)$ time. Thus, the overall running time of our algorithm is $O(m)$. 

\subsection{Correctness and optimality of our algorithm}

We first prove the strong stability of the matching $M_p^*$ output by our algorithm.

\begin{cl}\label{cl:GpEquiG1}
    No edge in $E_d$ blocks the matching $M_p^*$ output at the end of ~\ref{itm:step3}
\end{cl}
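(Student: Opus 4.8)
The plan is to show that no edge of $E_d$ can form a \SBP\ with $M_p^*$ by going through the three ways an edge enters $E_d$ (pruning step~\ref{itm:pruned11}, pruning step~\ref{itm:pruned12}, and the edges incident on $Q_{\RR}$) and arguing that in each case the offending endpoint is already matched well enough that the edge cannot block. Recall that $M_p^*$ is obtained from $M_p'$ (a \SSM\ in $G_p'$, which uses only edges of $E_p$) by: (i) matching every unmatched distracting resident $r' \in \RR_d^u$ to $last(r')$; and (ii) adding every forced edge $(r,h) \in Q$. So $M_p^*$ restricted to $E_p$ equals $M_p'$ together with the $(r', last(r'))$ edges, and then the $Q$-edges are added on top.

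First I would handle the edges removed in step~\ref{itm:pruned11}: an edge $(r', h')$ with $r' \in \RR_d$ distracting for some $(r,h) \in Q$ and $h \succ_{r'} h'$. Since $r' \in \RR_d$ and $M_p^*$ contains $(r,h) \in Q$, I claim $M_p^*(r') \succeq_{r'} last(r') \succeq_{r'} h \succ_{r'} h'$: indeed, in the pruned instance $G_p$ every edge $(r', h'')$ with $h \succ_{r'} h''$ has been deleted, so $last(r')$ — the worst hospital remaining in $\prefrp$ in $G_p$ — satisfies $last(r') \succeq_{r'} h$; and $r'$ is matched in $M_p^*$ to a hospital no worse than $last(r')$ (if $r'$ was matched in $M_p'$ it is to an $E_p$-edge, hence to something $\succeq_{r'} last(r')$; if $r'$ was unmatched in $M_p'$ it got matched to exactly $last(r')$ in step~\ref{itm:step2}). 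Hence $r'$ strictly prefers its partner to $h'$, so $(r', h')$ satisfies neither condition (i) nor (ii) of Definition~\ref{def:strong} and cannot block. Symmetrically, for an edge $(r', h')$ removed in step~\ref{itm:pruned12} with $h' \in \HH_d$ distracting for $(r,h) \in Q$ and $r \succeq_{h'} r'$: since $M_p^*$ contains $(r,h)$ and every resident on $\prefhp$ that $h'$ weakly prefers to $r$ has been deleted in $G_p$, the hospital $h'$ is filled in $M_p^*$ only by residents it strictly prefers to $r$ (matched via $M_p'$ along $E_p$-edges) — and note step~\ref{itm:step2} only ever adds $last(r'')$ matches which, for such an $r''$, also avoid $h'$ since $(r'',h')$ would have been pruned — and $h'$ is fully subscribed w.r.t. $q(h)$ by Step~\ref{itm:step1} not firing. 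A small check is needed here that $h'$ being full w.r.t. $q(h)$ in $M_p'$ (equivalently w.r.t. $q_p(h) = q(h) - |Q(h)|$ after adding back $Q$-edges, since $h' \in \HH_d$ is distracting and hence $h' \notin Q_\RR$'s hospitals in a conflicting way) means it is full in $M_p^*$; since $h'$ is full with residents all strictly better than $r \succeq_{h'} r'$, the pair $(r', h')$ fails both (i) and (ii).

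Finally, for the edges incident on some $r \in Q_\RR$ that were deleted in pruning step~2: such an edge is $(r, h')$ with $M_p^*(r) = h$ where $(r,h) \in Q$. If $(r, h')$ were a \SBP\ then $h' \succ_r h$ (condition (i)) or $h' = h$ is impossible, or $h' \succeq_r h$ with $h'$ containing a strictly-worse-than-$r$ resident (condition (ii)); in the first case $h' \succ_r h = M_p^*(r)$ forces $h'$ to be a distracting hospital for $(r,h)$, i.e. $h' \in \HH_d$, so the edge $(r, h')$ was already removed in step~\ref{itm:pruned12} and the previous paragraph applies — but we should argue directly: every resident $h'$ weakly prefers to $r$ was deleted from $\prefhp$ in $G_p$, so in $M_p^*$ all of $h'$'s partners are strictly better than $r$, and $h'$ is full (Step~\ref{itm:step1}), so $h'$ has no room for a weakly-worse resident and condition (ii) with $h' = h'$... — careful: condition (ii) needs $h' \succeq_r M_p^*(r)$, which holds, AND some $r'' \in M_p^*(h')$ with $r'' \prec_{h'} r$; but all of $M_p^*(h')$ is $\succ_{h'} r$, contradiction. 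Similarly condition (i) needs some $r'' \in M_p^*(h')$ with $r'' \succeq_{h'} r$, again impossible. So $(r, h')$ does not block. I expect the main obstacle to be the bookkeeping around quotas in Steps~\ref{itm:step2}–\ref{itm:step3} — specifically verifying that a distracting hospital $h'$ which is fully subscribed w.r.t.\ its \emph{original} quota $q(h')$ in $M_p'$ remains fully subscribed in $G_p^*$ under quota $q_p^*(h')$, and that step~\ref{itm:step2}'s $last(r')$-additions never accidentally place a distracting resident onto a hospital it ought to avoid (which follows because $(r', h'')$ with $h'' \succ_{r'}$-violating is deleted, so $last(r')$ itself is a "safe" hospital). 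Once those quota/pruning invariants are pinned down, each of the three cases is a one-line application of Definition~\ref{def:strong}.
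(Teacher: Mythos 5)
Your proof is correct and follows essentially the same route as the paper's: the paper splits on whether the resident endpoint lies in $Q_{\RR}$, while you split on which pruning rule placed the edge in $E_d$, but the substantive arguments coincide — distracting residents are matched at least as well as $last(\cdot)$ and hence strictly prefer their partner, and distracting hospitals are fully subscribed (via \ref{itm:step1} and the unit quota increments in \ref{itm:step2}–\ref{itm:step3}) with residents strictly better than the relevant one. The quota-bookkeeping check you flag is exactly the point the paper also relies on, so no genuinely different ideas are involved.
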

\begin{proof}    
      We first note that no edge in $Q$ can be strongly blocking for $M_p^*$ because $Q\subseteq M_p^*$. Therefore, we only need to consider the edges in $E_d\setminus Q$.  Observe that for each edge $(r',h')\in E_d\setminus Q$, either $r'\in Q_{\RR}$ or $r'\notin Q_{\RR}$. We consider both of these cases below. Note also that each $r'\in Q_{\RR}$ is matched in $M_p^*$.

      \begin{enumerate}[left=0pt, label = Case \arabic*.]
     \item $(r',h')\in E_d\setminus Q$ and $r'\in Q_{\RR}$: Suppose $M_p^*(r')=\Hat{h}$. It is easy to observe that if $\Hat{h} \succ_{r'} h'$, then $(r',h')$ cannot be a blocking pair with respect to $M_p^*$. Therefore, let us assume that $h' \succ_{r'} \Hat{h}$. Then $h'$ is a distracting hospital for the edge $(r',\Hat{h})$. By construction, the pruned instance $G_p$ does not contain any edge $(r'',h')$ such that $r' \succeq_{h'} r''$. This implies $h'$ must be fully subscribed in $M_p^*$ with respect to its augmented quota, $q_p^*(h') \ge q_p(h')$, and all residents matched to $h'$ are strictly better than $r'$. Thus, $(r',h')$ cannot be a blocking pair with respect to $M_p^*$.

     \item $(r',h')\in E_d\setminus Q$ and $r'\notin Q_{\RR}$: In this case, at least one of the following holds: (i) $r'$ is a distracting resident, or (ii) $h'$ is a distracting hospital. If $r'$ is a distracting resident, say for an edge $(r,h)\in Q$, then $last(r')\succ_{r'} h'$. This implies $(r',h')$ cannot block $M_p^*$ because $M_p^*(r')\succ_{r'} h'$. Therefore, let us assume that $h'$ is a distracting hospital. Suppose $(r,h)\in Q$ is one of the edges for which $h'$ is distracting. This implies $h' \succ_r h$. Since $(r,h)\in M_p^*$ and $M_p^*$ is a \SSM\ in $G_p^*$, the hospital $h'$ must be fully subscribed in $M_p^*$ with respect to its augmented quota, $q_p^*(h') \ge q_p(h')$, and all residents matched to $h'$ are strictly better than $r'$. Thus, $(r',h')$ cannot be a blocking pair with respect to $M_p^*$. 
     \end{enumerate}
     Therefore, no edge in $E_d$ blocks the matching $M_p^*$ output at the end of ~\ref{itm:step3}\qed
     \end{proof}

Having proved that the matching $M_p^*$ is strongly stable in the augmented instance $G_p^*$ of $G$, we proceed to prove the optimality of $G_p^*$. 

We first show, using Lemma~\ref{cl:GpEquiG2}, that it suffices to start with the pruned instance $G_p$.

\begin{lem}\label{cl:GpEquiG2}
    Let $\tG$ be any augmented instance obtained from $G$ such that $\tG$ admits a \SSM\ $\tM$ where $Q\subseteq \tM$. Then, the following hold:
    \begin{itemize}[left=20pt]
    \item[(i)] $\tM\cap E_d=Q$. \vspace{0.05in}
    \item[(ii)] Moreover, $\tM\setminus Q$ is a \SSM\ in some augmented instance of $G_p$.
    \end{itemize}
\end{lem}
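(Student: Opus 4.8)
The plan is to prove both parts of Lemma~\ref{cl:GpEquiG2} by carefully arguing that the edges pruned away in the construction of $G_p$ can never appear in a strongly stable matching $\tM$ containing $Q$, and then that removing $Q$ from $\tM$ (together with a corresponding quota adjustment) yields a strongly stable matching in an augmented instance of $G_p$.

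For part (i), first note that $Q\subseteq E_d$ by construction (Step~2 of the pruning puts all edges incident on $Q_{\RR}$ into $E_d$), and $Q\subseteq\tM$ by hypothesis, so $Q\subseteq\tM\cap E_d$. For the reverse inclusion, I would take an arbitrary edge $(r',h')\in E_d\setminus Q$ and show $(r',h')\notin\tM$. There are two ways such an edge got into $E_d$: either $r'\in Q_{\RR}$, in which case $r'$ is already matched to its forced partner in $Q\subseteq\tM$ and hence cannot also be matched to $h'$; or $(r',h')$ was deleted in step~\ref{itm:pruned11} or step~\ref{itm:pruned12} because of some forced edge $(r,h)\in Q$. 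In the first of these subcases $r'$ is a distracting resident, so $r'\succeq_h r$ and $h\succ_{r'}h'$; if $(r',h')\in\tM$ then, since $(r,h)\in\tM$ with $r'\succeq_h r$ and $h\succ_{r'}h'=\tM(r')$, the pair $(r',h)$ would be a strong blocking pair for $\tM$ of type (i) in Definition~\ref{def:strong} — unless $h$ is fully subscribed in $\tM$ with residents all strictly preferred to $r'$; but $r\in\tM(h)$ and $r'\succeq_h r$, contradiction. The second subcase ($h'$ distracting, $h'\succ_r h$, $r\succeq_{h'}r'$) is symmetric: $(r',h')\in\tM$ would make $(r,h')$ a strong blocking pair of type (i) or (ii). Hence no edge of $E_d\setminus Q$ lies in $\tM$, giving $\tM\cap E_d=Q$.

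For part (ii), set $\tM_p=\tM\setminus Q=\tM\cap E_p$, which by part (i) is a matching using only edges of $G_p$. I would build an augmented instance $\tG_p$ of $G_p$ by setting, for each hospital $h$, $\tq_p(h)=\max\{q_p(h),\,|\tM_p(h)|\}$; since $\tM(h)=\tM_p(h)\cup(Q(h)\cap\tM(h))$ and every resident of $Q(h)$ is matched to $h$ in $\tM$, we have $|\tM(h)|=|\tM_p(h)|+|Q(h)|$, and because $\tM$ is valid in $\tG$ we get $|\tM_p(h)|=|\tM(h)|-|Q(h)|\le \tilde q(h)-|Q(h)|$, so $\tq_p(h)$ is a legitimate (finite) augmented quota and $\tG_p$ is indeed an augmented instance of $G_p$ (as $q_p(h)=q(h)-|Q(h)|$). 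It remains to check $\tM_p$ is strongly stable in $\tG_p$. Suppose $(r',h')\in E_p\setminus\tM_p$ strongly blocks $\tM_p$ in $\tG_p$. Since $(r',h')\in E_p\subseteq E\setminus E_d$, in particular $(r',h')\notin Q$ and $r'\notin Q_{\RR}$, so $\tM_p(r')=\tM(r')$; moreover any resident $r'\in\tM_p(h')$ also satisfies $r'\in\tM(h')$, and $h'$'s quota can only be larger in $\tG$ than the "residents strictly worse than the blocker" threshold forces — more precisely I would argue that a strong blocking pair for $\tM_p$ in $\tG_p$ is also a strong blocking pair for $\tM$ in $\tG$: the resident side is identical, and on the hospital side, if $h'$ witnesses blocking via some $r''\in\tM_p(h')$ with the appropriate preference relation, then $r''\in\tM(h')$ witnesses the same for $\tM$; and if instead $h'$ is under-subscribed in $\tM_p$ with respect to $\tq_p(h')$ so a $\bot$ plays the role of $r''$, then by $\tq_p(h')=\max\{q_p(h'),|\tM_p(h')|\}=|\tM_p(h')|$ this case does not arise. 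This contradicts strong stability of $\tM$ in $\tG$, completing the proof.

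The main obstacle I anticipate is the bookkeeping in part (ii) around quotas and under-subscription: one must be careful that the "implicit $\bot$" positions used in the definition of strong blocking pair do not create spurious blocking pairs in $\tG_p$ that were absent in $\tG$. The clean way around this is precisely the choice $\tq_p(h)=\max\{q_p(h),|\tM_p(h)|\}$, which makes every hospital that matters either fully subscribed in $\tM_p$ or matched to exactly its $G_p$-quota, mirroring how Algorithm~\ref{algo:MinSum} sets quotas in Lemma~\ref{lem:SSMInstance}; with that choice the reduction of a blocking pair in $\tG_p$ to one in $\tG$ goes through by direct case analysis on whether the witnessing resident is a real resident or a $\bot$.
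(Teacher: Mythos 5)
Your part (i) and the ``witnessed by a real resident'' case of part (ii) follow essentially the same route as the paper: prune-edge case analysis via distracting residents/hospitals for (i), and for (ii) the quota choice $\tq_p(h)=\max\{q_p(h),|\tM_p(h)|\}$ together with lifting a strong blocking pair of $\tM_p$ in $\tG_p$ back to one of $\tM$ in $\tG$. However, there is a genuine error in how you dispose of the under-subscription case in part (ii). You assert that if $h'$ is under-subscribed in $\tM_p$ w.r.t.\ $\tq_p(h')$ then ``by $\tq_p(h')=\max\{q_p(h'),|\tM_p(h')|\}=|\tM_p(h')|$ this case does not arise.'' That equality only holds when $|\tM_p(h')|\ge q_p(h')$; if $|\tM_p(h')|<q_p(h')$ (e.g.\ $\tM$ simply leaves $h'$ with fewer than $q_p(h')$ non-forced partners), then $\tq_p(h')=q_p(h')>|\tM_p(h')|$ and the hospital \emph{is} under-subscribed in $\tM_p$, so a blocking pair of this type is perfectly possible in $\tG_p$ and cannot be waved away. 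Your choice of $\tq_p$ eliminates over-subscription, not under-subscription.

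The case is easy to handle, and this is exactly what the paper does: if $(r',h')$ blocks $\tM_p$ via under-subscription, then $\tq_p(h')=q_p(h')=q(h')-|Q(h')|$, and since every resident of $Q(h')$ is matched to $h'$ in $\tM$ we have $|\tM(h')|=|\tM_p(h')|+|Q(h')|<q_p(h')+|Q(h')|=q(h')\le\tq(h')$, so $h'$ is also under-subscribed in $\tM$ w.r.t.\ its quota in $\tG$; combined with $h'\succ_{r'}\tM(r')$ (valid because $r'\notin Q_{\RR}$, so $\tM(r')=\tM_p(r')$), the same pair $(r',h')$ blocks $\tM$ in $\tG$, contradicting its strong stability. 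With that replacement your argument is complete and coincides with the paper's proof; as written, though, the step dismissing the $\bot$-witness case is false and leaves a hole.
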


\begin{proof}
   Recall that $Q\subseteq E_d$. Therefore, we need to show that if there exists an edge $(r',h')\in E_d\setminus Q$, then $(r',h')\notin \tM$.  Recall the resident set $Q_{\RR}$. The fact that $Q\subseteq \tM$ implies that if an edge $(r',h')\in E_d\setminus Q$ such that $r'\in Q_{\RR}$, then $(r',h')\notin \tM$.
   
   Now, let us assume that $(r',h')\in E_d\setminus Q$ is an edge such that $r'\notin Q_{\RR}$. For any such edge $(r',h')$, at least one of the following holds: (i) $r'$ is a distracting resident, or (ii) $h'$ is a distracting hospital. Let us first assume that $r'$ is a distracting resident for some edge $(r,h)\in Q$. This implies $r' \succeq_{h} r$. Also, the fact that $(r',h')\in E_d\setminus Q$ implies that $h\succ_{r'} h'$. Thus, if $(r',h')\in \tM$, then $(r',h)$ blocks the matching $\tM$ because $(r,h)\in \tM$; a contradiction. Now, let us assume that $h'$ is a distracting hospital for some edge $(r,h)\in Q$. This implies that $h' \succ_r h$. Also, since $(r',h')\in E_d$, we have $r\succeq_{h'} r'$.  Since $(r,h)\in \tM$, if $(r',h')\in \tM$, the edge $(r,h')$ blocks the matching $\tM$; a contradiction. This completes the first part of the proof. 
   
    First, we observe that $M'=\tM\setminus Q$ is indeed a matching in some augmentation of~$G_p$. 
    This is because if a hospital $h$ is such that $|M'(h)|> q_p(h)$, we set the quota of $h$ equal to $|M'(h)|$ to get an augmented instance of $G_p$. Let this augmented instance of $G_p$ be $\tG_p$ where the quota of each hospital $h$ is $\Tilde{q}_p(h)$. 
    

   Now, we show that $M' = \tM \setminus Q$ is a \SSM\ in $\tG_p$. Note that $\tG_p$ does not contain any of the edges in $E_d$. 
   Suppose, for contradiction, that $M'$ is not a \SSM\ in $\tG_p$. Then there exists a blocking edge, say $(r, h)$, with respect to $M'$ in $\tG_p$. Since $(r, h)$ is a blocking pair with respect to $M'$, we must have $h \succ_r M'(r)$ and either (i) $r \succ_h r'$ for some $r' \in M'(h)$ or (ii)  $|M'(h)| < \Tilde{q}_p(h)$. If $h \succ_r M'(r)$, then because $E_p \subseteq E$, it follows that $h \succ_r \tM(r)$. Now, if $r \succ_h r'$ for some $r' \in M'(h)$, then, as discussed above, $r \succ_h r'$ for the same $r' \in \tM(h)$. On the other hand, if $|M'(h)| < \Tilde{q}_p(h)$, then, because $|\tM(h)| - |M'(h)| = |Q(h)|$, we have $|\tM(h)| < \Tilde{q}_p(h) + |Q(h)|$. This implies $|\tM(h)| < \Tilde{q}(h)$, where $\Tilde{q}(h)$ is the quota of hospital $h$ in the augmented instance $\tG$. Therefore, $(r, h)$ is a blocking pair with respect to $\tM$ in $\tG$, contradicting the assumption that $\tM$ is a \SSM\ in $\tG$. \qed
\end{proof}

Next, we justify \ref{itm:step1} of our algorithm.

\begin{lem}\label{lem:noSolMinSumMatch}
    If there exists a hospital $h'\in\HH_d$ such that $|M_p'(h')|<q(h')$, then the given instance of the \hyperref[prob:minsumssfe]{\MINSUMSSFE} problem has no solution.
\end{lem}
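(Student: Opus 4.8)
The plan is to argue by contradiction using the Rural-Hospitals analog (Corollary~\ref{cor:fullHospitals}) established for the \hyperref[prob:minsumss]{\MINSUMSS} problem. Suppose there is a distracting hospital $h' \in \HH_d$ with $|M_p'(h')| < q(h')$, i.e., $h'$ is under-subscribed with respect to its \emph{original} quota $q(h')$ (equivalently $q_p(h') = q(h') - |Q(h')|$ after pruning, since $h'$ notionally keeps the freed-up positions). Suppose further, for contradiction, that the \hyperref[prob:minsumssfe]{\MINSUMSSFE} instance \emph{does} have a solution: some augmented instance $\tG$ of $G$ admitting a \SSM\ $\tM$ with $Q \subseteq \tM$.

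First I would invoke Lemma~\ref{cl:GpEquiG2}: from $\tM$ we obtain $M' = \tM \setminus Q$, which is a \SSM\ in some augmented instance $\tG_p$ of the pruned instance $G_p$. Now apply Corollary~\ref{cor:fullHospitals} to the \hyperref[prob:minsumss]{\MINSUMSS} instance $G_p$ (whose Algorithm~\ref{algo:MinSum} output is $G_p'$ with matching $M_p'$): the corollary says that whenever $|M_p'(h)| < q_p(h)$, every \SSM\ of every augmented instance of $G_p$ — in particular $M'$ in $\tG_p$ — matches $h$ to exactly $|M_p'(h)|$ residents. Hence $|M'(h')| = |M_p'(h')|$, and therefore $|\tM(h')| = |M'(h')| + |Q(h')| = |M_p'(h')| + |Q(h')| < q(h')$, so $h'$ is under-subscribed in $\tM$ with respect to its original quota $q(h')$ (and a fortiori with respect to its augmented quota $\tilde q(h') \ge q(h')$).

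The final step is to derive a blocking pair, contradicting strong stability of $\tM$ in $\tG$. Since $h' \in \HH_d$, there is a forced edge $(r,h) \in Q$ for which $h'$ is a distracting hospital, meaning $h' \succ_r h$. In the pruning step~\ref{itm:pruned12}, every resident $r''$ with $r \succeq_{h'} r''$ was removed from $\prefhp$; so every neighbor of $h'$ surviving in $G_p$ — hence every resident in $M'(h') \subseteq M_p'(h')$'s "type" — is strictly preferred by $h'$ to $r$, but this does not directly give a blocking pair. Instead, the cleaner argument: $(r, h') \in E$ (indeed $h' \succ_r h$ and $r$ is a neighbor of $h'$ in the original $G$), $h' \succ_r h = \tM(r)$ since $(r,h) \in \tM$, and $h'$ is under-subscribed in $\tM$ with respect to $\tilde q(h')$ by the previous paragraph; therefore $(r, h')$ is a strong blocking pair w.r.t.\ $\tM$ in $\tG$ (condition where the hospital has a vacant slot, i.e.\ a $\bot$ in $M(h')$ which every resident beats). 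This contradicts that $\tM$ is a \SSM\ in $\tG$, completing the proof.

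I expect the main subtlety to be bookkeeping the quota relationship $q_p(h') = q(h') - |Q(h')|$ and $|\tM(h')| = |M'(h')| + |Q(h')|$ correctly, so that "under-subscribed w.r.t.\ original quota $q(h')$ in $M_p'$" cleanly transfers, via Corollary~\ref{cor:fullHospitals} and Lemma~\ref{cl:GpEquiG2}, to "under-subscribed w.r.t.\ $\tilde q(h')$ in $\tM$" — which is exactly what is needed to plant the blocking pair $(r, h')$.
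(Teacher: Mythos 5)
Your overall route is the same as the paper's: reduce to the pruned instance via Lemma~\ref{cl:GpEquiG2}, use the \MINSUMSS\ invariance machinery to argue that $h'$ remains under-subscribed in the matching extracted from any candidate solution, and then exhibit $(r,h')$ as a \SBP\ because $(r,h)\in\tM$ and $h'\succ_r h$. Two steps, however, are not right as written. First, you invoke Corollary~\ref{cor:fullHospitals} for ``every augmented instance of $G_p$,'' but that corollary is stated only for \emph{optimal} augmentations $G_{opt}$; the instance $\tG_p$ obtained from an arbitrary solution via Lemma~\ref{cl:GpEquiG2}(ii) need not be optimal, and the exact equality $|M'(h')|=|M_p'(h')|$ you assert can fail for non-optimal augmentations (augmenting some hospital that residents prefer can pull residents away from $h'$, making $|M'(h')|$ strictly smaller). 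What holds for \emph{arbitrary} augmentations is Claim~\ref{cl:UnderSubscribed}, which gives $|M'(h')|\le|M_p'(h')|$ --- and this inequality is all your argument needs; it is also exactly the tool the paper's proof uses.

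Second, the quota bookkeeping you flag as the ``main subtlety'' is not actually discharged by your parenthetical ``equivalently.'' Claim~\ref{cl:UnderSubscribed} (applied to the run of Algorithm~\ref{algo:MinSum} on $G_p$) requires $|M_p'(h')|<q_p(h')$, and your concluding inequality $|M_p'(h')|+|Q(h')|<q(h')$ is likewise equivalent to $|M_p'(h')|<q_p(h')$. Since $q_p(h')=q(h')-|Q(h')|$, this is strictly stronger than the stated hypothesis $|M_p'(h')|<q(h')$ whenever the distracting hospital $h'$ also carries forced edges ($Q(h')\neq\emptyset$); the two conditions coincide only when $Q(h')=\emptyset$. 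To be fair, the paper's own proof makes the same silent identification (it applies Claim~\ref{cl:UnderSubscribed} under the hypothesis $|M_p'(h')|<q(h')$ and concludes $|\tM_p(h')|<q(h')$), so your argument is no less rigorous than the published one --- but to make the chain airtight you should either read the under-subscription hypothesis with respect to $q_p(h')$ or treat the case $Q(h')\neq\emptyset$ separately, rather than asserting the two readings are equivalent.
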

\begin{proof}
Applying Claim~\ref{cl:GpEquiG1} and Lemma~\ref{cl:GpEquiG2}, it suffices to start with the pruned graph $G_p$. Let $\tG_p$  be any (possibly) augmented instance obtained from $G_p$ such that $\tG_p$ admits a \SSM\ say $\tM_p$ and $Q\subseteq \tM_p$.  Assume that $h'$ is a distracting hospital for an edge $(r,h)\in Q$. This implies that $h' \succ_{r} h$. Let us first assume that $G_p$ admits a \SSM, say $M_p'$. The fact that for each hospital, all \SSM s fill its quota to the same extent for any instance~\cite{swat/IrvingMS00} and $|M_p'(h')|<q(h')$ implies that $h'$ is under-subscribed in each \SSM\ of $G_p$. Since $(r,h)\in \tM_p$ and $h' \succ_{r} h$, we conclude that $(r,h')$ blocks $\tM_p$. Therefore, let us assume that $G_p$ does not admit a \SSM. In this case, we execute Algorithm~\ref{algo:MinSum} on $G_p$ to get $G_p'$ and $M_p'$.  We note that for any hospital $\hat{h} \in \HH_d$ that remains under-subscribed in $M'_p$ with respect to its quota $q(\hat{h})$, Algorithm~\ref{algo:MinSum} did not augment its quota.   By Claim~\ref{cl:UnderSubscribed}, we know that $|\tM_p(h')| < q(h')$. Since $h' \succ_{r} h$ and $(r,h)\in \tM_p$, we conclude that $(r, h')$ blocks $\tM_p$. Since this is true for {\em any} augmented instance obtained from $G_p$, we conclude that no augmentation is possible to get a \SSM\ containing $Q$.\qed
 \end{proof}

\begin{lem}\label{lem:ForcedMatch}
    If our algorithm returns an augmented instance $G_p^*$, then $G_p^*$ is optimal.
\end{lem}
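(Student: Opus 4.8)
\textbf{Proof plan for Lemma~\ref{lem:ForcedMatch}.}

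The plan is to show that the total quota increase incurred by our algorithm (beyond accounting for the forced edges) matches a lower bound that holds for every augmented instance admitting a \SSM\ containing $Q$. First I would fix any augmented instance $\tG$ of $G$ that admits a \SSM\ $\tM$ with $Q\subseteq\tM$, and let $\ell^*$ denote its total quota increase $\sum_h(\tilde q(h)-q(h))$. By Claim~\ref{cl:GpEquiG1} and Lemma~\ref{cl:GpEquiG2}, I may restrict attention to the pruned instance $G_p$: the matching $\tM\setminus Q$ is a \SSM\ in some augmentation $\tG_p$ of $G_p$, and the quota increase of $\tG_p$ over $G_p$ equals $\ell^*$ minus the contribution forced by the edges of $Q$ (the $|Q(h)|$ units per hospital that we pre-subtracted when forming $q_p(h)$, which must be paid in every feasible solution). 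So it suffices to compare the augmentation our algorithm performs on $G_p$ with the optimal augmentation of $G_p$ subject to admitting a \SSM\ that, together with the $last(\cdot)$-assignments forced on $\RR_d$, is strongly stable in the original (un-pruned) sense.

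Next I would separate the two sources of extra augmentation in our algorithm: (a) the augmentation done by Algorithm~\ref{algo:MinSum} when it is invoked on $G_p$, and (b) the single-unit increases to $last(r')$ for each $r'\in\RR_d^u$ in \ref{itm:step2}. For part (a), Lemma~\ref{lem:optimalK} already tells us Algorithm~\ref{algo:MinSum} performs the minimum possible augmentation to make $G_p$ admit any \SSM, and by the Rural-Hospitals analog (Theorem~\ref{theo:MinSumProp} and Corollary~\ref{cor:fullHospitals}) the set of residents matched, and the deficiency of each under-subscribed hospital, is invariant over all optimal augmentations of $G_p$ — so no feasible solution for $G_p$ can do better than $M_p'$ on this count. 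For part (b), I would argue that every feasible solution $\tG_p$ must also match each $r'\in\RR_d$ to a hospital at least as good as $last(r')$ (as observed in \ref{itm:step2}, otherwise $(r',last(r'))$ blocks), and in particular cannot leave any $r'\in\RR_d^u$ unmatched; combining this with Corollary~\ref{cor:fullHospitals} — which pins down that the hospitals under-subscribed in $M_p'$ stay under-subscribed to exactly the same extent in every \SSM\ of every optimal augmentation — forces $\tG_p$ to spend at least one extra unit of capacity per resident in $\RR_d^u$ beyond what Algorithm~\ref{algo:MinSum} already spent. Summing, the augmentation of $\tG_p$ over $G_p$ is at least $|\RR_d^u|$ more than the minimum from part (a), which is exactly what \ref{itm:step2} adds, so $G_p^*$ is optimal; adding back the unavoidable $\sum_h|Q(h)|$ for \ref{itm:step3} gives optimality of $G_p^*$ as an augmentation of $G$.

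I expect the main obstacle to be the interaction between parts (a) and (b): a priori, a cleverer solution might leave some $r'\in\RR_d^u$ matched to a hospital $h$ that is \emph{not} the one it would go to under $last$, absorbing the cost into an augmentation that Algorithm~\ref{algo:MinSum} would not have made, so that the ``extra'' cost is not simply additive over the two phases. Ruling this out is exactly where Corollary~\ref{cor:fullHospitals} is essential: any hospital that was under-subscribed in $M_p'$ must remain under-subscribed (to the same degree) in \emph{every} \SSM\ of \emph{every} optimal augmentation of $G_p$, so a resident in $\RR_d^u$ cannot be absorbed ``for free'' into such a hospital, and any hospital that was full in $M_p'$ already has its capacity fully used — hence accommodating an extra resident there genuinely costs a new unit. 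Care is also needed to handle the possibility that $last(r')$ coincides for several residents in $\RR_d^u$ or coincides with a forced-edge hospital, but these only make the lower-bound argument easier, since the demand on that hospital is then the sum of all these unavoidable requirements.
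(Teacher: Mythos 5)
Your proposal is correct and takes essentially the same route as the paper: reduce to the pruned instance via Claim~\ref{cl:GpEquiG1} and Lemma~\ref{cl:GpEquiG2}, use Theorem~\ref{thm:MinSum} for optimality of the Algorithm~\ref{algo:MinSum} phase on $G_p$, and use the rural-hospitals invariance (Claims~\ref{cl:mustBeMatched}, \ref{cl:UnderSubscribed} and Corollary~\ref{cor:fullHospitals}) to force one extra capacity unit per resident of $\RR_d^u$ in every feasible solution, which is exactly what \ref{itm:step2} adds. One harmless imprecision: the pruned-level augmentation of a feasible solution is only \emph{at most} $\ell^*$ (the $|Q(h)|$ units are covered by the original quotas and exactly cancelled by the add-back in \ref{itm:step3}), not ``equal to $\ell^*$ minus the forced contribution,'' but the inequality goes in the direction your comparison needs, so the argument stands.
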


\begin{proof}
        As mentioned above, it suffices to start with the pruned graph $G_p$. We also note that if the given instance of the \hyperref[prob:minsumssfe]{\MINSUMSSFE} problem admits a solution, then each $r\in\RR_d^u$ (unmatched distracting resident) must be matched to one of its neighbors $h$ such that $h\succeq_r last(r)$. 

            
            By Theorem~\ref{thm:MinSum}, we know that $G_p'$ is an optimal augmented instance obtained from $G_p$, such that $G_p'$ admits a \SSM. If $\RR_d^u = \emptyset$, then there is nothing to prove as optimality directly follows by Theorem~\ref{thm:MinSum}. Therefore, assume that $\RR_d^u \neq \emptyset$ and hence, some resident $r\in\mathcal{R}_d^u$ remains unmatched in $M_p'$. This implies that every neighbor $h$ of each $r\in\mathcal{R}_d^u$ in $G_p'$ (that is, $h$ such that $h\succeq_r last(r)$) is fully subscribed with respect to  $q_p'(h) \ge q_p(h)$. By Corollary~\ref{cor:fullHospitals}, for $r\in\mathcal{R}_d^u$, for a neighbor $h$ such that $h\succeq_r last(r)$, we have $|\widetilde{M}_p(h)| \ge q_p(h)$. Thus, $\widetilde{G}_p$ and $\widetilde{M}_p$  must use at least $|\mathcal{R}_d^u|$ many extra capacities over those in $G_p'$. 
       For each $r\in \RR_d^u$, we match $r$ to the corresponding least-preferred hospital $last(r)$ and increase the capacity of $last(r)$ by one. Therefore, the total capacity increase in $G_p^*$, across all hospitals, is at most $|\RR_d^u|$ over those in $G_p'$. Thus, we conclude that $M_p^*$ and $G_p^*$ are optimal solutions for the given instance. 
        This completes the proof of this lemma.\qed
 \end{proof}

 Using Lemma~\ref{lem:noSolMinSumMatch} and Lemma~\ref{lem:ForcedMatch} we conclude Theorem~\ref{thm:ForcedEdges}.

\section{MINSUM-COST Problem}\label{sec:costVersion}

In this section, we consider the \hyperref[prob:minsumcost]{\MINSUMCOST} problem and show that this problem is NP-hard even when the costs of the hospitals are $0$ or $1$. We prove this hardness result by a reduction from \MSAT. 
The \MSAT\ is a variant of the Boolean satisﬁability problem in which the input is a conjunction of clauses. Each clause is a disjunction of exactly three variables, and no variable appears in negated form. The goal is to determine whether there exists a truth assignment in which exactly one variable per clause is set to true. This problem is known to be NP-complete~\cite{schaefer1978complexity,garey1979NPcomplete}, even when each variable occurs in at most three clauses~\cite{denman2009using}. 

\subsection{Reduction} 

Let $\II$ be an instance of the \MSAT, where each variable occurs in at most three clauses.  Let $\{X_1,X_2,\ldots,X_{\beta}\}$ be the set of variables and $\{C_1,C_2,\ldots,C_{\alpha}\}$ be the set of clauses in $\II$, for non-negative integers $\alpha$ and $\beta$.

Given $\II$, we construct an instance $G=(\RR\cup\HH,E)$ of the \hyperref[prob:minsumcost]{\MINSUMCOST} problem with quota $q(h)=1$ for each $h\in\HH$ such that $G$ does not admit a \SSM. We also associate a cost $c(h) \in \{0,1\}$ with each $h\in \HH$. We show that there exists an augmented instance $G'=(\RR\cup\HH,E)$ with total augmentation cost zero that admits a \SSM\ if and only if there exists a truth assignment for $\II$ in which exactly one variable per clause is set to true.

For every variable $X_p$ in the instance $\II$, the reduced instance $G$ contains one resident $a_p$ and one global hospital $v_p$. The resident $a_p$ is connected only to the hospital $v_p$ in $G$. The hospital $v_p$ is connected to the resident $a_p$ as well as to additional residents, which we describe below. Corresponding to each clause in $\II$, the reduced instance $G$ contains a clause gadget.

\begin{figure}[t]
    \begin{subfigure}{.4\textwidth}
         \centering
            \begin{tabular}{ll}
			$b_i^s$ :& $v_i, w^s$\\[2pt]
                $b_j^s$ :& $v_j, w^s$\\[2pt]
                $b_k^s$ :& $v_k, w^s$\\[2pt]
			$d_1^s$ :& $w^s$\\[2pt]
                $d_2^s$ :& $w^s$\\[2pt]
		\end{tabular}
            \caption{}
        \label{subfig:prefA}
    \end{subfigure}%
    \begin{subfigure}{.4\textwidth}
        \centering
            \begin{tabular}{ll}
                & \\[2pt]  
                & \\[2pt] 
                $w^s$ :&  $(b_i^s, b_j^s, b_k^s), (d_1^s,d_2^s)$\\[2pt]  
                & \\[2pt]  
                & \\[2pt]  
            \end{tabular}
            \caption{}
            \label{subfig:prefB}
\end{subfigure}
\caption[Gadget used in the hardness reduction for the \MINSUMCOST\ problem]{\label{fig:preflistsMinCost}%
Gadget $G_s$ used in the hardness reduction for the \hyperref[prob:minsumcost]{\MINSUMCOST} problem (i) Preference lists of residents in gadget \gt{s}. (ii) Preference list of the hospital in gadget \gt{s}.
}
\end{figure}

\vspace{0.1in}

\noindent{\bf Clause gadget:} Let \ct{s} = $(X_i \vee X_j \vee X_k)$ be a clause in the instance $\II$. Corresponding to this clause, there is a gadget \gt{s} in the reduced instance $G$. The gadget \gt{s} consists of a resident set $\RR_s = \{b_i^s, b_j^s, b_k^s, d_1^s, d_2^s\}$ and a hospital set $\HH_s = \{w^s\}$. The preference lists of residents and hospitals in \gt{s} are given in Fig.~\ref{fig:preflistsMinCost}.
For $p\in\{i,j,k\}$, the preference list of a resident $b_p^s$ corresponding to the variable $X_p$  consists of two hospitals -- the global hospital $v_p$ and the hospital $w^s$ which is internal to the gadget \gt{s}. Assume that in $\II$, the variable $X_i$ appears in three clauses, namely \ct{s}, $C_{i_1}$, and $C_{i_2}$; the variable $X_j$ appears in \ct{s}, $C_{j_1}$, and $C_{j_2}$; and the variable $X_k$ appears in \ct{s}, $C_{k_1}$, and $C_{k_2}$. Then the preference lists of residents $a_i,a_j,a_k$ and global hospitals $v_i, v_j, v_k$ are as shown in Fig.~\ref{fig:preflists2MinCost}. In Fig.~\ref{fig:preflists2MinCost}, we assume that the variable $X_i$ appears in three clauses. If $X_i$ appears in only one clause $C_s$, then $v_i$ ranks $a_i$ at rank 1, and $b_i^s$ at rank 2. That is, the preference list of $v_i$ is {\sf Pref($v_i$)}= $a_i, b_i^s$. If $X_i$ appears in exactly two clauses, say $C_s$ and $C_{i_1}$, then $v_i$ ranks $a_i$ at rank 1 and $b_i^s, b_i^{i_1}$ at rank 2. That is, the preference list of $v_i$ is {\sf Pref($v_i$)}= $a_i, (b_i^s, b_i^{i_1})$.

\begin{figure}
\begin{subfigure}{.4\textwidth}
         \centering
    \begin{minipage}{\linewidth}
    \begin{eqnarray*}
    a_i &:& v_i\\
    a_j &:& v_j\\
    a_k &:& v_k\\
    \end{eqnarray*}
    \end{minipage}
    \caption{}
            \label{subfig:prefA2}
\end{subfigure}
    \begin{subfigure}{.4\textwidth}
        \centering
    \begin{minipage}{\linewidth}
    \begin{eqnarray*}
    v_i&:&\  a_i, (b_i^s, b_i^{i_1}, b_i^{i_2}) \\
    v_j&:&\  a_j, (b_j^s, b_j^{j_1}, b_j^{j_2}) \\
    v_k&:&\  a_k, (b_k^s, b_k^{k_1}, b_k^{k_2}) \\
    \end{eqnarray*}
    \end{minipage}
    \caption{}
            \label{subfig:prefB2}
\end{subfigure}
\caption[Preferences of global vertices in the hardness reduction for the \MINSUMCOST\ problem ]{\label{fig:preflists2MinCost}%
Preferences of global vertices in the hardness reduction for the \hyperref[prob:minsumcost]{\MINSUMCOST} problem. We assume that the variables $X_i,X_j$ and $X_k$ appear in three clauses. (i) Preference lists of the residents $a_i,a_j$ and $a_k$. (ii)~Preference list of global hospitals $v_i, v_j$ and  $v_k$.
}
\end{figure}

\vspace{0.1in}

\noindent{\bf Augmentation cost:} 
The augmentation cost for each hospital in the reduced instance $G$ is defined as follows: 

\begin{itemize}
    \item[1.] $c(v_p) = 0$ for all $p \in \{1, 2, \ldots, \beta\}$
    \item[2.] $c(w^s) = 1$ for all $s \in \{1, 2, \ldots, \alpha\}$
\end{itemize}

 This completes the description of our reduction.

\subsection{Correctness}
We claim that the reduced instance $G$ does not admit a \SSM. Recall that the quota of each hospital is one. Any \SSM\ $M$ in the reduced instance $G$ must match each resident $a_p$ to the corresponding global hospital $v_p$, as otherwise, $(a_p,v_p)$ strongly blocks $M$. For any $s \in \{1, 2, \ldots, \alpha\}$, the matching $M$ cannot leave $w^s$ unmatched; otherwise, some pair $(d_t^s, w^s)$ for $t \in \{1, 2\}$ would form a \SBP. Since $w^s$ has a quota of one, it cannot accommodate all three $b$-residents in the gadget \gt{s}. This implies that there exists a $b$-resident, say $b_j^s$, who is not matched to $w^s$ in $M$. As each global hospital has a quota of one and is already matched with its top-ranked $a$-resident, the resident $b_j^s$ remains unmatched in $M$. Therefore, the pair $(b_j^s, w^s)$ strongly blocks $M$. Hence, the reduced instance $G$ does not admit a \SSM.

Recall that $\II$ is an instance of \MSAT. A satisfying assignment for an instance of \MSAT\ is an assignment of variables such that for each clause, exactly one variable is set to true.
 
\begin{lem}\label{lem:MinCostRed2}
    If $\II$ admits a satisfying assignment, then there exists an instance $G'$ obtained from $G$ with an augmentation cost of zero such that $G'$ admits a \SSM.
\end{lem}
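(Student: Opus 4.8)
The plan is to exhibit, given a satisfying assignment of $\II$, an explicit augmented instance $G'$ together with a \SSM\ $M'$ in it, where the only quota increases are on the global hospitals $v_p$ (cost $0$) and no $w^s$ is augmented (so the total cost is $0$). First I would use the satisfying assignment $\phi$ to decide, for each clause gadget \gt{s} with clause $C_s = (X_i \vee X_j \vee X_k)$, which $b$-resident is ``absorbed'' by $w^s$: since exactly one of $X_i,X_j,X_k$ is true under $\phi$, say $X_i$, I match $b_i^s$ to $w^s$, and I plan to match $b_j^s$ to $v_j$ and $b_k^s$ to $v_k$ (the global hospitals of the two false variables). The $d$-residents $d_1^s,d_2^s$ are left unmatched; since $w^s$ is now occupied by $b_i^s$ who is strictly preferred by $w^s$ over both $d$-residents, neither $(d_1^s,w^s)$ nor $(d_2^s,w^s)$ is a \SBP.

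Next I would set the quotas: keep $q'(w^s)=1$ for all $s$, and for each global hospital $v_p$ set $q'(v_p)$ equal to $1$ (for $a_p$) plus the number of clauses in which $X_p$ is false, i.e. the number of gadgets \gt{s} in which we chose to match $b_p^s$ to $v_p$. Since each variable occurs in at most three clauses, this is at most an increase of $3$, but the key point is only that $c(v_p)=0$, so the augmentation cost is $0$ regardless. The matching $M'$ then consists of: every $(a_p,v_p)$; every $(b_p^s,v_p)$ for which $X_p$ is false in clause $C_s$; and every $(b_i^s,w^s)$ for the unique true variable $X_i$ of clause $C_s$. By construction each $v_p$ is filled exactly to $q'(v_p)$ and each $w^s$ to $1$, so $M'$ is a valid matching in $G'$.

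It then remains to verify that $M'$ has no \SBP\ in $G'$. I would go through the possible blocking edges by type. An edge $(a_p,v_p)$ cannot block since $a_p$ is matched to $v_p$, its only neighbor, and $M'$ matches it. For an edge $(b_p^s,v_p)$ with $X_p$ true in $C_s$ (so $b_p^s$ is matched to $w^s$, which it ranks second): $b_p^s$ does strictly prefer $v_p$, so I must check that $v_p$ is fully subscribed with residents it strictly prefers to $b_p^s$ — and indeed $v_p$ ranks $a_p$ first and all $b$-residents tied at rank $2$, so $v_p$ cannot strictly prefer $b_p^s$ to anyone in $M'(v_p)$, hence there is no \SBP\ of type (ii); and since $M'(v_p)$ contains $a_p$ (filling one slot) and the remaining slots are exactly filled, type (i) also fails because $v_p$ is full with partners each $\succeq_{v_p} b_p^s$. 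For an edge $(b_p^s,w^s)$ with $X_p$ false in $C_s$ (so $b_p^s$ is matched to $v_p$, which it ranks first): here $b_p^s$ does not strictly prefer $w^s$ over $v_p$, so only type (ii) could apply, but $w^s$ is matched to some $b_i^s$ whom $w^s$ ranks in the same tie as $b_p^s$, hence $w^s$ does not strictly prefer $b_p^s$ either — no \SBP. Finally $(d_t^s,w^s)$ was already handled above. The main obstacle — though it is really just bookkeeping once the construction is fixed — is ensuring the case analysis on the two kinds of $b$-edges is airtight, in particular using the tie structure in $\prefbi$ and in \prefv\ to rule out every \SBP\ of type (i) and type (ii); the reason it works is precisely that a ``true-variable'' $b$-resident is displaced only to a hospital that ties it with its global hospital's other candidates, so no strict preference on the hospital side is ever created.
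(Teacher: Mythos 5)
Your construction and verification coincide with the paper's proof: true variables keep $q'(v_p)=1$ and send $b_p^s$ to $w^s$, false variables get $q'(v_p)=1+\delta(v_p)$ (cost $0$) and absorb their $b$-residents, and the blocking-pair check proceeds by the same case analysis. One minor wording slip: in the true-variable case, ruling out a type-(i) blocking pair $(b_p^s,v_p)$ requires that every partner in $M'(v_p)$ be \emph{strictly} preferred to $b_p^s$ (not merely $\succeq_{v_p}$), which indeed holds here since $q'(v_p)=1$ and $M'(v_p)=\{a_p\}$, so the conclusion is unaffected.
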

\begin{proof}
    Given a valid satisfying assignment for the instance $\II$ of the \MSAT, we construct the augmented instance $G'$ and a matching $M'$ as follows. For each variable $X_p$, if $X_p$ is set to true, we set the quota $q(v_p) = 1$; otherwise, we augment the quota of $v_p$ by $\delta(v_p)$, where $\delta(v_p)$ denotes the number of times the variable $X_p$ occurs in $\II$. That is, we set $q(v_p) = 1 + \delta(v_p)$, where $1$ is the original quota. Additionally, we set $q(w^s) = 1$ for all $s \in \{1, 2, \ldots, \alpha\}$. Since the augmentation cost $c(v_p) = 0$ for all $p \in \{1, 2, \ldots, \beta\}$, the total augmentation cost is zero.
    
    Next, we show that the augmented instance $G'$ admits a strongly stable matching. Let $M' = M_1 \cup M_2 \cup M_3$, where $M_1$, $M_2$, and $M_3$ are defined as follows: 

      \begin{eqnarray*} 
            M_1&=&\ \bigcup_{p=1}^{\beta} \{ (a_p,v_p)\}\\
            M_2&=&\ \bigcup_{v_p\ :\  q(v_p)>1} \{ (b_p^s,v_p)\ |\  (b_p^s,v_p)\in E \text{ for some } s\}\\
            M_3&=&\ \bigcup_{v_p\ :\  q(v_p)=1} \{ (b_p^s,w^s)\ | \  (b_p^s,w^s)\in E \text{ for some } s\}
        \end{eqnarray*}
        
    We now verify that the matching $M' = M_1 \cup M_2 \cup M_3$ is a valid matching. 
    \begin{itemize}
        \item We note that each resident is matched to at most one hospital. It is clear that each $a$-resident, say $a_p$, is matched to the corresponding global hospital $v_p$ (as defined by $M_1$). We claim that for each resident $b_p^s$, we have $|M'(b_p^s)| = 1$. This follows from the following argument. For any clause $C_s = (X_i \vee X_j \vee X_k)$, exactly one variable is set to true. Consequently, exactly one of the three global hospitals $v_i, v_j, v_k$ has its quota set to one, while the remaining two have their quotas set to their degrees in $G'$. Therefore, if $q(v_p) = 1$, then $M'(b_p^s) = \{w^s\}$ (as defined by $M_3$), and hence $|M'(b_p^s)| = 1$. On the other hand, if $q(v_p) > 1$, then $M'(b_p^s) = \{v_p\}$ (as defined by $M_2$), so again $|M'(b_p^s)| = 1$. Thus, every $a$-resident and $b$-resident is matched to exactly one hospital. It is trivial to observe that no $d$-resident is matched in $M'$.
        \item We also note that the matching $M'$ respects the quotas of all hospitals in $G'$. Specifically, (i) if a global hospital $v_p$ has $q(v_p) = 1$, then $M'(v_p)=\{a_p\}$ (as defined by $M_1$ and $M_3$); and (ii) if $v_p$ has $q(v_p) > 1$, then $|M'(v_p)|\le q(v_p)$, since in this case $q(v_p)$ equals the degree of $v_p$ in $G'$. Furthermore, as defined by $M_3$, each $w$-hospital is matched to exactly one resident.
    \end{itemize}

  To establish that $M'$ is strongly stable, we prove that no resident in $G'$ is involved in a strong blocking pair with respect to $M'$. Each $a_p$ is matched to its rank-1 hospital, and thus no $a$-resident participates in a strong blocking pair.
    
   We now argue that no $b$-resident participates in a strong blocking pair. Let $C_s = (X_i \vee X_j \vee X_k)$ be an arbitrary clause in $\II$. Without loss of generality, suppose that in the satisfying assignment, the variable $X_k$ is set to true. Then $q(v_k) = 1$, and by construction of $M'$, we have $M'(b_k^s) = w^s$. Since for the resident $b_k^s$, the rank-1 hospital $v_k$ is fully subscribed with a more preferred resident $a_k$, we conclude that $b_k^s$ does not participate in any strong blocking pair. The other two $b$-residents, $b_i^s$ and $b_j^s$, are matched to their rank-1 hospitals ($v_i$ and $v_j$, respectively), and thus they also do not participate in any strong blocking pairs. The $d$-residents in the gadget \gt{s} cannot block $M'$ since the only hospital they are connected to, $w^s$, is already fully subscribed with a better-preferred resident. Therefore, the matching $M'$ is a strongly stable matching.\qed
\end{proof}

To prove the other direction, we require the following two claims.

\begin{cl}{\label{cl:wsOne}} 
Let $G'$ be an instance obtained from $G$ with zero total augmentation cost. Assume that $G'$ admits a strongly stable matching $M'$. Then, for any gadget $G_s$, we have 
$|M'(w^s)|=1$ and $M'(w^s) \in \{b_i^{s}, b_j^{s}, b_k^{s}\}$.

\end{cl}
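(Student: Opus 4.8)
The plan is to argue that the augmentation cost constraint forces $q'(w^s) = 1$, and then that strong stability forces $w^s$ to be matched to one of the three $b$-residents of the gadget. First I would observe that since $c(w^s) = 1$ for every $s$, and the total augmentation cost of $G'$ is zero, no hospital $w^s$ can have its quota increased; hence $q'(w^s) = q(w^s) = 1$ for all $s$. In particular $|M'(w^s)| \le 1$.

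Next I would rule out $|M'(w^s)| = 0$, i.e.\ that $w^s$ is left unmatched. The residents connected to $w^s$ are exactly $b_i^s, b_j^s, b_k^s, d_1^s, d_2^s$, and $w^s$ ranks the three $b$-residents (tied) at rank~1 and $d_1^s, d_2^s$ (tied) at rank~2. If $w^s$ is unmatched, consider any $d$-resident, say $d_1^s$: its only neighbour is $w^s$, so $M'(d_1^s) = \bot$, and $w^s \succ_{d_1^s} \bot$ while $w^s$ is under-subscribed w.r.t.\ $q'(w^s) = 1$. Thus $(d_1^s, w^s)$ is a \SBP\ w.r.t.\ $M'$ (condition (i) of Definition~\ref{def:strong}, with the under-subscription of $w^s$ playing the role of a partner strictly worse than $d_1^s$), contradicting the strong stability of $M'$. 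Hence $|M'(w^s)| = 1$.

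Finally I would show $M'(w^s) \in \{b_i^s, b_j^s, b_k^s\}$ by excluding the possibility $M'(w^s) \in \{d_1^s, d_2^s\}$. Suppose $M'(w^s) = d_1^s$ (the case $d_2^s$ is symmetric). Consider any $b$-resident of the gadget, say $b_i^s$; its preference list is $v_i, w^s$. If $b_i^s$ is matched to $w^s$ this contradicts $M'(w^s) = d_1^s$ and $|M'(w^s)| = 1$, so $M'(b_i^s) \in \{v_i, \bot\}$, and in either case $w^s \succeq_{b_i^s} M'(b_i^s)$ (since $w^s$ is the last hospital on $b_i^s$'s list and $b_i^s$ prefers any neighbour to $\bot$). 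Moreover $w^s$ strictly prefers $b_i^s$ (rank~1) to $d_1^s$ (rank~2), so $b_i^s \succ_{w^s} d_1^s = M'(w^s)$. Therefore $(b_i^s, w^s)$ is a \SBP\ w.r.t.\ $M'$ by condition (ii) of Definition~\ref{def:strong}, again contradicting strong stability. Hence $M'(w^s)$ is one of $b_i^s, b_j^s, b_k^s$.

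The only mildly delicate point is handling the under-subscription case correctly when invoking the definition of a \SBP: one must recall the paper's convention that an under-subscribed hospital is implicitly matched to copies of $\bot$, and that every vertex strictly prefers any neighbour to $\bot$, so that an under-subscribed $w^s$ does have a matched "partner" (namely $\bot$) that every neighbouring resident strictly dominates. Everything else is a direct case check against Definition~\ref{def:strong}.
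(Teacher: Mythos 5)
Your first two steps are fine and match the paper: the unit cost $c(w^s)=1$ together with zero total augmentation cost forces $q'(w^s)=1$, and an unmatched $w^s$ is excluded because the $d$-residents (whose only neighbour is $w^s$) would form a strong blocking pair with the under-subscribed $w^s$. The problem is in your third step. You assert that if $M'(w^s)=d_1^s$ then for a $b$-resident, say $b_i^s$, we have $w^s \succeq_{b_i^s} M'(b_i^s)$ ``in either case'' $M'(b_i^s)\in\{v_i,\bot\}$. That is false when $M'(b_i^s)=v_i$: the preference list of $b_i^s$ is $v_i, w^s$, so $b_i^s$ \emph{strictly prefers} $v_i$ to $w^s$ (your own parenthetical, that $w^s$ is last on the list, supports the opposite inequality), and then condition~(ii) of Definition~\ref{def:strong} does not apply and $(b_i^s,w^s)$ need not block. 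Moreover this case cannot be dismissed: the hospitals $v_i,v_j,v_k$ have augmentation cost $0$, so in $G'$ their quotas may be increased for free and all three $b$-residents of the gadget may well be matched to their rank-1 hospitals, leaving no blocking pair involving a $b$-resident.

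The correct (and simpler) way to exclude $M'(w^s)\in\{d_1^s,d_2^s\}$, which is what the paper does, uses the \emph{other} $d$-resident: if $M'(w^s)=d_1^s$, then since $q'(w^s)=1$ and $d_2^s$'s only neighbour is $w^s$, the resident $d_2^s$ is unmatched; as $w^s$ is indifferent between $d_1^s$ and $d_2^s$ (they are tied at rank~2 in $\prefh[w^s]$), the pair $(d_2^s,w^s)$ satisfies condition~(i) of Definition~\ref{def:strong} ($w^s \succ_{d_2^s} \bot$ and $d_2^s \succeq_{w^s} d_1^s \in M'(w^s)$), contradicting strong stability. With your step~3 replaced by this argument the proof goes through; as written, it has a genuine gap.
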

\begin{proof}
    Since the augmentation cost of $w^s$ is one, and the instance $G'$ is obtained with zero total augmentation cost, it follows that the quota of $w^s$ is not augmented in $G'$. Given that the original quota of $w^s$ is one, we have $|M'(w^s)| \le 1$. Moreover, $M'$ cannot leave $w^s$ unmatched, nor can it match $w^s$ to one of the $d$-vertices --- doing so would leave some $d$-vertex unmatched, which, together with $w^s$, would form a strong blocking pair with respect to $M'$. Therefore, $|M'(w^s)|=1$ and $M'(w^s) \in \{b_i^{s}, b_j^{s}, b_k^{s}\}$.\qed
\end{proof}

\begin{cl}{\label{cl:bpConsistency}}
    Let $G'$ be an instance obtained from $G$ with zero total augmentation cost. Assume that $G'$ admits a strongly stable matching $M'$. For any gadget \gt{s}, if  $(b_i^s,w^s)\in M'$, then $(b_i^{i_1},w^{i_1})\in M'$ and $(b_i^{i_2},w^{i_2})\in M'$.
\end{cl}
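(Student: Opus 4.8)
\textbf{Proof proposal for Claim~\ref{cl:bpConsistency}.}

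The plan is to argue by contradiction, exploiting the tight structure of the global hospitals' quotas. Suppose $(b_i^s, w^s) \in M'$ but, say, $(b_i^{i_1}, w^{i_1}) \notin M'$ (the case of $i_2$ is symmetric). By Claim~\ref{cl:wsOne} applied to the gadget $G_{i_1}$, the hospital $w^{i_1}$ is matched to exactly one $b$-resident among $\{b_i^{i_1}, b_j^{i_1}, b_k^{i_1}\}$; since this resident is not $b_i^{i_1}$, the resident $b_i^{i_1}$ is matched elsewhere or unmatched. The only other hospital on $\prefbi[i_1]$, i.e.\ in the list of $b_i^{i_1}$, is the global hospital $v_i$. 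So either $M'(b_i^{i_1}) = v_i$, or $b_i^{i_1}$ is unmatched; in the latter case $(b_i^{i_1}, w^{i_1})$ would be a strong blocking pair (since $w^{i_1}$ is matched to a $b$-resident, whom $w^{i_1}$ ties with $b_i^{i_1}$, and $b_i^{i_1}$ is unmatched) — contradiction. Hence $M'(b_i^{i_1}) = v_i$.

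The key step is then to pin down the quota of $v_i$. Since $(b_i^s, w^s) \in M'$, the resident $b_i^s$ is \emph{not} matched to $v_i$. I claim this forces $v_i$ to have its quota \emph{not} augmented, i.e.\ $q'(v_i) = 1$: indeed, $b_i^s$ prefers $v_i$ to $w^s$ (since $v_i$ is rank~1 and $w^s$ rank~2 on $b_i^s$'s list), so if $v_i$ were under-subscribed w.r.t.\ $q'(v_i)$, the pair $(b_i^s, v_i)$ would strongly block $M'$; thus $v_i$ is fully subscribed. But $v_i$ ranks $a_i$ strictly first and, by the argument preceding Lemma~\ref{lem:MinCostRed2}, any strongly stable matching matches $(a_i, v_i)$ (else $(a_i, v_i)$ blocks). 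Now, if $q'(v_i) \ge 2$, then $v_i$ being fully subscribed means it is matched to $a_i$ together with at least one $b_i^{\cdot}$-resident at rank~2; but a strong blocking pair check: the resident $b_i^s$, who is at rank~2 on $v_i$'s list and ties with any other rank-2 resident matched to $v_i$, prefers $v_i$ to $M'(b_i^s) = w^s$ — so $(b_i^s, v_i)$ would be a strong blocking pair by condition~(i) of Definition~\ref{def:strong} (since some $r' \in M'(v_i)$ has $b_i^s \succeq_{v_i} r'$, namely the other rank-2 resident). This contradicts strong stability. Hence $q'(v_i) = 1$, so $M'(v_i) = \{a_i\}$ and $v_i$ cannot be matched to $b_i^{i_1}$ — contradicting $M'(b_i^{i_1}) = v_i$ derived above.

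This contradiction shows $(b_i^{i_1}, w^{i_1}) \in M'$; the identical argument with $i_1$ replaced by $i_2$ gives $(b_i^{i_2}, w^{i_2}) \in M'$, completing the proof. The main obstacle is the second paragraph: carefully justifying, via the strong-blocking-pair definition with ties, that the presence of $(b_i^s, w^s)$ in $M'$ leaves $v_i$ with no "room" — one must check both that $v_i$ is full (using condition~(i) with a resident weakly preferred to an incumbent) and that fullness combined with $(a_i, v_i) \in M'$ forces $q'(v_i) = 1$. Care is needed because $v_i$'s rank-2 residents form a tie, so matching $v_i$ to $a_i$ plus one rank-2 resident does not by itself block — it is the \emph{other, unmatched} rank-2 resident $b_i^s$ that supplies the blocking pair.
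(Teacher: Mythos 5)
Your proof is correct and follows essentially the same route as the paper: derive $M'(b_i^{i_1}) = v_i$ (since $b_i^{i_1}$ can neither be unmatched nor matched to $w^{i_1}$), and then exhibit the strong blocking pair $(b_i^s, v_i)$, which arises because $v_i$ is indifferent between $b_i^s$ and a matched rank-2 resident while $b_i^s$ strictly prefers $v_i$ to $w^s$. The only difference is that your detour through the quota of $v_i$ (showing $v_i$ is fully subscribed and then $q'(v_i)=1$) is superfluous: once $M'(b_i^{i_1}) = v_i$ is established, $(b_i^s, v_i)$ blocks immediately by condition (i) of Definition~\ref{def:strong}, regardless of $q'(v_i)$.
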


\begin{proof}
    Assuming that $(b_i^s,w^s)\in M'$, we show that $(b_i^{i_1},w^{i_1})\in M'$.  Suppose for contradiction that $M'(b_i^{i_1})\neq w^{i_1}$. We note that $b_i^{i_1}$ cannot remain unmatched in $M'$; otherwise, $(b_i^{i_1}, w^{i_1})$ is a \SBP\ with respect to $M'$. This implies that $M'(b_i^{i_1})=v_i$. Recall that the hospital $v_i$ is indifferent between residents $b_i^s$ and $M'(v_i)=b_i^{i_1}$. The fact that $v_i \succ_{b_i^s} w^s$ and $M'(b_i^s)=w^s$ implies that the edge $(b_i^s,v_i)$ is a \SBP\ with respect to $M'$. The proof for $(b_i^{i_2},w^{i_2})\in M'$ is analogous.\qed
\end{proof}

\begin{lem}\label{lem:MinCostRed1}
    If there exists an instance $G'$ obtained from $G$ with zero total augmentation cost such that $G'$ admits a \SSM, say $M'$, then
    the instance $\II$ admits a satisfying assignment.
\end{lem}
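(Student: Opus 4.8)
The plan is to extract a satisfying assignment for $\II$ from the strongly stable matching $M'$ in the zero-cost augmented instance $G'$ in the obvious way: set $X_p$ to \emph{true} if and only if the quota of $v_p$ was not augmented, i.e.\ $q'(v_p)=1$ (equivalently, $M'(v_p)=\{a_p\}$). First I would recall that zero total augmentation cost means no $w^s$ has its quota increased (Claim~\ref{cl:wsOne}), so every $w^s$ is matched to exactly one of $b_i^s,b_j^s,b_k^s$; and that each $a_p$ must be matched to $v_p$ in any \SSM\ (otherwise $(a_p,v_p)$ blocks). I would then argue that for a clause $C_s=(X_i\vee X_j\vee X_k)$, \emph{at least two} of $b_i^s,b_j^s,b_k^s$ must be matched to their respective global hospitals $v_i,v_j,v_k$: since $w^s$ absorbs only one $b$-resident, the other two cannot be left unmatched (they would block with $w^s$), and their only other neighbour is their global hospital; hence at least two of $v_i,v_j,v_k$ have an augmented quota, so at least two of $X_i,X_j,X_k$ are \emph{false}. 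This gives the ``at most one true per clause'' half.

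For the ``at least one true per clause'' half, suppose for contradiction that all three of $X_i,X_j,X_k$ are false for some clause $C_s$; then $q'(v_i),q'(v_j),q'(v_k)$ are all $>1$. By Claim~\ref{cl:wsOne}, $M'(w^s)$ is one of the $b$-residents, say $b_i^s$, so $(b_i^s,w^s)\in M'$. Now I invoke Claim~\ref{cl:bpConsistency}: $(b_i^s,w^s)\in M'$ forces $(b_i^{i_1},w^{i_1})\in M'$ and $(b_i^{i_2},w^{i_2})\in M'$ (with the obvious adaptation when $X_i$ occurs in fewer than three clauses — if it occurs in only $C_s$ there is nothing to chain, and the argument still needs care; see below). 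But wait — I need to make this contradiction actually bite. The cleaner route: since $X_i$ is false, $q'(v_i)>1$, meaning $v_i$'s quota was augmented; for that augmentation to be ``used'' (and not wasteful, though the reduction only needs existence of \emph{some} \SSM, so I cannot simply invoke optimality) I should instead directly check stability. The key observation is that if $(b_i^s, w^s) \in M'$ while $q'(v_i) > 1$, then $v_i$ — being matched to $a_i$ plus possibly other $b$-residents — has at most $\delta(v_i)$ many $b$-residents it could hold; if $v_i$ is not fully subscribed, $(b_i^s, v_i)$ blocks $M'$ since $v_i\succ_{b_i^s} w^s$; if $v_i$ \emph{is} fully subscribed, it holds $a_i$ and all $\delta(v_i)$ of its $b$-neighbours, in particular $b_i^s$, contradicting $M'(b_i^s)=w^s$. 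Either way we get a contradiction, so not all three variables in $C_s$ can be false.

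The main obstacle I expect is the bookkeeping around how many clauses each variable occurs in: the chaining argument via Claim~\ref{cl:bpConsistency} and the ``$v_i$ fully subscribed forces $b_i^s\in M'(v_i)$'' argument both implicitly use that $v_i$'s augmented quota exactly equals $1+\delta(v_i)$ — but in $G'$ an adversary could augment $v_i$ by \emph{more} than $\delta(v_i)$, or by an amount strictly between $1$ and $1+\delta(v_i)$. I would handle this by noting that augmenting $v_i$ beyond $1+\delta(v_i)$ is pointless for matching purposes but still costs $0$, so it is allowed; nonetheless the stability argument above only needs the dichotomy ``$v_i$ fully subscribed w.r.t.\ its actual quota in $M'$, or not'', and in the not-fully-subscribed case $(b_i^s,v_i)$ blocks regardless of the exact quota, while in the fully-subscribed case every $b$-neighbour of $v_i$ including $b_i^s$ is matched to $v_i$ — which already contradicts $M'(b_i^s)=w^s$ without needing to know $q'(v_i)$ precisely. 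So the argument is robust to the exact augmentation amount. Once both directions are in place, the assignment defined above satisfies exactly one literal per clause, completing the proof, and combined with Lemma~\ref{lem:MinCostRed2} this establishes the equivalence; the inapproximability claim of Theorem~\ref{thm:minsumcost} then follows since distinguishing cost $0$ from cost $\ge 1$ is NP-hard, ruling out any finite multiplicative approximation.
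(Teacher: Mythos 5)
Your route is genuinely different from the paper's, and mostly sound. The paper assigns truth values clause-locally (the variable whose $b$-resident is matched to $w^s$ is set true) and then needs Claim~\ref{cl:bpConsistency} to show that these per-clause choices are consistent across the (up to three) occurrences of each variable; you instead define the assignment globally per variable via the quotas ($X_p$ true iff $q'(v_p)=1$), which makes consistency automatic and replaces the chaining argument by two per-clause stability arguments (``at most one true'' and ``at least one true''). Your ``at most one true'' half is correct: by Claim~\ref{cl:wsOne} only one $b$-resident of the gadget goes to $w^s$, the other two would form strong blocking pairs with $w^s$ if unmatched (they are tied with $M'(w^s)$ in $w^s$'s list), so they sit at their global hospitals, which together with $a_p\in M'(v_p)$ forces $q'(v_p)\ge 2$ for at least two of the three.

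The gap is in the ``at least one true'' half, in the fully-subscribed sub-case. You assert (twice, including in your own ``robustness'' patch) that if $v_i$ is fully subscribed then it holds $a_i$ \emph{and all} of its $b$-neighbours, hence $b_i^s$ in particular; this is exactly what fails when $1<q'(v_i)<1+\delta(v_i)$ (e.g.\ $q'(v_i)=2$ with $\delta(v_i)=3$: full subscription means $a_i$ plus just one $b$-neighbour, which may well be $b_i^{i_1}$ rather than $b_i^s$), and you had yourself flagged precisely this adversarial quota choice without actually neutralizing it. The correct repair is short and uses the tie in $v_i$'s list: if $q'(v_i)\ge 2$ and $v_i$ is fully subscribed, then $M'(v_i)$ contains, besides $a_i$, at least one $b$-neighbour $b_i^{s'}\neq b_i^s$, and since $b_i^s$ and $b_i^{s'}$ are tied in $\pref{v_i}$ while $v_i\succ_{b_i^s} w^s=M'(b_i^s)$, the pair $(b_i^s,v_i)$ is a strong blocking pair by condition (i) of Definition~\ref{def:strong} --- no knowledge of the exact augmentation amount is needed. (The under-subscribed sub-case you argue is fine, via the implicit $\bot$ slot.) With this fix your argument actually shows the stronger fact that the $b$-resident matched to $w^s$ must have $q'$ of its global hospital equal to $1$, so your quota-based assignment coincides with the paper's matching-based one, and the lemma follows; the rest of your write-up, including the reduction to ``distinguishing cost $0$ from cost $\ge 1$'' for inapproximability, matches the paper.
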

\begin{proof}
    By Claim~\ref{cl:wsOne}, we know that $|M'(w^s)|=1$ and $M'(w^s) \in \{b_i^{s}, b_j^{s}, b_k^{s}\}$ for each $s\in\{1,2,\ldots,\alpha\}$. We obtain the assignment of variables as follows: for the gadget \gt{s},  if $M'(b_k^s)=w^s$, then we set $X_i$ and $X_j$ to false and $X_k$ to true in clause \ct{s}. Clearly, clause \ct{s} is satisfied, and exactly one variable is set to true in \ct{s}. Using Claim~\ref{cl:bpConsistency}, we note that this assignment is consistent. Thus, every clause has exactly one variable in that clause set to true, and we have a satisfying assignment for $\II$.\qed
\end{proof}

 This completes the hardness reduction, showing that the decision version of the \hyperref[prob:minsumcost]{\MINSUMCOST} problem is {\sf NP}-hard for the budget $\ell =0$. The above reduction also shows that no approximation algorithm exists for the \hyperref[prob:minsumcost]{\MINSUMCOST} problem; that is, it cannot be approximated within any multiplicative factor.  



Now, we show that for our reduced instance $G$, resident preferences, as well as hospital preferences, are derived from master lists. 

\begin{cl}
    In our reduced instance $G$, residents' preference lists are derived from a master list. Moreover, hospitals' preference lists are also derived from a master list.
\end{cl}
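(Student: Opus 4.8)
The plan is to prove the claim constructively, by exhibiting one master list $\mathcal{L}_{\RR}$ over the hospitals (governing all residents' lists) and one master list $\mathcal{L}_{\HH}$ over the residents (governing all hospitals' lists), and then checking that every agent's preference list — ties included — is exactly the restriction of the relevant master list to that agent's neighbourhood.

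For the residents' side I would let $\mathcal{L}_{\RR}$ place every global hospital $v_p$ at rank $1$ (all tied) and every clause hospital $w^s$ at rank $2$ (all tied). Residents $a_p$ and $d_t^s$ each have a single neighbour ($v_p$, resp.\ $w^s$), so the restriction is trivially their one-element list. A resident $b_p^s$ has neighbourhood $\{v_p, w^s\}$, and the restriction of $\mathcal{L}_{\RR}$ to this pair is $v_p$ strictly above $w^s$, which is precisely ${\sf Pref}(b_p^s)$. Hence all residents' lists are master-list derived.

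For the hospitals' side I would let $\mathcal{L}_{\HH}$ place every resident $a_p$ at rank $1$ (all tied), every resident $b_p^s$ at rank $2$ (all tied), and every resident $d_t^s$ at rank $3$ (all tied). Restricting $\mathcal{L}_{\HH}$ to the neighbourhood $\{a_p\}\cup\{b_p^s : X_p\in C_s\}$ of a global hospital $v_p$ yields $a_p$ strictly above the tie of all $b_p$-residents, matching ${\sf Pref}(v_p)$ (and likewise in the degenerate one- or two-occurrence cases). Restricting $\mathcal{L}_{\HH}$ to the neighbourhood $\{b_i^s,b_j^s,b_k^s,d_1^s,d_2^s\}$ of a clause hospital $w^s$ with $C_s=(X_i\vee X_j\vee X_k)$ yields the tie $(b_i^s,b_j^s,b_k^s)$ strictly above the tie $(d_1^s,d_2^s)$, matching ${\sf Pref}(w^s)$. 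Hence all hospitals' lists are master-list derived, which proves the claim.

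There is essentially no obstacle here; the only point requiring a moment's care is that a restriction of a master list must reproduce both the strict comparisons \emph{and} the ties of each agent's list. That is why in $\mathcal{L}_{\HH}$ all $b$-residents are placed in a single common tie — any two $b$-residents sharing a variable are both adjacent to that variable's global hospital, and any two $b$-residents sharing a clause are both adjacent to that clause's hospital, and in each such hospital's list they appear tied, so they must be tied in the master list — and why in $\mathcal{L}_{\RR}$ the $v$-hospitals and $w$-hospitals form two disjoint blocks.
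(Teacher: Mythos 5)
Your proof is correct and uses the same constructive strategy as the paper: exhibit one master list per side and check that every agent's preference list, ties included, is the restriction of the relevant master list to that agent's neighbourhood. The constructions differ in the details, though, and the difference is substantive. For the master list of residents (which governs the hospitals' lists), the paper takes a strict order of the $a$-residents, followed by an arbitrary strict order of the ties $B_1,\dots,B_\beta$, where $B_p$ ties only the $b_p$-residents of the \emph{same} variable, followed by the ties $D_1,\dots,D_\alpha$; you instead place \emph{all} $b$-residents in one common tie (and all $d$-residents in one tie). Your coarser choice is the one that literally reproduces the tie $(b_i^s,b_j^s,b_k^s)$ in ${\sf Pref}(w^s)$ under the restriction semantics of ``derived from a master list'': since $b_i^s,b_j^s,b_k^s$ correspond to three distinct variables, the paper's master list orders them strictly, so its restriction to the neighbourhood of $w^s$ matches ${\sf Pref}(w^s)$ only under a weaker reading in which an individual list may coarsen consecutive master-list ranks; your closing remark that any two $b$-residents sharing a clause appear tied in that clause hospital's list and hence must be tied in the master list is exactly the relevant point. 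On the other side the roles are reversed: the paper's master list of hospitals $\langle v_1,\dots,v_\beta,w^1,\dots,w^\alpha\rangle$ is fully strict, while yours ties all $v$-hospitals and all $w$-hospitals; both restrict correctly because no resident is adjacent to two $v$-hospitals or to two $w$-hospitals, but the strict version is marginally cleaner in the \HRHT\ setting, where residents' lists must be strict, since it avoids a master list containing ties that never surface in any restriction.
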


\begin{proof}
    To show that residents' preference lists are derived from a master list, we give a master list of hospitals as $$\langle v_1,v_2,\ldots,v_{\beta}, w^1,w^2,\ldots,w^{\alpha} \rangle$$
    It is easy to observe that each resident's list is derived from this master list.

    To show that hospitals' preference lists are derived from a master list, we present the following master list of residents. Let $A$ denote an arbitrary strict ordering of $a$-residents $a_1,a_2,\ldots,a_{\beta}$. Let $B_p$ denote the tie containing all (at most three) $b_p$-residents: $b_p^{i_1}, b_p^{i_2}, b_p^{i_3}$ for some $1\le i_1,i_2,i_3\le \alpha$.  Let $B$ denote the arbitrary strict ordering of $B_1, B_2,\ldots, B_\beta$. Let $D^s$ denote the tie containing $d_1^s$ and $d_2^s$. Let $D$ denote an arbitrary strict ordering of $D_1, D_2,\ldots, D_\alpha$. Then the master list of residents is given as $\langle A, B, D \rangle$. It is easy to observe that each hospital's preference list is derived from this master list.
\end{proof}

This completes the proof of Theorem~\ref{thm:minsumcost}.

\section{MINMAX-SS Problem}\label{sec:MINMAXProblem}

In this section, we consider the \hyperref[prob:minmaxss]{\MINMAXSS} problem. We show that the \hyperref[prob:minmaxss]{\MINMAXSS} problem is NP-hard even for a very restricted setting.
%
For this, we consider a special case of this problem, where $q(h)=1$ for all $h\in \HH$ and a budget $\ell=1$. We refer to this special case as the \CAPSSMP, since the quota of each hospital in the augmented instance is restricted to either 1 or 2. Next, we show the hardness of \CAPSSMP.

\subsection{NP-hardness of \CAPSSMP}\label{subsec:oneOrTwoCap}

We prove the hardness of \CAPSSMP\ by reducing from an instance of the \MNESAT. Let $\alpha$ and $\beta$ be two non-negative integers.  The \MNESAT\ is a variant of the Boolean satisfiability problem where the input is a conjunction of clauses. Each clause is a disjunction of exactly three variables, and no variable appears in negated form. The goal is to determine whether there exists a truth assignment to the variables such that for each clause, at least one variable is set to true and at least one to false. This problem is known to be NP-complete~\cite{PorschenSSW2014NAE3SATdam} even when each variable appears in exactly four clauses\cite{DarmannD20NAESATres}. 

\vspace{0.1in}

\noindent\textbf{Gadget reduction.} 
Let $\II$ be an instance of the \MNESAT, where each variable appears in exactly four clauses. Let $\{X_1,X_2,\ldots,X_{\beta}\}$ be the set of variables in $\II$ and $\{C_1,C_2,\ldots,C_{\alpha}\}$ be the set of clauses in $\II$.

For convenience, we fix a cyclic ordering $\mathcal{C}$ of all clauses in $\II$. Let $C_s= (X_i \vee X_j \vee X_k)$ be a clause in $\mathcal{C}$.  Let \ct{i_1} denote the next clause in which $X_i$ appears, immediately after \ct{s} in $\mathcal{C}$. We also let \ct{i_3} denote the clause in the cyclic ordering $\mathcal{C}$ where $X_i$ appears for the fourth time, starting at \ct{s}. Similarly, we define clauses \ct{j_1}, \ct{k_1}, \ct{j_3} and \ct{k_3}. 

Given $\II$, we construct an instance $G=(\RR\cup\HH,E)$ of \CAPSSMP\ with quota $q(h)=1$ for each $h\in\HH$ such that $G$ does not admit a \SSM. We show that there exists an augmented instance $G'=(\RR\cup\HH,E)$ where the maximum increase in quota for any hospital is at most 1 such that $G'$ admits a \SSM\ if and only if there exists a satisfying assignment for $\II$ such that at least one and at most two variables in each clause are set to true.


In our reduced instance $G$, there exists a gadget \gt{s} corresponding to each clause \ct{s} in $\II$. The gadget \gt{s} consists of a resident set $\RR_s = \{a_i^s, a_j^s, a_k^s, b_i^s, b_j^s, b_k^s, d_1^s, d_2^s, d_3^s\}$ and a hospital set $\HH_s = \{v_i^s, v_j^s, v_k^s, w^s\}$. The preference lists of residents and hospitals are given in Fig.~\ref{fig:preflistsMinMax}. The preference list of a resident $b_p^s$ corresponding to the variable $X_p$ for $p\in\{i,j,k\}$ consists of three hospitals -- two within the gadget \gt{s}, and one outside the gadget \gt{s}. Specifically,  the preference list of $b_i^s$ consists of hospitals $v_i^s, v_i^{i_1}$ and $ w^s$ in this order. Analogously, the hospital $v_i^s$ corresponding to  $X_i$ ranks the resident $a_i^s$ as its top choice, followed by a tie of length two consisting of the two $b$-residents, namely $b_i^s, b_i^{i_3}$, from two different gadgets. This completes the description of our reduction.

\begin{figure}[ht]
    \begin{subfigure}{.2\textwidth}
         \begin{flushleft}
            \begin{tabular}{lll}
			$a_i^s$ &:& $v_i^s$\\[2pt]
			$a_j^s$ &:& $v_j^s$\\[2pt]
                $a_k^s$ &:& $v_k^s$\\[2pt]
			$b_i^s$ &:& $v_i^s, v_i^{i_1}, w^s$\\[2pt]
                $b_j^s$ &:& $v_j^s, v_j^{j_1}, w^s$\\[2pt]
                $b_k^s$ &:& $v_k^s, v_k^{k_1}, w^s$\\[2pt]
			$d_1^s$ &:& $w^s$\\[2pt]
                $d_2^s$ &:& $w^s$\\[2pt]
                $d_3^s$ &:& $w^s$\\[2pt]
		\end{tabular}
            \caption{}
        \label{subfig:prefAMinMAX}
        \end{flushleft}
    \end{subfigure}%
    \begin{subfigure}{.3\textwidth}
        \begin{flushright}
            \begin{tabular}{lll}
	        $v_i^s$&:& $a_i^s, (b_i^s,b_i^{i_3})$\\[2pt]
                $v_j^s$&:& $a_j^s, (b_j^s,b_j^{j_3})$\\[2pt]
                $v_k^s$&:& $a_k^s, (b_k^s,b_k^{k_3})$\\[2pt]
                $w^s$ &:&  $(b_i^s, b_j^s, b_k^s), (d_1^s,d_2^s,d_3^s)$\\[10pt]  
            \end{tabular}
            \caption{}
            \label{subfig:prefBMiNMax}
        \end{flushright}
\end{subfigure}
\caption{\label{fig:preflistsMinMax}%
(i) Preference lists of residents in the gadget \gt{s}. (ii) Preference lists of hospitals in the gadget \gt{s}.
}
\end{figure}

\vspace{0.1in}

\noindent\textbf{Correctness:} Using the similar argument as in Section~\ref{sec:costVersion}, we claim that the reduced instance $G$ does not admit a \SSM. Recall that $\II$ is an instance of \MNESAT. A satisfying assignment for an instance of \MNESAT\ is an assignment of variables such that for each clause, at least one and at most two variables are set to true.

\begin{lem}\label{lem:truthvalImpliesSSM}
    If there exists a satisfying assignment for $\II$, then there exists an instance $G'$ obtained from $G$ by setting each hospital quota either 1 or 2 such that $G'$ admits a \SSM. 
\end{lem}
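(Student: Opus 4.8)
Given a satisfying assignment for $\II$, I would construct $G'$ and a candidate \SSM\ $M'$ explicitly, mimicking the construction in Lemma~\ref{lem:MinCostRed2}, but now respecting the ``1-or-2'' quota restriction. The key intuition is: each variable $X_p$ appearing in clauses $C_{p_1},C_{p_2},C_{p_3},C_{p_4}$ has a \emph{cycle} of hospitals $v_p^{p_1},v_p^{p_2},v_p^{p_3},v_p^{p_4}$ (via the ``next clause in cyclic order'' edges $b_p^s : v_p^s, v_p^{p_{\text{next}}}, w^s$). I would set quotas so that each $v_p^s$ has quota $1$ or $2$: if $X_p$ is \emph{true}, I augment \emph{every} $v_p^s$ (for $s$ a clause containing $X_p$) to quota $2$; if $X_p$ is \emph{false}, I leave all $v_p^s$ at quota $1$. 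For the $w^s$ hospitals, I keep quota $1$ (or possibly $2$ — I would check whether one $w^s$-slot suffices, which it should, since a NAE assignment puts at least one variable of $C_s$ false, and that $b$-resident will be absorbed by $w^s$).

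**The matching.** Define $M' = M_1 \cup M_2 \cup M_3 \cup M_4$ where $M_1 = \bigcup_{s,p} \{(a_p^s, v_p^s)\}$ matches every $a$-resident to its rank-1 hospital; $M_2$ matches $b_p^s$ to $v_p^s$ whenever $X_p$ is true (using the augmented second slot of $v_p^s$, alongside $a_p^s$); $M_3$ matches $b_p^s$ to $w^s$ whenever $X_p$ is false — but here I must be careful that each $w^s$ receives \emph{exactly one} $b$-resident. Since $C_s = (X_i\vee X_j\vee X_k)$ has, by the NAE property, either one or two false variables, if there is exactly one false variable $X_i$ we set $M'(b_i^s)=w^s$; if there are two false variables, say $X_i,X_j$, then $w^s$ can only take one of them — this is the delicate point. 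I expect the resolution is that the \emph{other} false variable's $b$-resident gets matched along its cross-gadget edge $v_i^{i_1}$ (the ``next'' hospital), which is possible precisely because... hmm, but if $X_i$ is false then $v_i^{i_1}$ has quota $1$ and is occupied by $a_i^{i_1}$. So actually I suspect the clause gadget must have been designed so that a NAE assignment yields \emph{exactly} the right count, or the $d$-residents play a role. I would re-examine: with quota $1$ on $w^s$ and $d_1^s,d_2^s,d_3^s$ all wanting $w^s$, $w^s$ must be matched to a $b$-resident, and only one. So the construction forces exactly one false variable's $b$-resident into $w^s$; the claim $M'$ is valid then requires showing a NAE assignment can be \emph{massaged} so that... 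Actually I think the correct reading is: we are free to match the second false $b$-resident to $v_i^{i_1}$ by \emph{also} augmenting that hospital — but $X_i$ false means we said we don't augment. This tension is the main obstacle; see below.

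**Stability check.** Assuming $M'$ is a valid matching, I would verify no strong blocking pair exists, edge-type by edge-type. No $a_p^s$ blocks (matched to rank-1). For $b_p^s$ with $X_p$ true: it is matched to $v_p^s$, its rank-1 hospital, so no blocking. For $b_p^s$ with $X_p$ false: it is matched to $w^s$ (or to a cross-gadget $v$); its preferred hospitals $v_p^s$ and $v_p^{p_1}$ are each fully subscribed (quota $1$) with the strictly-preferred resident $a_p^s$ resp.\ $a_p^{p_1}$, so $(b_p^s,v_p^s)$ and $(b_p^s,v_p^{p_1})$ are not strong blocking pairs. The $d$-residents: $w^s$ is fully subscribed with a $b$-resident whom $w^s$ strictly prefers over any $d$, so no $(d_t^s,w^s)$ blocks. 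Finally I must check no $v_p^s$ is over its augmented quota and that when $X_p$ is true, both $b_p^s$ and $b_p^{s'}$ (the two gadgets sharing $v_p^s$ via the tie $(b_p^s,b_p^{p_3})$) — wait, $v_p^s$'s preference list is $a_p^s, (b_p^s, b_p^{p_3})$, so its quota-$2$ slots go to $a_p^s$ and one of $\{b_p^s, b_p^{p_3}\}$; I need to ensure the one \emph{not} chosen is matched elsewhere and doesn't block. If $X_p$ is true then $b_p^{p_3}$ also wants $v_p^{p_3}$ (also augmented to $2$, since $X_p$ true), so $b_p^{p_3}$ is absorbed there. This consistency across the four occurrences of $X_p$ is exactly why the truth-value must be global, and I expect it works out cleanly.

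**Main obstacle.** The hard part is the \emph{counting} in $M_3$: ensuring each $w^s$ absorbs exactly one $b$-resident while every other $b$-resident of a false variable finds a home. I would need to carefully track, for a clause with two false variables, where the ``second'' false $b$-resident goes — likely it must be matched to a cross-gadget $v$-hospital that got augmented because it \emph{also} serves as the ``true-slot'' for some occurrence, and reconciling this with the per-variable augmentation rule is the crux. If the per-variable rule as I stated it doesn't quite work, the fix is to augment hospitals based on a more refined combinatorial condition derived from the cyclic ordering, and I would prove feasibility of that condition from the NAE assignment.
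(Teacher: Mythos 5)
Your construction agrees with the paper's up to the one point you yourself flag as the ``main obstacle,'' and that obstacle is exactly where your proposal is incomplete: you never resolve where the second false $b$-resident of a clause goes. The paper's resolution is not a refined combinatorial condition on the cyclic ordering, nor matching across gadgets along $v_i^{i_1}$ (which, as you correctly observe, fails because that hospital is not augmented when $X_i$ is false); it is simply to augment $w^s$ itself. Concretely, the paper sets $q'(v_p^s)=2$ iff $X_p$ is true (as you do), and additionally sets $q'(w^s)=2$ iff clause $C_s$ has exactly one true variable, i.e.\ two false ones. Then in a clause with one true variable, \emph{both} false $b$-residents are matched to $w^s$; in a clause with two true variables, $q'(w^s)=1$ and the single false $b$-resident is matched to $w^s$. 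In both cases $w^s$ is fully subscribed by $b$-residents (which it prefers to the $d$-residents), so no $(d_t^s,w^s)$ blocks, and each false $b$-resident's two preferred $v$-hospitals have quota $1$ filled by their strictly preferred $a$-residents, so they do not block either. Note that you cannot keep $q'(w^s)=1$ in the two-false case (a $b$-resident would be left unmatched and block with $w^s$), and you also cannot blanket-set $q'(w^s)=2$ in the two-true case (then $w^s$ would be under-subscribed and an unmatched $d$-resident would block with it, since under-subscribed slots are treated as matched to $\bot$). So the $w^s$ quota must equal the number of false variables in $C_s$, which the NAE property guarantees is $1$ or $2$ --- within the budget.

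Your early remark that ``one $w^s$-slot suffices'' is therefore wrong, and since the entire validity of $M'$ (and the non-blocking of the $d$-residents) hinges on placing every false $b$-resident, the proposal as written does not establish the lemma. The remainder of your argument --- matching every $a_p^s$ to $v_p^s$, matching $b_p^s$ to $v_p^s$ when $X_p$ is true, the consistency of augmentation across the (up to four) occurrences of a true variable, and the per-resident blocking-pair checks --- is essentially the paper's argument and is fine once the $w^s$ augmentation rule is added.
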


\begin{proof}
    We obtain an instance $G'$ from $G$ by modifying the hospital quotas as follows. Set $q'(v_r^s) = 2$ if and only if $X_r$ in clause \ct{s} is set to true. Since the satisfying assignment has the property that each clause \ct{s} in $\II$ has at least one and at most two variables set to true, the gadget \gt{s} must have the property that at least one and at most two of $q'(v_i^s), q'(v_j^s), q'(v_k^s)$ are set to 2. Additionally, we set $q'(w^s) = 2$ if and only if clause \ct{s} is satisfied by setting exactly one of $X_i, X_j$, and $X_k$ to true in the given assignment. 
    
    Next, we construct a \SSM\ in $G'$. We consider two cases based on the number of variables in clause \ct{s} that are set to true in the satisfying assignment.
    
    \noindent\textbf{Case 1}: Exactly one variable in \ct{s} is set to true. Without loss of generality, assume that $X_i$ is true, and $X_j$ and $X_k$ are false in \ct{s}. Clearly, $q'(v_i^s)=q'(w^s)=2$ and $q'(v_j^s)=q'(v_k^s)=1$. We construct a matching $M_s$ in \gt{s} as follows:\\ $M_s=\{(a_i^s,v_i^s), (b_i^s,v_i^s),(a_j^s,v_j^s), (b_j^s,w^s), (a_k^s,v_k^s), (b_k^s,w^s)\}$.  
    
    \noindent\textbf{Case 2}: Exactly two variables in \ct{s} are set to true. Without loss of generality, assume that $X_i$ and $X_j$ are true and $X_k$ is false in \ct{s}. Clearly, $q'(v_i^s)=q'(v_j^s)=2$ and $q'(v_k^s)=q'(w^s)=1$. We construct a matching $M_s$ in \gt{s} as follows:\\  $M_s=\{(a_i^s,v_i^s), (b_i^s,v_i^s),(a_j^s,v_j^s), (b_j^s,v_j^s), (a_k^s,v_k^s), (b_k^s,w^s)\}$.  

    Now, let $M'=\bigcup_{s=1}^{\alpha} M_s$. To show the strong stability of $M'$, we prove that for any $s$, no resident in the gadget \gt{s} participates in a strong blocking pair with respect to $M'$. We consider two cases based on how $M_s$ was constructed.

    \noindent\textbf{Case (a):} $M_s$ was constructed as in Case 1. Each of $a_i^s, a_j^s, a_k^s$, and $b_i^s$ is matched with their rank-1 hospital. The residents $b_j^s$ and $b_k^s$ are matched with their rank-3 hospital $w^s$. However, the two hospitals that they prefer over $w^s$ have unit quotas (because the satisfying assignment is consistent), and they are matched to their respective rank-1 residents. The $d$-vertices in gadget \gt{s} cannot block $M'$ because the quota of $w^s$ is two, and both positions are occupied by better-preferred residents.

    \noindent\textbf{Case (b):} $M_s$ was constructed as in Case 2. Each of $a_i^s, a_j^s, a_k^s, b_i^s$ and $b_j^s$ is matched with its rank-1 hospital. The resident $b_k^s$ is matched with its rank-3 hospital $w^s$. However, the two hospitals $v_k^s$ and $v_k^{k_1}$, which $b_k^s$ prefers over $w^s$, have unit quotas (because the satisfying assignment is consistent), and they are matched to their respective rank-1 residents. The $d$-vertices in gadget \gt{s} cannot block $M'$ because the unit quota of $w^s$ is occupied by a better-preferred resident. Thus, $M'$ is a \SSM.
\qed\end{proof}

We require the following claims to prove the other direction.

\begin{cl}\label{cl:atLeast1vDuplicated}
    Let $G'$ be an instance obtained from $G$ with $q'(h)\in \{1,2\}$ for all $h\in \HH$. Assume that $G'$ admits a \SSM\ $M'$. Then, for any gadget \gt{s}, $q'(h)=2$ for some $h\in\{v_i^s,v_j^s,v_k^s\}$.
\end{cl}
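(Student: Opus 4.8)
The plan is to argue by contradiction: suppose that in some gadget $G_s$ corresponding to the clause $C_s = (X_i \vee X_j \vee X_k)$ we have $q'(v_i^s) = q'(v_j^s) = q'(v_k^s) = 1$, and show that no matching $M'$ of $G'$ can be strongly stable. First I would observe that each $a$-resident $a_p^s$ ($p \in \{i,j,k\}$) must be matched to its unique neighbour $v_p^s$ in any \SSM\ — otherwise $(a_p^s, v_p^s)$ is a \SBP\ since $v_p^s$ is $a_p^s$'s only (hence rank-1) hospital and $a_p^s$ is $v_p^s$'s rank-1 resident. Since each $v_p^s$ has quota exactly $1$ under our assumption, this means $v_p^s$ is fully subscribed by $a_p^s$, so none of $b_i^s, b_j^s, b_k^s$ can be matched to any $v$-hospital inside this gadget.

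Next I would handle $w^s$: as in the argument used in Section~\ref{sec:costVersion} (and Claim~\ref{cl:wsOne}), $M'$ cannot leave $w^s$ unmatched nor match it only to a $d$-resident, because then some $d$-resident is unmatched and forms a \SBP\ with $w^s$ (recall $w^s$ ranks all $d$-residents above $\bot$ and each $d$-resident's only neighbour is $w^s$). But $w^s$ can hold at most $2$ residents (its quota in $G'$ is at most $2$), so at most two of the three $b$-residents $b_i^s, b_j^s, b_k^s$ can be matched to $w^s$. Hence at least one $b$-resident, say $b_k^s$, is matched to $w^s$, or is matched elsewhere, or is unmatched; in fact, since $v_k^s$ is full with $a_k^s$ and the remaining option $v_k^{k_1}$ lies in another gadget, $b_k^s$'s possible partners are $v_k^{k_1}$, $w^s$, or $\bot$. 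Since $w^s$ holds at most two $b$-residents, some $b$-resident from $\{b_i^s, b_j^s, b_k^s\}$ — without loss of generality $b_k^s$ — is not matched to $w^s$.

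The remaining case analysis is then short: if $b_k^s$ is unmatched (or matched to $\bot$), then $(b_k^s, w^s)$ is a \SBP, since $b_k^s$ strictly prefers $w^s$ to $\bot$ and $w^s$ — if not full — prefers $b_k^s$ to a $\bot$ or $d$-partner, and if $w^s$ is full it is full with the other two $b$-residents, who are tied with $b_k^s$ at the first tie of $w^s$'s list, again yielding a \SBP\ by condition~(ii) of Definition~\ref{def:strong} against a $d$- or $\bot$-partner — more carefully, if $w^s$ is full with two $b$-residents then it has no $d$ or $\bot$ to be blocked against, so instead I use that $b_k^s$ is unmatched and $w^s$ prefers $b_k^s$ (tied, so $\succeq$) to... here the cleanest route is: $w^s$ full means $|M'(w^s)| = 2 \le q'(w^s)$, so $w^s$'s partners are two $b$-residents; then $b_k^s$ unmatched gives $w^s \succ_{b_k^s} M'(b_k^s) = \bot$ and $b_k^s \succeq_{w^s} r'$ for the $b$-partners $r'$, so $(b_k^s, w^s)$ is a \SBP\ by~(i). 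If instead $b_k^s$ is matched to $v_k^{k_1}$, I then push the contradiction into the neighbouring gadget: $v_k^{k_1}$ has quota $1$ or $2$; tracking its rank-1 resident $a_k^{k_1}$ (forced, as above) shows $v_k^{k_1}$ would need quota $2$ to accommodate $b_k^s$, and then in gadget $G_{k_1}$ the resident $b_k^{k_1}$ (which is tied with $b_k^s = b_k^{(k_1)_3}$ at $v_k^{k_1}$'s list) is displaced, and chasing this displacement around the $4$-cycle of occurrences of $X_k$ eventually forces some gadget with all three $v$-hospitals at quota $1$ — but this is exactly the kind of global argument I would rather avoid. The main obstacle, and the step I expect to require the most care, is precisely ruling out the possibility $M'(b_k^s) = v_k^{k_1}$ cleanly: the right fix is to not assume $b_k^s$ in particular fails, but to count — at most two of $\{b_i^s,b_j^s,b_k^s\}$ go to $w^s$, and I should instead argue that the $d$-residents force $w^s$ to be matched to a $b$-resident only if it is under-subscribed, and combine this with a careful count of how many of the three $b$-residents can escape to external $v$-hospitals, using that escaping to $v_p^{p_1}$ requires $q'(v_p^{p_1}) = 2$, which in turn (by the same displacement inside that gadget) propagates the "all-ones" failure. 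I would therefore structure the proof so that the external-escape case immediately reduces, via Definition~\ref{def:strong}(ii) applied to the tie $(b_p^s, b_p^{p_3})$ in $v_p^{p_1}$'s list, to the existence of a \SBP\ $(b_p^{p_1}, v_p^{p_1})$ or $(b_p^{p_1}, w^{p_1})$ in the adjacent gadget, so that a single blocking pair is exhibited without needing to traverse the whole cycle.
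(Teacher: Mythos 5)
Your plan ultimately lands on the same route as the paper: force each $a$-resident onto its $v$-hospital, use the $d$-residents to pin down $w^s$, and observe that a $b$-resident of \gt{s} escaping to its rank-2 hospital $v_p^{p_1}$ displaces $b_p^{p_1}$ there (since $a_p^{p_1}$ is forced and $q'(v_p^{p_1})\le 2$), which immediately produces a blocking pair in the adjacent gadget; the paper organizes this as ``no $b$-resident of \gt{s} can be matched at rank 1 or rank 2,'' so all three would need $w^s$, impossible with $q'(w^s)\le 2$. Two corrections to the step you yourself flag as delicate. First, the blocking pair is $(b_p^{p_1}, v_p^{p_1})$ via clause (i) of Definition~\ref{def:strong}, not clause (ii): $v_p^{p_1}$ is $b_p^{p_1}$'s rank-1 hospital, so $v_p^{p_1}\succ_{b_p^{p_1}} M'(b_p^{p_1})$ holds strictly no matter where $b_p^{p_1}$ is matched, while $v_p^{p_1}$ is merely \emph{indifferent} between $b_p^{p_1}$ and its matched resident $b_p^s$, so clause (ii), which requires a strict preference on the hospital side, does not apply. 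Second, because $(b_p^{p_1}, v_p^{p_1})$ blocks irrespective of $M'(b_p^{p_1})$, the hedge ``or $(b_p^{p_1}, w^{p_1})$'' and the cycle-chasing you describe (and rightly reject) are unnecessary: the argument is purely local, which is exactly what makes the paper's proof terminate after one step into the neighbouring gadget. A minor notational slip: the rank-2 tie in $v_p^{p_1}$'s list is $(b_p^{p_1}, b_p^s)$, not $(b_p^s, b_p^{p_3})$. With these fixes your case analysis is correct and equivalent to the paper's proof.
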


\begin{proof}
     Suppose for contradiction that $q'(h)=1$ for all $h\in\{v_i^s,v_j^s,v_k^s\}$. We note that $(a_p^s,v_p^s)\in M'$ for all $p\in\{i,j,k\}$; otherwise, $(a_p^s,v_p^s)$ would block $M'$. Also, note that all three $b_p^s$ for $p\in\{i,j,k\}$ are matched in $M'$, as otherwise  $(b_p^s,w^s)$ would block $M'$. Clearly, none of the $b$-vertices in \gt{s} can be matched to their rank-1 hospitals in $M'$. We claim that $b_p^s$ for $p\in\{i,j,k\}$ cannot be matched to its rank-2 hospital in $M'$. To see this, without loss of generality, let us assume that $b_i^s$ is matched with its rank-2 hospital $v_i^{i_1}$, that is, $(b_i^s,v_i^{i_1})\in M'$. The fact that $(a_i^{i_1},v_i^{i_1})\in M'$ and $q'(v_i^{i_1})\le 2$ implies that the resident $b_i^{i_1}$ is not matched with $v_i^{i_1}$. This implies that $(b_i^{i_1},v_i^{i_1})$ blocks the matching $M'$. Therefore, it must be the case that all three of $b_i^s,b_j^s$, and $b_k^s$ are matched with their rank-3 neighbor $w^s$. But, $q'(w^s)\le 2$, the hospital $w^s$ can accommodate at most two of these $b$-vertices. Thus, the unmatched resident, say $b_i^s$, along with $w^s$ blocks $M'$. 
\qed\end{proof}

\begin{cl}\label{cl:atmost2vDuplicated}
    Let $G'$ be an instance obtained from $G$ with $q'(h)\in \{1,2\}$ for all $h\in \HH$. Assume that $G'$ admits a \SSM\ $M'$. Then, $q'(h)=1$ for some $h\in\{v_i^s,v_j^s,v_k^s\}$.
\end{cl}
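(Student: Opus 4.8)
The plan is to argue by contradiction, mirroring the structure of the proof of Claim~\ref{cl:atLeast1vDuplicated} but ``dualizing'' it: instead of showing that too few duplications force an unmatched $b$-resident at $w^s$, I would show that too many duplications force an unmatched $d$-resident at $w^s$. So suppose, for contradiction, that $q'(v_i^s) = q'(v_j^s) = q'(v_k^s) = 2$ for some gadget $G_s$. First I would record the structural facts that hold in any \SSM\ $M'$ of $G'$: each pair $(a_p^s, v_p^s)$ for $p \in \{i,j,k\}$ must be in $M'$ (otherwise it blocks), and each resident $d_t^s$ for $t \in \{1,2,3\}$ must be matched --- but the only hospital a $d$-resident is adjacent to is $w^s$, so this forces $w^s$ to be matched. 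Since $q'(w^s) \le 2$, the hospital $w^s$ can absorb at most two of the three $d$-residents, hence at least one $d$-resident, say $d_1^s$, is matched to $w^s$, i.e. $(d_1^s, w^s) \in M'$, and by the pigeonhole argument at least one further $d$-resident, say $d_2^s$, is left unmatched.

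The key step is then to show that no $b$-resident of $G_s$ is matched to $w^s$, which together with $(d_1^s,w^s)\in M'$ and the fact that $q'(w^s)\le 2$ will let me conclude that $w^s$ is matched only to $d$-residents, and hence the unmatched $d_2^s$ forms a strong blocking pair with $w^s$ (since $d_2^s$ strictly prefers $w^s$ to $\bot$ and $w^s$ is indifferent between all three $d$-residents, giving a strong blocking pair of type~(ii)). To see that no $b_p^s$ is matched to $w^s$: each $b_p^s$ for $p\in\{i,j,k\}$ has preference list $v_p^s, v_p^{p_1}, w^s$; its rank-1 hospital $v_p^s$ is fully subscribed (to quota $2$) with the two better-or-equal residents $a_p^s$ and $b_p^{p_3}$ --- wait, I must be careful here, because $v_p^s$ has capacity $2$ and is matched to $a_p^s$; it has exactly one more slot, and it ranks $b_p^s$ and $b_p^{p_3}$ tied at rank $2$. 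So I would argue instead that $b_p^s$ cannot be matched to $w^s$: if $(b_p^s, w^s)\in M'$, then since $b_p^s \succ_{b_p^s}$-prefers $v_p^s$ to $w^s$ and $v_p^s$ has a free slot or is filled only with $a_p^s$ and possibly $b_p^{p_3}$ (tied with $b_p^s$), the pair $(b_p^s, v_p^s)$ is a strong blocking pair: either $v_p^s$ is under-subscribed w.r.t. $q'(v_p^s)=2$ (only $a_p^s$ matched), giving type-(i) blocking, or $v_p^s$ is fully subscribed with $a_p^s$ and $b_p^{p_3}$, and since $b_p^s \succeq_{v_p^s} b_p^{p_3}$ (they are tied), we again get a strong blocking pair of type~(i). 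Either way $M'$ is not strongly stable, a contradiction. Hence no $b_p^s$ is matched to $w^s$.

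Putting these together: $M'(w^s)$ consists entirely of $d$-residents, so $|M'(w^s)| \le 2 < 3$ means some $d_t^s$ is unmatched, and that $d_t^s$ together with $w^s$ blocks $M'$ --- contradiction. Therefore at least one of $q'(v_i^s), q'(v_j^s), q'(v_k^s)$ equals $1$. The main obstacle I anticipate is the bookkeeping around the capacity-$2$ hospitals $v_p^s$: one must carefully distinguish whether $v_p^s$ is filled by $\{a_p^s\}$ alone or by $\{a_p^s, b_p^{p_3}\}$, and check that in both sub-cases matching $b_p^s$ to $w^s$ creates a strong (not merely weak) blocking pair at $v_p^s$, using the tie between $b_p^s$ and $b_p^{p_3}$ in $\prefa$-style notation $v_p^s : a_p^s, (b_p^s, b_p^{p_3})$. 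Once that case analysis is nailed down, the rest is the same pigeonhole-on-$d$-residents argument used throughout the section.
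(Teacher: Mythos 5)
Your proposal is correct and takes essentially the same route as the paper: the paper argues that with all three quotas equal to $2$, each pair $(b_p^s,v_p^s)$ must be in $M'$ (otherwise it strongly blocks, exactly via your case analysis of $v_p^s$ being filled by $a_p^s$ alone or by $a_p^s$ together with the tied $b_p^{p_3}$), so $w^s$ receives none of its rank-1 residents and an unmatched $d$-resident then blocks with $w^s$ by pigeonhole --- the same argument you give, merely stated as ``$b_p^s$ is matched to $v_p^s$'' rather than ``$b_p^s$ is not matched to $w^s$''. Two harmless slips in your write-up: the blocking pair $(d_t^s,w^s)$ is of type (i), not type (ii), since $w^s$ is indifferent among the $d$-residents (or has a $\bot$ slot), and your preliminary assertion that every $d_t^s$ must be matched is false (it is exactly what the contradiction refutes) but is never actually used in your final argument.
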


\begin{proof}
    Suppose for contradiction that $q'(h)=2$ for all $h\in\{v_i^s,v_j^s,v_k^s\}$. This implies that $(b_p^s,v_p^s)\in M'$ for all $p\in\{i,j,k\}$; otherwise, $(b_p^s,v_p^s)$ would block the matching $M'$. This implies that $w^s$ is not matched to any of its rank-1 residents in $M'$. Since $q'(w^s)\le 2$, an unmatched $d$-vertex in \gt{s}, say $d_1^s$, along with $w^s$ blocks $M'$.
\qed\end{proof}

\begin{cl}\label{cl:allVDuplicated}
    Let $G'$ be an instance obtained from $G$ with $q'(h)\in \{1,2\}$ for all $h\in \HH$. Assume that $G'$ admits a \SSM\ $M'$. For any gadget \gt{s}, if $q'(v_i^s)=2$, then $q'(v_i^{i_3})=2$.
\end{cl}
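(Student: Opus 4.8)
The plan is to argue by contradiction along the same lines as Claim~\ref{cl:atmost2vDuplicated} and Claim~\ref{cl:bpConsistency}. Suppose $q'(v_i^s)=2$ but $q'(v_i^{i_3})=1$. Recall that $v_i^{i_3}$ is the hospital in the gadget $G_{i_3}$ corresponding to variable $X_i$, and that both $b_i^s$ and $b_i^{i_3}$ appear in the (length-two) tie at rank~2 of $v_i^{i_3}$'s preference list; moreover $v_i^{i_3}$ ranks its own $a$-resident $a_i^{i_3}$ first. The first step is to pin down where $b_i^{i_3}$ is matched in $M'$. Since $v_i^{i_3}$ has unit quota and, as in the argument that $G$ has no \SSM, the pair $(a_i^{i_3},v_i^{i_3})$ must be in $M'$, the resident $b_i^{i_3}$ cannot be matched to $v_i^{i_3}$. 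By the preference list of $b_i^{i_3}$ (namely $v_i^{i_3}, v_i^{(i_3)_1}, w^{i_3}$), and the fact that $b_i^{i_3}$ cannot be left unmatched (else $(b_i^{i_3}, w^{i_3})$ blocks $M'$), the resident $b_i^{i_3}$ is matched either to its rank-2 hospital $v_i^{(i_3)_1}$ or to $w^{i_3}$.

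The second step is to exhibit the blocking pair $(b_i^{i_3}, v_i^s)$. Because $q'(v_i^s)=2$ and $a_i^s$ occupies one slot (again $(a_i^s,v_i^s)\in M'$, else that pair blocks), I claim $v_i^s$ must fill its second slot with a resident it weakly prefers to $b_i^{i_3}$; the only residents at rank~2 of $v_i^s$ are $b_i^s$ and $b_i^{i_3}$, who are tied. So if $b_i^{i_3}$ is not matched to $v_i^s$, then either $v_i^s$'s second slot is empty or it holds $b_i^s$ (tied with $b_i^{i_3}$). In either case, since $b_i^{i_3}$ strictly prefers $v_i^s$ over both of its other options $v_i^{(i_3)_1}$ and $w^{i_3}$, and $v_i^s$ either is under-subscribed with respect to $q'(v_i^s)=2$ or is matched to a resident ($b_i^s$) no better than $b_i^{i_3}$, the pair $(b_i^{i_3}, v_i^s)$ is a \SBP\ w.r.t. $M'$ (clause (i) if $v_i^s$ has a free slot, clause (ii) via the tie if it holds $b_i^s$). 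This contradicts strong stability of $M'$, so $q'(v_i^{i_3})=2$.

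The main obstacle, and the only place needing care, is verifying the sub-case where $v_i^s$'s second slot holds $b_i^s$: here I must use clause~(ii) of Definition~\ref{def:strong}, which requires $b_i^{i_3} \succeq_{b_i^{i_3}} M'(b_i^{i_3})$ (true, since $v_i^s$ beats both remaining options of $b_i^{i_3}$) together with the existence of $r' \in M'(v_i^s)$ with $b_i^{i_3} \succ_{v_i^s} r'$ --- but $b_i^s$ and $b_i^{i_3}$ are only \emph{tied} at $v_i^s$, not strictly ordered, so clause~(ii) does not directly apply with $r'=b_i^s$. I would resolve this exactly as in Claim~\ref{cl:bpConsistency}: if $v_i^s$'s second slot holds $b_i^s$, then $b_i^s$ is matched to $v_i^s$, its rank-1 hospital, so $b_i^s$ is not blocking; but then consider the \emph{other} gadget-side resident --- in fact the cleaner route is to directly note that if $b_i^{i_3}$ is matched to $v_i^{(i_3)_1}$ or $w^{i_3}$ then $v_i^s$ must be filled to capacity~2 with residents strictly preferred to $b_i^{i_3}$ is impossible (only $b_i^s$ ties it), forcing $b_i^{i_3}\in M'(v_i^s)$, hence $q'(v_i^s)=2$ slots are $a_i^s$ and $b_i^{i_3}$; but then $b_i^s$, which is not matched to $v_i^s$ and cannot be matched to $v_i^s$'s competitor slot, is matched to $v_i^{i_1}$ or $w^s$, and $(b_i^s, v_i^s)$ becomes a \SBP\ by clause~(ii) since $a_i^s$ and $b_i^{i_3}$ do not block $b_i^s$ and... —— this circular reasoning is exactly what Claim~\ref{cl:bpConsistency} disentangles by observing that indifference of $v_i^s$ between $b_i^s$ and $b_i^{i_3}$ means one of them must be strongly blocking unless \emph{both} get their way. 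So the clean statement is: $v_i^s$ with $q'=2$ and one slot on $a_i^s$ has exactly one remaining slot, contested by the tied pair $\{b_i^s, b_i^{i_3}\}$; if that slot goes to neither, the excluded one blocks; if it goes to one, say $b_i^s$, then $b_i^{i_3}$ (strictly preferring $v_i^s$ to its alternatives) together with $v_i^s$ forms a \SBP\ via clause~(ii) using $r' = b_i^s$ --- wait, that still needs strict preference. The correct fix: it goes to one of them, say $b_i^s$; then $b_i^{i_3}\notin M'(v_i^s)$, and since $v_i^s$ is fully subscribed with $a_i^s \succ b_i^{i_3}$ and $b_i^s \sim b_i^{i_3}$, the pair $(b_i^{i_3},v_i^s)$ satisfies clause~(i)? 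No --- clause~(i) needs $v_i^s \succ_{b_i^{i_3}} M'(b_i^{i_3})$ (true) and $r'\in M'(v_i^s)$ with $b_i^{i_3}\succeq_{v_i^s} r'$ (true, $r'=b_i^s$). Hence $(b_i^{i_3},v_i^s)$ is a \SBP\ by clause~(i). This is the crux, and it works; I would present it as the one-paragraph contradiction above, citing the identical mechanism used in Claim~\ref{cl:bpConsistency}.
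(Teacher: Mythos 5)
Your underlying mechanism is the right one and is essentially the paper's argument run as a contradiction, but the write-up as it stands has a concrete gap rooted in a misreading of the construction. By the cyclic ordering of the occurrences of $X_i$, the clause containing $X_i$ that follows $C_{i_3}$ is $C_s$ itself, so the rank-2 hospital of $b_i^{i_3}$ is $v_i^s$; it is not some third hospital $v_i^{(i_3)_1}$ distinct from $v_i^s$. Hence your assertion that $b_i^{i_3}$ ``strictly prefers $v_i^s$ over both of its other options $v_i^{(i_3)_1}$ and $w^{i_3}$'' is not meaningful, and---more importantly---the possibility $M'(b_i^{i_3}) = v_i^s$ is a live case that your final ``clean'' argument never rules out. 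Your concluding case analysis says the contested second slot of $v_i^s$ ``goes to one of them, say $b_i^s$,'' and then exhibits the blocking pair $(b_i^{i_3}, v_i^s)$ via clause (i). But ``say $b_i^s$'' is not a legitimate without-loss-of-generality step: the two tied residents are not symmetric, since $b_i^s$ ranks $v_i^s$ first while $b_i^{i_3}$ ranks it second, and in the sub-case where the slot goes to $b_i^{i_3}$ the pair $(b_i^{i_3}, v_i^s)$ is a matched edge and blocks nothing. That sub-case must be eliminated separately: there $b_i^s$ is excluded from its rank-1 hospital while $v_i^s$ holds the tied resident $b_i^{i_3}$, so $(b_i^s, v_i^s)$ is a \SBP\ by clause (i). You in fact began exactly this argument mid-paragraph and then abandoned it as ``circular''; it is not circular, it is the missing half of the case analysis.

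The paper's proof avoids the issue by doing that half first and arguing directly rather than by contradiction: since $q'(v_i^s)=2$, strong stability forces $(b_i^s, v_i^s)\in M'$ (otherwise $(b_i^s, v_i^s)$ blocks), so both slots of $v_i^s$ are occupied by residents of \gt{s}; then $b_i^{i_3}$, whose rank-2 hospital is the fully subscribed $v_i^s$, must be matched to its rank-1 hospital $v_i^{i_3}$ (else $(b_i^{i_3}, v_i^s)$ blocks via the tie with $b_i^s$), and together with the forced edge $(a_i^{i_3}, v_i^{i_3})$ this gives $|M'(v_i^{i_3})| = 2$, hence $q'(v_i^{i_3}) = 2$. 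Your argument becomes correct once you (i) identify $v_i^{(i_3)_1}$ with $v_i^s$ and (ii) add the missing sub-case; separately, the submitted text with its visible false starts (``wait'', ``No ---'') is not an acceptable final proof and should be rewritten into the two-step case analysis just described.
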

\begin{proof}
    Clearly, $(a_i^s,v_i^s)\in M'$. Since $q'(v_i^s)=2$, it must be the case that $(b_i^s,v_i^s)\in M'$, as otherwise, $(b_i^s,v_i^s)$ blocks $M'$. This implies $|M'(v_i^s)|=2=q'(v_i^s)$, and therefore, both positions of $v_i^s$ are occupied by the residents in \gt{s}. Since $v_i^s$ is matched to a rank-2 resident, the other rank-2 resident $b_i^{i_3}$ of $v_i^s$ which belongs to a different gadget $G_{i_1}$ must be such that $M'(b_i^{i_3})$ is within rank-2 in \prefbi. The fact that both positions of $v_i^s$ are occupied by the residents in \gt{s} implies that  $M'(b_i^{i_3})= v_i^{i_3}$. Also, $M'(a_i^{i_3})=v_i^{i_3}$, as otherwise, $(a_i^{i_3},v_i^{i_3})$ blocks $M'$.  This implies $q'(v_i^{i_3})=2$.
\qed\end{proof}

\begin{lem}\label{lem:SSMimpliesTruthVal}
    If there exists an instance $G'$ obtained from $G$ with $q'(h)\in \{1,2\}$ for all $h\in \HH$ such that $G'$ admits a \SSM, say $M'$, then the instance $\II$ admits a satisfying assignment such that at least one and at most two variables in each clause are set to true.
\end{lem}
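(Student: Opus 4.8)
The plan is to read the truth assignment directly off the quotas of the $v$-hospitals in $G'$, exploiting the three claims just proved. Fix the \SSM\ $M'$ in $G'$. The first step is to establish \emph{consistency} of quotas across gadgets: for every variable $X_i$, the four hospitals $v_i^{s}, v_i^{i_1}, v_i^{i_2}, v_i^{i_3}$ (one for each clause in which $X_i$ occurs, listed in the cyclic order $\mathcal{C}$) all receive the same quota in $G'$. This follows by chaining Claim~\ref{cl:allVDuplicated} around the cycle of clauses containing $X_i$: applied to \gt{s} it gives $q'(v_i^{s})=2 \Rightarrow q'(v_i^{i_3})=2$; applied to $G_{i_3}$ (for which the fourth occurrence of $X_i$ starting at $C_{i_3}$ is $C_{i_2}$) it gives $q'(v_i^{i_3})=2 \Rightarrow q'(v_i^{i_2})=2$; and continuing we obtain the implication cycle $q'(v_i^{s})=2 \Rightarrow q'(v_i^{i_3})=2 \Rightarrow q'(v_i^{i_2})=2 \Rightarrow q'(v_i^{i_1})=2 \Rightarrow q'(v_i^{s})=2$. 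Hence for each variable either all four of these hospitals have quota $2$, or (by contraposition) all four have quota $1$.

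Given consistency, I would define the assignment by setting $X_i$ to \textbf{true} iff $q'(v_i^{s})=2$ for some (equivalently, every) clause \ct{s} containing $X_i$; this is well defined precisely because of the consistency step. It then remains to verify the not-all-equal condition for an arbitrary clause $C_s=(X_i\vee X_j\vee X_k)$. By Claim~\ref{cl:atLeast1vDuplicated}, at least one of $q'(v_i^{s}),q'(v_j^{s}),q'(v_k^{s})$ equals $2$, so at least one of $X_i,X_j,X_k$ is set to true. By Claim~\ref{cl:atmost2vDuplicated}, at least one of these three quotas equals $1$, so at least one of $X_i,X_j,X_k$ is set to false. Therefore every clause of $\II$ has at least one true and at least one false literal, i.e., the constructed assignment is a satisfying \MNESAT\ assignment, as required.

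The only subtle point, and the part I would spell out carefully, is the cyclic chaining in the first step: one must check that applying Claim~\ref{cl:allVDuplicated} successively to the gadgets $G_{s},G_{i_3},G_{i_2},G_{i_1}$ genuinely returns to $v_i^{s}$. This relies on the facts that $X_i$ occurs in exactly four clauses and that the ``fourth occurrence starting at $C_s$'' map, iterated four times over these clauses, is the identity on the cyclic order $\mathcal{C}$. Once this bookkeeping is pinned down, the remainder of the argument is an immediate application of the two counting claims, so I do not expect any further difficulty.
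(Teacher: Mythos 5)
Your proposal is correct and follows essentially the same route as the paper: define the assignment from the quotas of the $v$-hospitals, use Claims~\ref{cl:atLeast1vDuplicated} and~\ref{cl:atmost2vDuplicated} per clause gadget for the not-all-equal condition, and use Claim~\ref{cl:allVDuplicated} for consistency. Your explicit cyclic chaining of Claim~\ref{cl:allVDuplicated} over the four occurrences of each variable just spells out the consistency step the paper states in one line.
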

\begin{proof}
    We obtain a truth assignment for variables in $\II$ using the quotas of hospitals in $G'$. Set $X_i$ in clause \ct{s} to true if $q'(v_i^s)=2$. Otherwise, set $X_i$ in clause \ct{s} to false. By using Claim~\ref{cl:atLeast1vDuplicated} and Claim~\ref{cl:atmost2vDuplicated}, we know that each clause has at least one and at most two variables set to true. This implies that every clause is satisfied. Using Claim~\ref{cl:allVDuplicated}, we conclude that this assignment is consistent. Therefore, every clause has at least one and at most two variables set to true. Thus, we have the required satisfying assignment for~$\II$.
\qed\end{proof}

Next, we show that the preference lists of residents admit single-peakedness\footnote{Single-peaked preferences originate from the study of elections, where preferences are typically assumed to be strict and complete. Preferences are called single-peaked if there is an arrangement of alternatives such that each voter’s preference graph has only one local maximum. In this graph, alternatives are on the $x$-axis, and scores assigned by voters are on the $y$-axis. Single-peaked preferences have also been considered for stable matchings~\cite{bartholdi1986stable,alcalde1994top,bredereck2020stable}. Social choice literature has adapted these notions to incomplete preferences as well~\cite{bredereck2020stable,fitzsimmons2020incomplete}. For the bipartite graph $G=(\RR\cup\HH,E)$, providing a strict linear ordering over $\RR$ and $\HH$ separately suffices.} property. 
We note that the preference list of a degree-2 vertex is always single-peaked with respect to any linear ordering, and the preference list of a degree-3 vertex is single-peaked with respect to the strict linear ordering if its least-preferred neighbor does not lie between its two other neighbors in that ordering. Now, we give the strict linear ordering of hospitals in $G$ as follows. Let $V^s$ denote the strict order $\langle v_i^s, v_j^s,v_k^s \rangle$ and $W$ to denote the strict order $\langle w^1,w^2,\ldots,w^{\alpha}\rangle$. Then the strict linear ordering of hospitals is given as $\langle V^1, V^2, \ldots, V^{\alpha}, W\rangle$. It is trivial to observe that for any resident $r$, the preference list of $r$ has a single peak over the above linear ordering. Thus, resident preferences are single-peaked. 

Now, we show that the hospital preference lists are derived from a master list. We give a master list of residents as follows. Let $A^s$ denote the strict order $\langle a_i^s, a_j^s,a_k^s \rangle$, $B$ denote the tie containing all the $b$-residents and $D^s$ denote the tie  $( d_1^s, d_2^s,d_3^s)$. Then, the master list of residents is given as $\langle A^1, A^2, \ldots, A^{\alpha}, B, D^1, D^2,\ldots, D^{\alpha}\rangle$. It can be easily verified that the preference list of each hospital is derived from this master list.

Lemma~\ref{lem:truthvalImpliesSSM} and Lemma~\ref{lem:SSMimpliesTruthVal}, together with the above discussion about single-peakedness and master lists, complete the proof of the first part of Theorem~\ref{thm:minmax}. 

\vspace{0.1in}

\noindent\textbf{Resident-perfect \hyperref[prob:minmaxss]{\MINMAXSS} problem:} This is a variant of the \hyperref[prob:minmaxss]{\MINMAXSS} problem, which asks for resident-perfect \SSM. We now give a simple modification in the above reduction to show that the resident-perfect \hyperref[prob:minmaxss]{\MINMAXSS} problem is also NP-hard.

In the above reduction, we introduce a unique hospital corresponding to each of $d_1^s, d_2^s$ and $d_3^s$ for $1\le s\le \alpha$, say $f_1^s,f_2^s$ and $f_3^s$, respectively, such that $d_p^s$ ranks $f_p^s$ for $1\le p\le 3$ at rank 2. These $f$-hospitals act as last resorts for each of $d$-residents. Now, it is easy to see that there exists a valid satisfying assignment for $\II$ if and only if there exists an instance $G'=(\RR\cup\HH, E)$ with $q'(h)\in\{1,2\}$ for all $h\in \HH$  obtained from $G$ such that $G'$ admits a resident-perfect \SSM.

This completes the proof of the remaining part of Theorem~\ref{thm:minmax}.

\subsection{MINMAX-SS problem with bounded ties}\label{subsec:MINMAXbounded}
In this section, we consider the \hyperref[prob:minmaxssbt]{\MINMAXSSBT} problem, which is a special case of the \hyperref[prob:minmaxss]{\MINMAXSS} problem. 
First, we observe that executing Algorithm~\ref{algo:MinSum} on an instance $G$ of the \hyperref[prob:minmaxssbt]{\MINMAXSSBT} problem produces an $\ell$-augmented instance $G'$. This follows because the lengths of ties are bounded by $\ell+1$: a hospital $h$ proposes to residents at rank $k$ only if $h$ remains under-subscribed after having proposed to all residents at ranks $1$ through $k-1$. The facts that $(i)$ $h$ was under-subscribed before proposing to $k^{th}$-rank residents, and $(ii)$ the length of the tie at rank $k$ is at most $\ell+1$, together imply that hospital $h$ cannot become over-subscribed by more than $\ell$ residents. Thus, for a given \hyperref[prob:minmaxssbt]{\MINMAXSSBT} instance, the existence of an $\ell$-augmented instance is guaranteed and can be computed efficiently. 

However, the \SSM\ $M'$ returned by Algorithm~\ref{algo:MinSum} need not be resident-optimal. For example, for the instance shown in Fig.~\ref{subfig:prefGBT1}, executing Algorithm~\ref{algo:MinSum} on $G$ outputs the augmented instance $G'$ with $q'(h_1)=2$ and $q'(h_2)=1$ and the \SSM\  $M' = \{(r_1, h_2), (r_2, h_1), (r_3, h_1)\}$. The resident-optimal \SSM\ in $G'$ is $M'' = \{(r_1, h_1), (r_2, h_2), (r_3, h_1)\}$, which is better than $M'$ for $r_1, r_2$ and no worse for the remaining residents. 

\begin{figure}
\floatbox[{\capbeside\thisfloatsetup{capbesideposition={right,center},capbesidewidth=8.5cm}}]{figure}[\FBwidth]
{\caption{}\label{fig:Clonedgraph}}
{\begin{tabular}{lll p{0.3 cm} | p{0.3 cm} lll}
			$r_1$ &:& $h_1,h_2$ & & & & &\\[2pt]
			$r_2$ &:& $h_2,h_1$ & & & & &\\[2pt]
                $r_3$ &:& $h_1$& & & $[1]\ h_1$&:& $(r_2,r_3), r_1$\\[2pt]
                $r_4$ &:& $h_2$& & & $[1]\ h_2$&:& $r_1,r_2,r_4$\\[2pt]
		\end{tabular}
            \caption{The instance $G$ with $\ell = 1$ does not admit a \SSM. Executing Algorithm~\ref{algo:MinSum} on $G$ outputs an augmented instance $G'$ with $q'(h_1) = 2$ and $q'(h_2) = 1$, and a \SSM\ $M' = \{(r_1, h_2), (r_2, h_1), (r_3, h_1)\}$. However, $M'$ is not the resident-optimal \SSM\ in $G'$. Another augmented instance 
            $\widehat{G}$ obtained by setting $\hat{q}(h_1)=\hat{q}(h_2)=2$ contains the matching $\widehat{M}=\{(r_1,h_1), (r_2,h_2), (r_3,h_1), (r_4,h_2)\}$, which is a resident-optimal \SSM\ across all augmentations.}
        \label{subfig:prefGBT1}}
\end{figure}

A natural remedy is to first compute the $\ell$-augmented instance $G'$ by executing Algorithm~\ref{algo:MinSum} on the given instance $G$, and then run the resident-proposing algorithm of Irving~\etal~\cite{irving2003strong} on $G'$. This guarantees a resident-optimal \SSM\ in the $\ell$-augmented instance $G'$, because it is known that the $\RR$-proposing algorithm outputs a resident-optimal \SSM\ whenever the given instance admits one~\cite{manlove1999stable}.  If we do so on the instance shown in Fig.~\ref{subfig:prefGBT1}, we get the matching $M''$ which is indeed resident-optimal in $G'$. However, $G'$ is just one of the many $\ell$-augmented instances of $G$ that are possible, and different augmentations can lead to different resident-optimal matchings in the respective augmented instances. In fact, the matching $M''$ is not the best for the residents among \SSM s across all $\ell$-augmented instances of $G$. 

For example, the instance $\widehat{G}$ obtained from $G$ (in Fig.~\ref{subfig:prefGBT1}) by augmenting the capacities of both $h_1$ and $h_2$ by one, \textit{i.e.} $\hat{q}(h_1)=\hat{q}(h_2)=2$ admits a matching $\widehat{M}=\{(r_1,h_1), (r_2,h_2), (r_3,h_1), (r_4,h_2)\}$, which is a \SSM\ in $\widehat{G}$. Moreover, $\widehat{M}(r)\succeq_r M''(r)$ for each resident $r$, and $\widehat{M}(r_4)\succ_{r_4} M''(r_4)$. 

This example demonstrates that simply executing the resident-proposing algorithm by \cite{irving2003strong} on the augmented instance $G'$ output by Algorithm~\ref{algo:MinSum} does not necessarily yield the best possible outcome for the residents. In this context, we define and consider a \textit{resident-optimal $\ell$-augmented} instance of $G$ (see Section~\ref{sec:problems} for definition).


\vspace{0.1in}
 \noindent\textbf{Our algorithm}
\vspace{0.1in}

 \noindent We present an efficient algorithm that outputs a resident-optimal $\ell$-augmented instance $G'$. We begin by setting a temporary quota for each hospital $h$ as $q_t(h)=q(h)+\ell$. Let us denote this instance by $G_t$. Next, we execute the algorithm by Irving~\etal~\cite{irving2003strong} (restricted to ties on the $\HH$-side as described in Appendix~\ref{sec:irving}) on the instance $G_t$. Let $M'$ be the tentative matching (also called the engagement graph) obtained at the end of the execution by the algorithm of Irving~\etal~\cite{irving2003strong}. 
 We note that a hospital $h$ may be oversubscribed in $M'$ with respect to its original quota $q(h)$. However, for every hospital $h$, we have $|M'(h)| \le q_t(h)$. 
 We further note that the matching $M'$ need not be strongly stable in $G_t$. Yet, we show that it is possible to fix the quota of each hospital $h$  to $q'(h)=\max\{q(h),|M'(h)|\}$ such that the tentative matching $M'$ becomes a strongly stable matching in the modified instance~$G'$. 

 The algorithm (pseudo-code given in Algorithm~\ref{algo:MinMax}) starts by setting a temporary quota $q'(h)=q(h)+\ell$ for each hospital $h\in \HH$, and by initializing a matching $M'$, where each resident is matched to $\bot$. Line~\ref{algoMinMax:while} to Line~\ref{algoMinMax:whileEnd} represents the algorithm by Irving~\etal~\cite{irving2003strong} (restricted to $\HH$-side ties). The tentative matching $M'$ at the end of the while loop is considered.  Finally, our algorithm fixes the quota $q'(h)$ for each hospital $h$ based on the matching $M'$ (see Line~\ref{algoMinMax:capUpdate}).

\begin{algorithm}
    \caption{Algorithm for \hyperref[prob:minmaxssbt]{\MINMAXSSBT} }\label{algo:MinMax}
    \DontPrintSemicolon
    \SetAlgoLined
    set $q'(h)=q(h)+\ell$ for all $h\in \HH$\label{algoMinMax:tempQ}\;
    $M'$ = $\{ (r, \bot )  \ \  \ | \ \ \  \mbox{for all $r \in \RR$ }\}$\;
    \While{ $\exists$ unmatched $r\in\RR$ such that \prefr is not empty\label{algoMinMax:while}}{
        $r$ proposes to the top-ranked hospital $h$ in \prefr\;
       $M' = M'  \cup \{(r, h)\} $\label{algoMinMax:accept}\;

       \If{$|M'(h)|> q'(h)$ \label{algoMinMax:oversub}}{
            let $r_1$ be a least-preferred resident in $M'(h)$\label{algoMinMax:reject1}\;
            suppose $r_1$ is at rank $p$ in \prefh\;
            \For{each $r_2$ at rank $p$ in \prefh}{
                \textbf{if}{$(r_2,h)\in M'$} \textbf{then} $M'=M'\setminus (r_2,h)$\label{algoMinMax:reject2}\;
            }
            \For{each $r_2$ at rank $\ge p$ in \prefh}{
                delete $h$ from \prefrT\ and $r_2$ from \prefh\label{algoMinMax:delete}\;
            }
        }
    \label{algoMinMax:whileEnd}}
    set $q'(h)=\max\{q(h),|M'(h)|\}$ for all $h\in \HH$  \label{algoMinMax:capUpdate}\;
    \Return $M'$ and $G'=(\RR\cup\HH,E)$ with $q'(h)$ for all $h\in\HH$\label{algoMinMax:return}\;
\end{algorithm}

Next, we prove the correctness of our algorithm. In the following lemma, we show that the $\max_{h\in\HH}\{q'(h)-q(h)\} \le \ell$. 
\begin{lem}\label{lem:vaildL}
    The instance $G'=(\RR\cup\HH,E)$ output by the Algorithm~\ref{algo:MinMax} has a property that $q'(h)\le q(h)+\ell$ for all $h\in \HH$.
\end{lem}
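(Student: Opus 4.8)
The plan is to prove the stronger invariant that at the start of every iteration of the while loop (lines~\ref{algoMinMax:while}--\ref{algoMinMax:whileEnd}) we have $|M'(h)| \le q(h)+\ell$ for every hospital $h\in\HH$. Granting this, the lemma follows immediately: the while loop never alters the temporary quotas set on line~\ref{algoMinMax:tempQ}, so when the loop terminates the invariant still holds, and line~\ref{algoMinMax:capUpdate} then sets $q'(h)=\max\{q(h),|M'(h)|\}\le\max\{q(h),\,q(h)+\ell\}=q(h)+\ell$.

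For the base case I would observe that after line~\ref{algoMinMax:tempQ} every resident is matched to $\bot$, so $|M'(h)|=0\le q(h)+\ell$ for all $h$. For the inductive step, consider one iteration in which resident $r$ proposes to hospital $h$. No hospital other than $h$ has its load changed during this iteration, so the bound is inherited for all of them. For $h$ itself, line~\ref{algoMinMax:accept} raises $|M'(h)|$ by exactly one, so immediately afterwards $|M'(h)|\le q(h)+\ell+1$. If in fact $|M'(h)|\le q(h)+\ell$, the test on line~\ref{algoMinMax:oversub} fails and nothing further happens, so the invariant is preserved. Otherwise $|M'(h)|=q(h)+\ell+1$, which exceeds the (unchanged) temporary quota $q'(h)=q(h)+\ell$, so the if-branch is entered; the key point is that the least-preferred matched resident $r_1$ chosen there satisfies $(r_1,h)\in M'$, and therefore the for-loop of line~\ref{algoMinMax:reject2} removes at least the edge $(r_1,h)$, decreasing $|M'(h)|$ by at least one and restoring $|M'(h)|\le q(h)+\ell$.

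This is essentially a bookkeeping induction, and I do not anticipate a genuine obstacle. The only delicate point is verifying that entering the if-branch always evicts at least one resident from $M'(h)$ — which is precisely why $r_1$ is drawn from $M'(h)$ — and it is worth noting that, in contrast to the earlier argument bounding the over-subscription created by Algorithm~\ref{algo:MinSum}, this proof does not need the hypothesis that ties have length at most $\ell+1$: a single rejection at rank $p$ suffices however many residents are tied there.
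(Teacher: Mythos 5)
Your proof is correct and follows essentially the same route as the paper: the paper simply asserts that the over-subscription handling in lines~\ref{algoMinMax:oversub}--\ref{algoMinMax:delete} keeps $|M'(h)|\le q(h)+\ell$ and then invokes the definition of $q'(h)$ at line~\ref{algoMinMax:capUpdate}, while you spell out that assertion as a loop invariant with the correct base and inductive cases. Your closing remark that the bound relies only on the temporary quota, not on the tie-length hypothesis, is also accurate.
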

\begin{proof}
     Lines~\ref{algoMinMax:oversub}--\ref{algoMinMax:delete} of Algorithm~\ref{algo:MinMax} ensure that  $|M'(h)|\le q(h)+\ell$ for all $h\in\HH$. By definition of $q'(h)$ at Line~\ref{algoMinMax:capUpdate}, the lemma holds.
\qed\end{proof}

\begin{cl}\label{cl:atLeastQ1}
    If a hospital $h$ received at least $q(h)$ many proposals during the execution of Algorithm~\ref{algo:MinMax}, then $|M'(h)|\ge q(h)$.
\end{cl}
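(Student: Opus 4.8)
The plan is to track what happens to a hospital $h$ during the execution of Algorithm~\ref{algo:MinMax} and argue that the only way $h$ can drop below $q(h)$ matched residents is by a rejection event (Line~\ref{algoMinMax:reject1} to Line~\ref{algoMinMax:delete}), and that such an event can only be triggered when $h$ is strictly over-subscribed with respect to its temporary quota $q(h)+\ell$, hence when $h$ holds at least $q(h)+\ell+1 > q(h)$ residents. The key monotonicity observations are: (i) whenever a resident proposes to $h$ on Line~\ref{algoMinMax:accept}, the size $|M'(h)|$ increases by exactly one; (ii) the size $|M'(h)|$ decreases only on Line~\ref{algoMinMax:reject2}, and this block is entered only when the preceding check $|M'(h)| > q'(h) = q(h)+\ell$ on Line~\ref{algoMinMax:oversub} succeeds; and (iii) once the rejection block fires, all residents at rank $p$ or worse in \prefh\ are deleted from $h$'s list, so no resident that was rejected (or any worse resident) ever proposes to $h$ again.

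First I would argue that immediately after a rejection block executes, $h$ still holds at least $q(h)$ residents. At the moment Line~\ref{algoMinMax:oversub} is checked we have $|M'(h)| \ge q(h)+\ell+1$. The block removes exactly the residents tied at the least-preferred occupied rank $p$; since ties have length at most $\ell+1$, at most $\ell+1$ residents are removed, leaving $|M'(h)| \ge q(h)+\ell+1-(\ell+1) = q(h)$. So after any rejection, $h$ is left with at least $q(h)$ residents, and between rejections the count only goes up (by proposals) — hence once $h$ has ever reached $q(h)$ held residents it never again drops below $q(h)$.

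It remains to show $h$ does reach $q(h)$ held residents at some point. Here I would use the hypothesis that $h$ received at least $q(h)$ proposals in total. Consider the proposals to $h$ in chronological order. If no rejection block involving $h$ ever fires, then $h$ simply accumulates all proposals and $|M'(h)|$ equals the number of proposals $\ge q(h)$, and we are done by the monotonicity of the count between rejections. If a rejection block involving $h$ does fire, then just before it fired we had $|M'(h)| \ge q(h)+\ell+1 \ge q(h)$, so $h$ had already reached the threshold $q(h)$; by the previous paragraph it stays at or above $q(h)$ thereafter, and in particular $|M'(h)| \ge q(h)$ at termination.

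The main obstacle — really the only subtle point — is the bookkeeping of \emph{why} the rejection block cannot be triggered from a state with fewer than $q(h)+\ell+1$ residents: this is purely the guard on Line~\ref{algoMinMax:oversub}, so the argument is essentially a careful reading of the pseudocode plus the tie-length bound $\ell+1$ used to bound the number of residents removed in one rejection step. One should also note that a resident who proposes to $h$ is added to $M'(h)$ unconditionally (Line~\ref{algoMinMax:accept}) before the over-subscription test, which is what makes "number of proposals received so far" an upper bound that is actually attained between rejections.
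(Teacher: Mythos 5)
Your proof is correct and rests on exactly the same facts as the paper's: the rejection block fires only when $|M'(h)| > q(h)+\ell$, it removes residents at a single rank (hence at most $\ell+1$ of them, leaving at least $q(h)$), and $|M'(h)|$ only grows otherwise. The paper packages this as a contradiction ("a rejection of $\ell+2$ residents at once would force a tie of length $\ell+2$") while you argue it directly, but the substance is identical.
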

\begin{proof}
    Suppose, for contradiction, that $h$ received at least $q(h)$ many proposals during the execution of Algorithm~\ref{algo:MinMax}, but $|M'(h)| < q(h)$. This implies that $h$ must have rejected some of the proposals it received. A hospital rejects a resident only when it becomes over-subscribed. We note that the capacity of $h$ during the execution of the while loop of Algorithm~\ref{algo:MinMax} was $q'(h)=q(h)+\ell$. Thus, $h$ must have become over-subscribed with respect to the quota $q'(h)$ at some point. This further implies that $h$ received at least $q(h) + \ell + 1$ proposals.  By the design of the algorithm, $h$ only rejects proposals at a particular rank (the least-preferred one) at any given time. Once all the proposals at the least-preferred rank are rejected, $h$ is no longer over-subscribed. This is because, as soon as $h$ becomes over-subscribed, it rejects these least-preferred proposals. Since $|M'(h)| < q(h)$, $h$ must have rejected at least $\ell + 2$ proposals at once, all from the same rank. This implies that there exists a tie of length at least $\ell + 2$, contradicting the assumption that the maximum length of ties in the preference list of $h$ is at most $\ell + 1$.
\qed\end{proof}

We use the following claim to prove Lemma~\ref{lem:Gssm}.
\begin{cl}\label{cl:fullH}
    If $q'(h)\ge q(h)$ after the termination of Algorithm~\ref{algo:MinMax}, then $h$ must be fully subscribed in $M'$ with respect to its augmented quota $q'(h)$.
\end{cl}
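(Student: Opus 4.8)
The plan is to first rewrite the conclusion as a pure cardinality statement and then split on the behaviour of $h$ during the while loop. By Line~\ref{algoMinMax:capUpdate}, the quota returned by the algorithm is $q'(h)=\max\{q(h),|M'(h)|\}$, so $q'(h)\ge q(h)$ holds for \emph{every} hospital and the hypothesis of the claim is automatically satisfied. Moreover $|M'(h)|=\max\{q(h),|M'(h)|\}$ precisely when $|M'(h)|\ge q(h)$, so the statement ``$h$ is fully subscribed w.r.t.\ $q'(h)$'' is equivalent to ``$|M'(h)|\ge q(h)$''. Hence it suffices to establish $|M'(h)|\ge q(h)$, and I would do so by splitting on whether $h$ ever rejects a resident during the while loop (Lines~\ref{algoMinMax:while}--\ref{algoMinMax:whileEnd}).

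If $h$ rejects at least one resident, then rejection is triggered only through the test at Line~\ref{algoMinMax:oversub}, i.e.\ at a moment when $|M'(h)|$ strictly exceeds the temporary quota $q(h)+\ell$ fixed at Line~\ref{algoMinMax:tempQ}. Thus $h$ received strictly more than $q(h)+\ell\ge q(h)$ proposals, so in particular at least $q(h)$ of them. Claim~\ref{cl:atLeastQ1} then gives $|M'(h)|\ge q(h)$, so $q'(h)=|M'(h)|$ and $h$ is fully subscribed, as required.

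The case singled out by the feedback is when $h$ \emph{never} rejects, which is exactly where $q'(h)$ may equal $q(h)$. Since $h$ never enters the body guarded by Line~\ref{algoMinMax:oversub}, it never drops a tentatively matched resident (Line~\ref{algoMinMax:reject2}) and never deletes a pair (Line~\ref{algoMinMax:delete}); consequently every resident that ever proposes to $h$ remains matched to $h$ at termination, and $|M'(h)|$ equals the number of proposals $h$ received. If this number is at least $q(h)$ we again conclude $q'(h)=|M'(h)|$ and fullness. The remaining possibility, $|M'(h)|<q(h)$ (so $q'(h)=q(h)$ and $h$ is strictly under-subscribed), is the heart of the matter. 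Here I would use the resident-proposing invariant that a resident proposes to $h$ only after exhausting every strictly better hospital: because a never-rejecting $h$ keeps an open slot throughout the run, any resident $r$ with $h\succeq_r M'(r)$ that proposed to $h$ would still be held by $h$, so no resident outside $M'(h)$ weakly prefers $h$ to its assignment, and the open slots of $h$ cannot be the hospital side of a \SBP.

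I expect this last subcase to be the main obstacle, namely reconciling the literal wording ``fully subscribed'' with the fact that a genuinely unpopular $h$ can terminate strictly under-subscribed. The intended resolution is to note that Claim~\ref{cl:fullH} is invoked in Lemma~\ref{lem:Gssm} only for a hospital that is the endpoint of a candidate blocking pair, i.e.\ a hospital $h$ for which some $r\notin M'(h)$ satisfies $h\succeq_r M'(r)$. For such an $h$ the no-rejection branch is vacuous, since a never-rejecting $h$ would have retained that $r$, contradicting $r\notin M'(h)$; thus $h$ must have rejected, placing it in the first case, where Claim~\ref{cl:atLeastQ1} forces $|M'(h)|\ge q(h)$ and hence $|M'(h)|=q'(h)$. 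Verifying that the under-subscribed, no-rejection slots are always harmless in this sense — equivalently, that they never participate in a \SBP — is the step that genuinely exploits the bounded-tie hypothesis and the resident-proposing structure, and it is where the proof must do its real work before the equivalence of the first paragraph closes the argument.
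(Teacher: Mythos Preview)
You correctly observe that the hypothesis $q'(h)\ge q(h)$ is vacuous—it holds for every hospital by Line~\ref{algoMinMax:capUpdate}—and that the literal conclusion can fail when $|M'(h)|<q(h)$. Your subsequent case analysis (splitting on whether $h$ ever rejects, invoking Claim~\ref{cl:atLeastQ1} in the rejecting case, and arguing in the under-subscribed non-rejecting subcase that such an $h$ cannot be the hospital side of a \SBP) is sound and in effect bypasses the claim to salvage Lemma~\ref{lem:Gssm} directly.

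The paper, however, treats the claim as a one-line triviality. Its entire proof is: ``By definition of $q'(h)$ at Line~\ref{algoMinMax:capUpdate}, if $|M'(h)|\ge q(h)$, then $q'(h)=|M'(h)|$.'' That is, the paper silently reads the hypothesis as $|M'(h)|\ge q(h)$ rather than $q'(h)\ge q(h)$—which is exactly how the claim is invoked in Lemma~\ref{lem:Gssm}, immediately after Claim~\ref{cl:atLeastQ1} has supplied that inequality—and then merely notes that $\max\{q(h),|M'(h)|\}=|M'(h)|$. There is no case split, no appeal to Claim~\ref{cl:atLeastQ1}, and no blocking-pair discussion.

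So your route is genuinely different: you take the statement at face value, discover it is false as written, and then repair the downstream argument. The paper instead reads the claim with its evidently intended hypothesis and dispatches it by unpacking $\max$. Your treatment is more scrupulous about the mis-statement, but once the intended hypothesis is recognised the claim requires no work at all; your machinery (rejection cases, the resident-proposing invariant, the harmlessness of under-subscribed never-rejecting hospitals) really belongs to the proof of Lemma~\ref{lem:Gssm}, not to this auxiliary claim.
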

\begin{proof}
    By definition of $q'(h)$ at Line~\ref{algoMinMax:capUpdate}, if $|M'(h)|\ge q(h)$, then $q'(h)=|M'(h)|$.
\qed\end{proof}

\begin{lem}\label{lem:Gssm}
    Let $G'=(\RR\cup\HH,E)$ be the instance and $M'$ be the matching produced by Algorithm~\ref{algo:MinMax}. Then the matching $M'$ is a \SSM\ in the instance $G'$.
\end{lem}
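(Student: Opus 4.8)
The plan is to show that $M'$ has no strong blocking pair in $G'$, i.e., no edge $(r,h)\in E\setminus M'$ satisfying either condition (i) or (ii) of Definition~\ref{def:strong}. Since $M'$ is produced by a resident-proposing procedure (lines~\ref{algoMinMax:while}--\ref{algoMinMax:whileEnd}), the natural strategy is: a blocking pair $(r,h)$ would require $h\succeq_r M'(r)$, which forces $r$ to have proposed to $h$ at some point (residents propose in preference order and $r$ only stops proposing when matched to some hospital it weakly prefers to $h$, or when its list is exhausted — but an exhausted list can't give $h\succeq_r M'(r)$ with $h$ a neighbor, since $h$ would have been deleted only after $r$ was rejected). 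So $r$ proposed to $h$ and was subsequently rejected (or displaced), meaning $h$ deleted $r$ and every resident of rank $\ge p$ (the rank of the least-preferred resident in $M'(h)$ at the time of over-subscription), where $r$ sits at rank $\ge p$ in $\prefh$.

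\textbf{Key steps.} First I would fix an alleged strong blocking pair $(r,h)$ and split into the two cases of Definition~\ref{def:strong}. In case (i), $h\succ_r M'(r)$; in case (ii), $h\succeq_r M'(r)$ and there is $r'\in M'(h)$ with $r\succ_h r'$. In both cases $h\succeq_r M'(r)$, so by the propose-and-reject dynamics $r$ proposed to $h$ and was deleted from $\prefh$. When $r$ was deleted, it was at some rank $\ge p$ where $p$ was the rank of a least-preferred resident then in $M'(h)$; crucially, once deletions at rank $p$ happen, every resident currently or subsequently matched to $h$ has rank strictly better than $p$, hence strictly better than (or in rank, no worse than) $r$. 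This rules out case (ii): every $r'\in M'(h)$ satisfies $r'\succeq_h r$, contradicting $r\succ_h r'$. For case (i), I need $h$ to be fully subscribed in $M'$ w.r.t.\ $q'(h)$ with residents all weakly preferred to $r$; fullness follows from Claim~\ref{cl:fullH} once I argue $q'(h)\ge q(h)$ — and since $h$ received at least $q(h)$ proposals (indeed more, as it over-subscribed and deleted), Claim~\ref{cl:atLeastQ1} gives $|M'(h)|\ge q(h)$, hence $q'(h)=|M'(h)|\ge q(h)$. Then $(r,h)$ with $r$ ranked no better than everyone in $M'(h)$ and $h$ full cannot satisfy (i) either.

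\textbf{Main obstacle.} The subtle point is handling the case where $r$'s preference list becomes empty or $r$ proposed to $h$ but $h$ was \emph{never} over-subscribed (so $r$ wasn't deleted) — I must rule out $r$ ending unmatched or matched worse than $h$ while $(r,h)$ still being an edge of $G'$. The resolution is that $h$ is deleted from $\prefrp$ only simultaneously with $r$ being deleted from $\prefh$ (line~\ref{algoMinMax:delete}), so $(r,h)$ remaining an edge means neither deletion happened, forcing $r$'s last proposal to $h$ to have been accepted and never displaced; but then $M'(r)=h$, contradicting $(r,h)\notin M'$. A second delicate point is the tie-length bound: I must confirm that when $h$ rejects at rank $p$, it does not simultaneously need to reject \emph{all} of $M'(h)$ — Claim~\ref{cl:atLeastQ1} already packages this, since the bounded-tie hypothesis $\ell+1$ guarantees $h$ cannot drop below $q(h)$ matched residents. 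So after invoking Claims~\ref{cl:atLeastQ1} and~\ref{cl:fullH}, the argument closes: no edge of $G'\setminus M'$ can be a strong blocking pair, hence $M'$ is a \SSM\ in $G'$.

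\begin{proof}[Proof sketch]
Suppose, for contradiction, that some edge $(r,h)\in E\setminus M'$ is a \SBP\ w.r.t.\ $M'$ in $G'$. In either case of Definition~\ref{def:strong}, $h\succeq_r M'(r)$.

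First, observe that if the edge $(r,h)$ still exists in $G'$, then $h$ was never deleted from $\prefrp$ and $r$ was never deleted from $\prefh$ (deletions at line~\ref{algoMinMax:delete} happen on both sides simultaneously). In particular, during the while loop, $r$ must at some point have proposed to $h$: $r$ proposes in order of preference, and $r$ stops proposing only when it is matched to a hospital; since $h$ is never deleted from $r$'s list and $h\succeq_r M'(r)$, the last hospital $r$ proposed to is $M'(r)$ with $M'(r)\succeq_r$ (nothing strictly better than $h$ remaining), so $r$ did propose to $h$ at or before the step that gave $r$ its partner $M'(r)$. If that proposal to $h$ was never undone, then $(r,h)\in M'$, a contradiction. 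Hence $r$ proposed to $h$ and this match was later removed at line~\ref{algoMinMax:reject2}, which only happens when $h$ became over-subscribed w.r.t.\ $q'(h)=q(h)+\ell$ at the time, triggering the deletions of line~\ref{algoMinMax:delete} at some rank $p$. But then $r$, being removed, is at rank $\ge p$ in $\prefh$, which contradicts the fact that $r$ was never deleted from $\prefh$.

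Therefore, in fact $r$ proposed to $h$ and was removed/deleted, so $h$ became over-subscribed at some point and $r$ was at rank $\ge p$ in $\prefh$ for the relevant $p$. After this event, every resident in $M'(h)$ is at rank strictly better than $p$, hence every $r'\in M'(h)$ satisfies $r'\succ_h r$. This immediately contradicts case (ii) of Definition~\ref{def:strong}, which requires some $r'\in M'(h)$ with $r\succ_h r'$.

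It remains to rule out case (i): $h\succ_r M'(r)$. Since $h$ became over-subscribed during the loop, it received more than $q'(h)=q(h)+\ell \ge q(h)$ proposals; in particular at least $q(h)$ proposals, so by Claim~\ref{cl:atLeastQ1}, $|M'(h)|\ge q(h)$, and hence by line~\ref{algoMinMax:capUpdate}, $q'(h)=|M'(h)|\ge q(h)$. By Claim~\ref{cl:fullH}, $h$ is fully subscribed in $M'$ w.r.t.\ $q'(h)$, and as argued above every $r'\in M'(h)$ satisfies $r'\succ_h r$. Thus there is no $r'\in M'(h)$ with $r\succeq_h r'$, contradicting case (i).

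In all cases we reach a contradiction, so no \SBP\ w.r.t.\ $M'$ exists in $G'$, i.e., $M'$ is a \SSM\ in $G'$.
\qed\end{proof}
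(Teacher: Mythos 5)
Your proof is correct in substance and takes essentially the same route as the paper's: you argue that a blocking resident $r$ must have been deleted from \prefh\ during the run, so $h$ became over-subscribed with respect to $q(h)+\ell$, and then you combine the rank-$p$ deletion rule (every resident left in or later added to $M'(h)$ is strictly preferred to $r$) with Claim~\ref{cl:atLeastQ1} and Claim~\ref{cl:fullH} to conclude that $h$ is fully subscribed w.r.t.\ $q'(h)$ with strictly better residents, so neither case of Definition~\ref{def:strong} can hold.

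One assertion in your write-up is false and should be excised, even though the argument does not ultimately rest on it: you claim that ``if the edge $(r,h)$ still exists in $G'$, then $h$ was never deleted from \prefr\ and $r$ was never deleted from \prefh.'' The instance returned at Line~\ref{algoMinMax:return} is $G'=(\RR\cup\HH,E)$ with the \emph{original} edge set and preference lists; only the quotas are modified at Line~\ref{algoMinMax:capUpdate}. So every edge deleted during the execution is still an edge of $G'$, and the entire content of the lemma is precisely that these deleted edges do not block $M'$ in $G'$. Your argument survives because you use this ``observation'' only as a case that you then rule out: read it as the case distinction ``either $r$ was never deleted from \prefh\ (impossible, as you show, since then $(r,h)\in M'$) or $r$ was deleted at some over-subscription event of $h$.'' In the second case, note that $r$ need not ever have proposed to $h$ --- $h$ may have been struck from \prefr\ before $r$ reached it --- but the two facts you actually use (that $h$ became over-subscribed, and that the rank of $r$ in \prefh\ is at least the deletion threshold $p$, so every resident in the final $M'(h)$ is strictly preferred to $r$) hold in either sub-case; this is exactly how the paper phrases it (``either $r$ got rejected by $h$ or $h$ deleted $r$ from \prefh'').
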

\begin{proof}
     Suppose for contradiction that $M'$ is not a \SSM\ in $G'$. Let the pair $(r,h)$ block the matching $M'$ in $G'$. Therefore, $r$ is either unmatched in $M'$ or $h\succ_r M'(r)=h'$. Clearly, $r$ proposed to $h$ during the execution of Algorithm~\ref{algo:MinMax}. Then it must be the case that either $r$ got rejected by $h$ or $h$ deleted $r$ from $\prefh$. This can happen only if $h$ received at least $q(h)+\ell+1 > q(h)$ proposals. By Claim~\ref{cl:atLeastQ1}, $|M'(h)|\ge q(h)$. Thus, in $G'$, the quota $q'(h)$ of $h$ must be such that $q'(h)\ge q(h)$. By Claim~\ref{cl:fullH}, $h$ is fully subscribed in $M'$ with respect to $q'(h)$ in the instance $G'$.
     Also, note that $h$ rejects a resident only when $r$ is the least-preferred matched resident to $h$. At the time of this rejection, $h$ also deletes all the residents that are at the same or greater rank compared to $r$ in \prefh. Thus, for all $r'\in M'(h)$, it must be the case that $r'\succ_h r$.

    Therefore, $h$ is fully subscribed in $M'$ with all residents in $M'(h)$ strictly better-preferred than $r$ in the instance $G'$, and hence the edge $(r,h)$ cannot block $M'$, contradicting the assumption that it is a blocking pair.
\qed\end{proof}

Lemma~\ref{lem:vaildL} and Lemma~\ref{lem:Gssm} together imply that the instance $G'$ produced by Algorithm~\ref{algo:MinMax} is indeed an $\ell$-augmented instance. 

Next, we prove that the matching $M'$ returned by Algorithm~\ref{algo:MinMax} is the best \SSM\ for residents across \SSM s of all $\ell$-augmented instances for the given \hyperref[prob:minmaxssbt]{\MINMAXSSBT} instance.

\begin{lem}\label{lem:BestForRes}
    The instance $G'$ returned by Algorithm~\ref{algo:MinMax} is a resident-optimal $\ell$-augmented instance of $G$. Moreover, the matching $M'$ in $G'$ is a resident-optimal \SSM\ across all $\ell$-augmented instances of $G$.
\end{lem}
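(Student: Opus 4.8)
The plan is to compare the matching $M'$ produced by Algorithm~\ref{algo:MinMax} against an arbitrary $\ell$-augmented instance $\widehat{G}$ of $G$ with a \SSM\ $\widehat{M}$, and show $M'(r) \succeq_r \widehat{M}(r)$ for every resident $r$. First I would recall that the while-loop of Algorithm~\ref{algo:MinMax} (Lines~\ref{algoMinMax:while}--\ref{algoMinMax:whileEnd}) is exactly the resident-proposing strong-stability algorithm of Irving~\etal\ run on the instance $G_t$ with inflated quotas $q_t(h) = q(h) + \ell$. The known optimality property of that algorithm~\cite{manlove1999stable,irving2003strong} says: if a resident $r$ is ever deleted from a hospital $h$'s list (equivalently, $h$ deletes $r$ because $h$ becomes over-subscribed at $r$'s rank), then $(r,h)$ is not an edge of any \SSM\ of $G_t$. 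The crux is to transfer this ``no deletion is ever regretted'' guarantee from $G_t$ to every $\ell$-augmented instance $\widehat{G}$ simultaneously, since each such $\widehat{G}$ has $\hat{q}(h) \le q(h)+\ell = q_t(h)$, i.e.\ $\widehat{G}$ is ``dominated'' by $G_t$ in quotas.

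The key steps, in order, would be: (1) State and invoke the invariant of the Irving~\etal\ deletion procedure on $G_t$: whenever $h$ deletes all residents of rank $\ge p$ from its list (Line~\ref{algoMinMax:delete}), it is because $h$ has received more than $q_t(h)$ proposals from residents of rank $\le p$; hence in $G_t$ every \SSM\ matches $h$ to $q_t(h)$ residents all strictly better than rank $p$, so no deleted $(r,h)$ lies in a \SSM\ of $G_t$. (2) Show that any \SSM\ $\widehat{M}$ of any $\ell$-augmented $\widehat{G}$ can be viewed as a matching in $G_t$ (as $\hat{q}(h) \le q_t(h)$) that is still strongly stable \emph{with respect to the full edge set of $G_t$} --- because adding capacity to hospitals only removes, never creates, blocking pairs when the matching is unchanged (this is where I would be careful: a blocking pair $(r,h)$ in $G_t$ needs $h$ under-subscribed w.r.t.\ $q_t(h)$ or $h$ matched to someone worse than $r$; if $\widehat{M}$ has no such pair w.r.t.\ $\hat{q}(h) \le q_t(h)$ then... actually the direction is the opposite, so I must instead argue directly). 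The cleaner route for (2): run the \emph{same} proposal sequence of Algorithm~\ref{algo:MinMax} and prove by induction on proposals that no resident ever proposes to, or is deleted from, a hospital along an edge used by $\widehat{M}$; i.e.\ the first ``bad'' deletion would contradict strong stability of $\widehat{M}$ in $\widehat{G}$, using that $\hat{q}(h) \le q_t(h)$ means $h$ still cannot hold all the over-$q_t(h)$-many better proposers. (3) Conclude: since $r$ proposes in decreasing order of preference and is never deleted from $\widehat{M}(r)$ before the algorithm matches $r$ to something at least as good, $M'(r) \succeq_r \widehat{M}(r)$. (4) Since $G'$ has $q'(h) = \max\{q(h), |M'(h)|\}$ and $M'$ is a \SSM\ in $G'$ (Lemma~\ref{lem:Gssm}) which is an $\ell$-augmented instance (Lemma~\ref{lem:vaildL}), and $M'$ is resident-best across all of them, $G'$ is a resident-optimal $\ell$-augmented instance; the resident-count statement follows because $M'(r) \succeq_r \widehat{M}(r)$ and $\widehat{M}(r) \ne \bot$ forces $M'(r) \ne \bot$.

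The main obstacle I expect is step (2): making precise the induction that the deletions performed while running the proposal sequence with quotas $q_t(h)$ never delete an edge of $\widehat{M}$, for an \emph{arbitrary} $\ell$-augmented $\widehat{G}$ whose quotas may differ hospital-by-hospital but are all $\le q_t(h)$. The argument must show that if $(r,h)$ is the first such deleted edge with $(r,h) \in \widehat{M}$, then at that moment $h$ had received $> q_t(h) \ge \hat{q}(h)$ proposals from residents it weakly prefers to $r$, none of whom (by minimality of the ``first bad deletion'') had yet been rejected along a $\widehat{M}$-edge, so each such proposer $r''$ either has $\widehat{M}(r'') \succeq_{r''} h$ or is matched in $\widehat{M}$ to a hospital it prefers over $h$ --- in either case $(r'',h)$ or the structure of $\widehat{M}$ at $h$ yields a strong blocking pair for $\widehat{M}$ in $\widehat{G}$, the contradiction. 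Getting the tie-handling right here (a hospital rejects an entire rank-class at once, and residents tied with $r$ at $h$ must be treated uniformly) is the delicate part, but it mirrors the standard correctness proof of the Irving~\etal\ algorithm and the argument in Claim~\ref{cl:mustBeMatched}.
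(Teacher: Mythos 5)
Your plan is correct and follows essentially the same route as the paper: run the proposal phase with inflated quotas $q(h)+\ell$, take the first deletion/rejection of an edge of $\widehat{M}$ (for an arbitrary $\ell$-augmented $\widehat{G}$ with strongly stable $\widehat{M}$), observe that at that moment $h$ holds more than $q(h)+\ell\ge\hat{q}(h)$ residents all weakly preferred to the rejected resident, pick one of them outside $\widehat{M}(h)$, and reach a contradiction via a strong blocking pair or the minimality of the first bad event. The only blemish is your final case split ("$\widehat{M}(r'')\succeq_{r''}h$ or strictly better"), which is redundant and does not itself yield a blocking pair in the strictly-better case; minimality already rules that case out (as in the paper, where the strictly-better case contradicts the choice of first rejected $\widehat{M}$-edge), and only the weakly-worse case produces the blocking pair.
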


\begin{proof}
    Suppose for contradiction that there exists an $\ell$-augmented instance $\widehat{G}$ and a \SSM\ $\widehat{M}$ in $\widehat{G}$ such that $\widehat{M}(r) \succ_r M'(r)$ for some resident $r\in \RR$.

    Consider the first proposal, in the execution of Algorithm~\ref{algo:MinMax}, from some resident $r$ to some hospital $h$ such that $(r,h)\in \widehat{M}$ but $h$ rejected $r$ during that execution. The existence of such a proposal is guaranteed because of our assumption that $\widehat{M}(r) \succ_r M'(r)$. The fact that $h$ rejected $r$ during our execution sequence implies that at the time when $h$ rejected $r$, it must be the case that $|M'(h)|> q'(h)=q(h)+\ell$, and each resident $r'$ in $M'(h)$ at this time must be such that $r'\succeq_h r$. Let us denote by $\RR_h^r$ all the residents in $M'(h)$ at this time. The fact that $|\RR_h^r|>q(h)+ \ell$ implies that $\widehat{M}(h) \neq \RR_h^r$. Thus, there exists $r_1\in \RR_h^r$ such that $r_1\notin \widehat{M}(h)$. If $\widehat{M}(r_1) \prec_{r_1} h$, then $(r_1,h)$ forms a \SBP\ with respect to $\widehat{M}$. Thus, $\widehat{M}(r_1) \succ_{r_1} h$. Assume that $\widehat{M}(r_1) = h_1$. 

    Since $r_1\in \RR_h^r$, it must be that $r_1$ proposed to $h$. This further implies that $r_1$ proposed to $h_1$ and $h_1$ rejected $r_1$. Clearly, the rejection of $r_1$ by $h_1$ happened before $r_1$ proposed to $h$. Note that at the time when $h$ rejected $r$, the resident $r_1$ was in $M'(h)$. Thus, rejection of $r$ by $h$ happened only after $h_1$ rejected $r_1$. This contradicts the assumption that $r$ to $h$ was the first proposal that was rejected by any hospital and $(r,h)\in \widehat{M}$.\qed
\end{proof}

Lemma~\ref{lem:BestForRes} immediately proves the following claim.

\begin{cl}\label{cl:minmaxBTcorr}
    The \SSM\ $M'$ returned by Algorithm~\ref{algo:MinMax} matches the maximum number of residents across \SSM s of all $\ell$-augmented instances of $G$.
\end{cl}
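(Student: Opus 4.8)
The plan is to obtain this claim almost immediately from Lemma~\ref{lem:BestForRes}, which already establishes the stronger statement that $M'$ resident-wise dominates every \SSM\ of every $\ell$-augmented instance of $G$. Concretely, Lemma~\ref{lem:BestForRes} gives: for every $\ell$-augmented instance $\widehat{G}$ of $G$ and every \SSM\ $\widehat{M}$ in $\widehat{G}$, we have $M'(r)\succeq_r\widehat{M}(r)$ for all $r\in\RR$. I would also recall (from Lemma~\ref{lem:vaildL} and Lemma~\ref{lem:Gssm}) that $M'$ is itself a \SSM\ in the $\ell$-augmented instance $G'$, so $M'$ is one of the matchings over which the maximum in the claim is taken.

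The first step is to translate this domination into a comparison of matched-resident sets. Fix any $\ell$-augmented instance $\widehat{G}$ and any \SSM\ $\widehat{M}$ in it, and let $r\in\RR$ be matched in $\widehat{M}$, i.e.\ $\widehat{M}(r)\neq\bot$. Since every resident strictly prefers each of its real neighbours to $\bot$, and $M'(r)\succeq_r\widehat{M}(r)$, we get $M'(r)\succeq_r\widehat{M}(r)\succ_r\bot$, hence $M'(r)\neq\bot$, so $r$ is matched in $M'$. Therefore $\{r\in\RR : \widehat{M}(r)\neq\bot\}\subseteq\{r\in\RR : M'(r)\neq\bot\}$, and in particular $|\{r : \widehat{M}(r)\neq\bot\}|\le|\{r : M'(r)\neq\bot\}|$.

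To conclude, I would observe that the previous paragraph holds for every $\ell$-augmented instance $\widehat{G}$ and every \SSM\ $\widehat{M}$ in it, so no \SSM\ of any $\ell$-augmented instance of $G$ matches more residents than $M'$ does; and since $M'$ is a \SSM\ in the $\ell$-augmented instance $G'$, this upper bound is attained. Hence $M'$ matches the maximum number of residents across \SSM s of all $\ell$-augmented instances of $G$.

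I expect no real obstacle here: all the substantive work is in Lemma~\ref{lem:BestForRes} and the analysis of Algorithm~\ref{algo:MinMax}. The one point that deserves a moment's care is the conversion step, which relies on the paper's convention that an unmatched resident is matched to $\bot$ and that $\bot$ is strictly the least-preferred option for every resident; this is exactly what makes a resident-wise dominating matching match a superset of residents, so that ``best for residents'' yields ``most residents matched''. (One could additionally invoke the strong-stability analogue of the Rural-Hospitals theorem---all \SSM s of a fixed instance match the same resident set---to simplify the ``across all \SSM s of a fixed $\ell$-augmented instance'' phrasing, but this is not needed since the domination statement already quantifies over every \SSM.)
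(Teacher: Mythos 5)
Your proposal is correct and follows the paper's route exactly: the paper also derives this claim immediately from Lemma~\ref{lem:BestForRes}, and your only addition is to spell out the (straightforward) step that resident-wise domination, together with the convention that $\bot$ is least preferred, implies the matched-resident set of any \SSM\ of any $\ell$-augmented instance is contained in that of $M'$.
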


Lemma~\ref{lem:BestForRes} and Claim~\ref{cl:minmaxBTcorr} together complete the proof of Theorem~\ref{thm:MINMAXBT}

\section{Conclusion}\label{sec:Conclusion}

In this paper, we study the capacity augmentation problem for strongly stable matchings in the \HRHT\ setting. 
We investigate two natural optimization criteria, \MINSUM\ and \MINMAX.


The NP-hardness of the \MINMAXSS\ problem naturally raises the question of whether a good approximation algorithm exists; we leave this as an open question.  Our work considers only quota \textit{augmentations}; the complementary problem of quota \textit{reductions}, studied for strict preferences by Gokhale~\etal~\cite{GokhaleSNV24AAMASCap}, remains largely unexplored in the presence of ties and is a natural next step.  Finally, the \MINSUMCOST\ inapproximability result relies on zero-one costs. It would be interesting to identify structural restrictions on the cost function or preference lists under which the problem becomes tractable. For instance, uniform costs reduce to \MINSUMSS\ and are polynomial-time solvable.

%
%
%
\newpage
\bibliographystyle{splncs} 
\bibliography{refs}
\newpage
\appendix

\section{Background: An Algorithm for Strong Stability in \HRHT}\label{sec:irving}

In this section, we look at the algorithm for determining whether a given instance admits a \SSM, and for computing one if it does. Irving \etal~\cite{irving2003strong} explored strong stability in the context of the \HRT\ problem and presented an algorithm for finding a \SSM\ when one exists. Our work focuses on the \HRHT\ problem, where ties are restricted only to the hospitals' side. Therefore, we describe a simplified version of the algorithm by Irving \etal~\cite{irving2003strong}.

 The pseudo-code for the simplified version of their algorithm is presented in Algorithm~\ref{algo:HRHTSSM}. The algorithm begins with every resident unmatched, that is, each resident is initially matched to $\bot$ (see Line~\ref{algoHRHTSSM:InitMatch}). For each hospital $h$, the algorithm maintains a variable $full(h)$, to track whether $h$ becomes fully subscribed during execution. Initially, $full(h) = \text{false}$ for every $h \in \HH$ (see Line~\ref{algoHRHTSSM:notFull}). During the algorithm, each unmatched resident $r$ proposes to its top-ranked hospital, say $h$.  When $h$ receives a proposal from $r$, it provisionally accepts it (see Line~\ref{algoHRHTSSM:accept}). If $h$ becomes fully subscribed, $full(h)$ is set to true. In this process, some hospitals may become over-subscribed.  If a hospital $h$ is over-subscribed, it rejects all residents matched to it in $M$ at the least-preferred rank (see Lines~\ref{algoHRHTSSM:reject1}--\ref{algoHRHTSSM:reject2}). This is because none of the residents at this rank or lower-preferred rank can be matched to $h$ in any \SSM. Therefore, all such edges are deleted from the given instance (see Line~\ref{algoHRHTSSM:delete}). Subsequently, for each resident, $r\in \RR$, either $r$ is matched to some hospital or \prefr\ becomes empty. The proposal sequence terminates when either all residents are matched or \prefr\ for each unmatched resident becomes empty.

 After the proposal sequence terminates, the algorithm uses the tracking variable $full(h)$ to determine whether a \SSM\ exists. If there is a hospital $h$ such that $full(h)=$ true and $h$ is under-subscribed in the final matching $M$, then $M$ is blocked by $(r',h)$ where $r'$ is a resident who was rejected by $h$ during the execution of Algorithm~\ref{algo:HRHTSSM}. In this case, Algorithm~\ref{algo:HRHTSSM} concludes that no \SSM\ exists for the given instance. On the other hand, if for every $h$ such that $full(h)=$ true, $h$ is not under-subscribed, then Algorithm~\ref{algo:HRHTSSM} declares that $M$ is a \SSM\ for the given instance.

\begin{algorithm}[h]
    \caption{Algorithm for \SSM\ in \HRHT\ Problem }\label{algo:HRHTSSM}
    \DontPrintSemicolon
    \SetAlgoLined
    $M$ = $\{ (r, \bot )  \ \  \ | \ \ \  \mbox{for every resident $r \in \RR$ }\}$\label{algoHRHTSSM:InitMatch}\;
    set $full(h)= \text{false}$ for each hospital $h\in \HH$\label{algoHRHTSSM:notFull}\;
    \While{$\exists$ unmatched $r\in\RR$ such that \prefr is not empty\label{algoHRHTSSM:while}}{
        $r$ proposes to the top-ranked hospital $h$ in \prefr\;
       $M = M  \cup \{(r, h)\} $\label{algoHRHTSSM:accept}\;
       \If{$|M(h)|\ge q(h)$ \label{algoHRHTSSM:fullysub}}{
            $full(h)=\text{true}$\;
       \If{$|M(h)|> q(h)$ \label{algoHRHTSSM:oversub}}{
            let $r'$ be a least-preferred resident in $M(h)$\label{algoHRHTSSM:reject1}\;
            suppose $r'$ is at rank $p$ in \prefh\;
            \For{each $r''$ at rank $p$ in \prefh}{
                \textbf{if}{$(r'',h)\in M$} \textbf{then} $M=M\setminus (r'',h)$\label{algoHRHTSSM:reject2}\;
            }
            \For{each $r''$ at rank $\ge p$ in \prefh}{
                delete $h$ from \prefrpp\ and $r''$ from \prefh\label{algoHRHTSSM:delete}\;
            }
        }
    }
    }
    \If{$\exists\ h\in \HH$ such that $full(h)=$ \text{true} and $|M(h)|<q(h)$\label{algoHRHRTSSM:check}}{
        $G$ does not admit a \SSM\label{algoHRHTSSM:NoInstance}\;
    }
    \Else{
        \Return $M$ as a \SSM\ in $G$\label{algoHRHTSSM:return}\;
    }
    
\end{algorithm}

Now, we prove the correctness of this algorithm. First, we show that if Algorithm~\ref{algo:HRHTSSM} returns a matching $M$, then $M$ is indeed a \SSM. We note that by the design of the algorithm, no hospital can be over-subscribed in $M$. This is because of the following reason. Since preference lists of residents are strict and at any given time only one resident can make a proposal, a hospital $h$ can become over-subscribed by at most one position.  As soon as a hospital becomes over-subscribed, Line~\ref{algoHRHTSSM:oversub} -- Line~\ref{algoHRHTSSM:delete} are executed, and $h$ rejects at least one resident. Consequently, no hospital remains over-subscribed, ensuring that $M$ is a valid matching. Now, we show the strong stability of $M$.

\begin{lem}\label{lem:algoHRHT1}
    The matching $M$ returned by Algorithm~\ref{algo:HRHTSSM} 
    is a \SSM.
\end{lem}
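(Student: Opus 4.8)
The plan is to prove strong stability by contradiction. Suppose some $(r,h)\in E\setminus M$ is a strong blocking pair with respect to the returned matching $M$, where, following the paper's convention, an under-subscribed hospital has copies of $\bot$ occupying its empty slots (so $\bot\in M(h)$ whenever $|M(h)|<q(h)$). First I would reduce Definition~\ref{def:strong} to a single condition: both clauses force $h\succ_r M(r)$ (clause (i) says so directly; clause (ii) gives $h\succeq_r M(r)$, and since $r$'s list is strict and $(r,h)\notin M$ implies $h\neq M(r)$, this is again $h\succ_r M(r)$), and both clauses produce an $r'\in M(h)$ with $r\succeq_h r'$ (clause (ii) even gives $r\succ_h r'$). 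So it suffices to rule out an edge $(r,h)\notin M$ with $h\succ_r M(r)$ and some $r'\in M(h)$ satisfying $r\succeq_h r'$.

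The key structural observation I would isolate and prove is: \emph{once the edge $(r,h)$ is deleted by the algorithm, every actual (non-$\bot$) resident matched to $h$ from that moment onward, and in particular in the final $M$, is strictly preferred by $h$ to $r$.} An edge incident to $h$ is deleted only inside the over-subscription block of Algorithm~\ref{algo:HRHTSSM}, when $h$ fixes its least-preferred occupied rank $p$, rejects exactly the residents at rank $p$, and deletes from $\prefh$ every resident of rank $\ge p$ (and $h$ from their lists). Since $p$ is the \emph{least}-preferred occupied rank, before the deletion $M(h)$ contained no resident worse than rank $p$, so after removing the rank-$p$ residents $M(h)$ lies entirely within ranks $<p$; moreover $\prefh$ thereafter contains only rank-$<p$ residents. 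Because the lists stay mutually consistent (deletions are performed in matched $(r'',h)$ pairs) and a resident can (re-)enter $M(h)$ only by proposing along a surviving edge, $M(h)$ remains within ranks $<p$ for the rest of the run. As $r$ is deleted at this step, $r$ has rank $\ge p$ in $\prefh$, so all these residents are strictly preferred by $h$ to $r$.

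With this in hand I would split into two cases. If the edge $(r,h)$ is never deleted: $r$ cannot be unmatched at termination, since the while loop stops only when every unmatched resident has an empty list while $h\in\prefr$; so $r$ is matched, and since $r$ proposes down its (strict) list in preference order, it could not have ended matched to $M(r)$ with the more preferred $h$ still present in $\prefr$ — hence $M(r)\succeq_r h$, contradicting $h\succ_r M(r)$. If the edge $(r,h)$ was deleted, then by the structural observation every actual resident in $M(h)$ is strictly preferred by $h$ to $r$, so the only way to have $r'\in M(h)$ with $r\succeq_h r'$ is $r'=\bot$, i.e., $h$ is under-subscribed in $M$. But the deletion of $(r,h)$ occurred when $h$ became over-subscribed, so $full(h)$ was set to true; since the algorithm \emph{returned} $M$ rather than declaring that no \SSM\ exists, every hospital with $full(h)=\text{true}$ is fully subscribed in $M$, so $h$ has no $\bot$-slot. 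Either way we reach a contradiction, so $M$ is strongly stable.

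I expect the main obstacle to be the structural observation — precisely, arguing that after $h$ truncates its list at the least-preferred occupied rank, no resident of rank $\ge$ that of $r$ can ever re-enter $M(h)$. This requires carefully combining three facts: that $p$ is genuinely the worst occupied rank at the deletion instant (so nothing worse lingers in $M(h)$ after the rejections), that preference lists remain mutually consistent because deletions come in pairs, and that residents join $M(h)$ only through proposals along edges that still exist. The remaining case analysis is routine bookkeeping about the monotone proposal order of residents with strict preferences.
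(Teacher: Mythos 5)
Your proof is correct and follows essentially the same route as the paper's: the case where $(r,h)$ was deleted is handled exactly as in the paper (full-subscription forces all of $M(h)$ strictly better than $r$; under-subscription forces $full(h)=\text{true}$, contradicting that the algorithm returned $M$), and the undeleted case matches the paper's argument that $r$'s strict, in-order proposals give $M(r)\succeq_r h$. The only difference is that you spell out in detail the truncation-at-rank-$p$ observation that the paper asserts as ``by the design of the algorithm,'' which is a welcome but not substantively different elaboration.
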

\begin{proof}
    Suppose, for contradiction, that $M$ is not a \SSM. Then there must exist a pair $(r,h)$ that blocks $M$. We first show that $(r,h)$ was not deleted during the execution of Algorithm~\ref{algo:HRHTSSM}. Suppose that the pair $(r,h)$ was deleted by Algorithm~\ref{algo:HRHTSSM}. We consider two cases: (i) $h$ is fully subscribed in $M$: In this case, by the design of the algorithm, all residents matched to $h$ in $M$ are strictly better-preferred over $r$ -- contradicting the assumption that $(r,h)$ blocks $M$. (ii) $h$ is under-subscribed in $M$: Since $h$ rejected $r$, it must be the case that $full(h)=$ true. However, if $full(h)=$ true and $h$ is under-subscribed in $M$, then Algorithm~\ref{algo:HRHTSSM} would have concluded that $G$ does not admit a \SSM\ -- a contradiction. Therefore, the algorithm could not have deleted the pair $(r,h)$.

    The fact that $(r,h)$ was not deleted implies that \prefr\ did not become empty, and hence, $r$ is matched in $M$. Let $M(r)=h'$. Since $(r,h)$ blocks $M$, and resident preference lists are strict, it must be the case that $h\succ_r h'$. The fact that $r$ proposed to $h'$ implies that $h$ rejected $r$ and hence the pair $(r,h)$ was deleted by the Algorithm~\ref{algo:HRHTSSM} -- a contradiction.
\qed\end{proof}

 For the rest of the proof, we assume that $M$ denotes the matching obtained when the proposal sequence terminates, that is, just before reaching Line~\ref{algoHRHRTSSM:check}. We prove the following claims, which will be useful in proving the correctness of Line~\ref{algoHRHTSSM:NoInstance} of the Algorithm~\ref{algo:HRHTSSM}. 

\begin{cl}\label{cl:algoHRHTSSMPair}
    Suppose Algorithm~\ref{algo:HRHTSSM} deleted a pair $(r,h)$ during the execution. Then the pair $(r,h)$ cannot belong to any \SSM.
\end{cl}
\begin{proof}
    Suppose for contradiction that $(r,h)$ was deleted by Algorithm~\ref{algo:HRHTSSM}, but $(r,h)\in \tM$ for some \SSM\ $\tM$. Without loss of generality, assume that $(r,h)$ was the first strongly stable pair that was deleted during the execution of Algorithm~\ref{algo:HRHTSSM}. We note that $h$ is not over-subscribed in $\tM$. Suppose the pair $(r,h)$ was deleted at time $t$. This implies that $h$ was over-subscribed with residents, say $M^t(h)$ , at time $t$. This further implies that for each $r'\in M^t(h)$ we have that $r' \succeq_h r$.  Since $(r,h)\in \tM$, there must exist $r''\in M^t(h)$ such that $(r'',h)\notin \tM$. We claim that there does not exist any \SSM, say $M''$, such that $(r'',h'')\in M''$ for some $h'' \succ_{r''} h$. If this is the case, then the pair $(r'',h'')$ would have been deleted before $(r,h)$ -- a contradiction. This implies that $h\succ_{r''} \tM(r'')$. Since $r''\in M^t(h)$, $r'' \succeq_h r$. The fact that $(r'',h) \notin \tM$ implies that $(r'',h)$ blocks the supposed \SSM\ $\tM$ -- a contradiction. 
\qed\end{proof}

\begin{cl}\label{cl:algoHRHTUnmatched}
    If a resident $r$ is unmatched in $M$, then $r$ cannot be matched in any \SSM.
\end{cl}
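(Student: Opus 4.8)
The plan is to reduce the claim entirely to Claim~\ref{cl:algoHRHTSSMPair}, which already guarantees that any pair deleted by Algorithm~\ref{algo:HRHTSSM} cannot occur in a strongly stable matching. So the only thing that remains is to show: if $r$ is unmatched when the proposal sequence terminates, then for \emph{every} hospital $h$ that is acceptable to $r$ (equivalently, every $h$ on $r$'s original preference list), the pair $(r,h)$ was deleted at some point during the execution.

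First I would analyze how \prefr\ can shrink. A hospital $h$ is removed from \prefr\ only at Line~\ref{algoHRHTSSM:delete}, and in that very step $r$ is simultaneously removed from \prefh; that is, every time a hospital leaves $r$'s list, the pair $(r,h)$ is deleted. (It does not matter whether $r$ had actually proposed to $h$: $h$ may remove all residents at rank $\ge p$ when it over-subscribes, so $(r,h)$ can be deleted without a proposal from $r$ — this is harmless for the argument.) Hence the set of hospitals that have disappeared from $r$'s list is exactly the set of hospitals $h$ with $(r,h)$ deleted. Next I would invoke the termination condition of the while loop at Line~\ref{algoHRHTSSM:while}: the loop leaves an unmatched resident $r$ alone only when \prefr\ is empty. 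Therefore every hospital originally on $r$'s list has been removed, so every pair $(r,h)$ with $h$ acceptable to $r$ has been deleted during the run.

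Finally, applying Claim~\ref{cl:algoHRHTSSMPair} to each such pair, none of the pairs $(r,h)$ with $h$ acceptable to $r$ can belong to any strongly stable matching $\widetilde{M}$. Since a resident can only be matched to a hospital it finds acceptable, $r$ has no admissible partner in $\widetilde{M}$, and hence $r$ is unmatched in $\widetilde{M}$. As $\widetilde{M}$ was an arbitrary \SSM, this proves the claim.

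I do not expect a genuine obstacle here; the only point requiring care is the bookkeeping in the first step — namely that \prefr\ can lose a hospital \emph{only} through a coupled pair-deletion event at Line~\ref{algoHRHTSSM:delete}, so that an empty \prefr\ really does certify that every acceptable pair of $r$ was deleted.
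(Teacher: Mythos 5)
Your proposal is correct and follows the paper's own argument: the paper likewise notes that an unmatched $r$ must have an empty \prefr, so every pair $(r,h)\in E$ was deleted, and then invokes Claim~\ref{cl:algoHRHTSSMPair}. Your additional bookkeeping that list removals happen only via the coupled deletion at Line~\ref{algoHRHTSSM:delete} is a harmless (and welcome) elaboration of the same reasoning.
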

\begin{proof}
    Suppose a resident $r$ is unmatched in $M$. This implies during the execution of Algorithm~\ref{algo:HRHTSSM}, \prefr\ becomes empty. That is, each pair $(r,h)\in E$ is deleted by the Algorithm~\ref{algo:HRHTSSM}. Applying Claim~\ref{cl:algoHRHTSSMPair}, we conclude that $r$ remains unmatched in each \SSM. 
\qed\end{proof}

\begin{cl}\label{cl:algoHRHTSSMMustMatch}
    Suppose $G$ admits a \SSM\ $\tM$, and there exists a hospital $h$ such that $|\tM(h)|<q(h)$. If a resident $r$ proposed to $h$ during the course of Algorithm~\ref{algo:HRHTSSM}, then $r\in \tM(h)$.
\end{cl}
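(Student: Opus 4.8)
The plan is to argue by contradiction, leveraging the structural fact already proved in Claim~\ref{cl:algoHRHTSSMPair} that no pair deleted by Algorithm~\ref{algo:HRHTSSM} can appear in any \SSM. So suppose $r$ proposed to $h$ during the execution of Algorithm~\ref{algo:HRHTSSM}, yet $r\notin\tM(h)$, i.e. $(r,h)\notin\tM$; the goal is to expose a contradiction.

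The first step is to read off what $(r,h)\notin\tM$ forces about $\tM(r)$, using the hypothesis that $h$ is under-subscribed in $\tM$. Since $h$ is under-subscribed, one of its slots is padded with $\bot$, so $\bot\in\tM(h)$, and $r\succ_h\bot$ because $(r,h)\in E$. Now if $\tM(r)=\bot$, or more generally if $h\succeq_r\tM(r)$ (which, since resident lists are strict and $\tM(r)\ne h$, means $h\succ_r\tM(r)$ or $\tM(r)=\bot$), then $(r,h)$ satisfies clause (i) of Definition~\ref{def:strong} with witness $r'=\bot$, and hence $(r,h)$ is a \SBP\ w.r.t. $\tM$ — contradicting strong stability. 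Therefore $r$ is matched in $\tM$ and $\tM(r)\succ_r h$.

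The second step uses the proposal dynamics: a resident always proposes to the top-ranked hospital surviving on its list, and a hospital leaves a resident's list only when the corresponding edge is deleted at Line~\ref{algoHRHTSSM:delete}. Hence, at the instant $r$ proposed to $h$, every hospital that $r$ strictly prefers to $h$ had already been deleted from $\prefr$; in particular the pair $(r,\tM(r))$ was deleted by the algorithm. But $(r,\tM(r))\in\tM$, a \SSM, contradicting Claim~\ref{cl:algoHRHTSSMPair}. This contradiction yields $r\in\tM(h)$.

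The only delicate point — and thus the place I would be most careful — is the first step: one must correctly exploit the implicit $\bot$-padding of an under-subscribed hospital to conclude that $(r,h)\notin\tM$ forces $\tM(r)$ to be \emph{strictly} better than $h$ (not merely weakly, and in particular that $r$ is not left unmatched). Everything after that is a direct appeal to the monotonic nature of the deletions in Algorithm~\ref{algo:HRHTSSM} together with Claim~\ref{cl:algoHRHTSSMPair}.
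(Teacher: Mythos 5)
Your proof is correct and follows essentially the same route as the paper's: both rest on the fact that when $r$ proposes to $h$ every hospital $r$ prefers to $h$ has been deleted from $\prefr$, combined with Claim~\ref{cl:algoHRHTSSMPair}, and on the observation that $h \succ_r \tM(r)$ (or $r$ unmatched) would make $(r,h)$ block $\tM$ because $h$ is under-subscribed. The only difference is cosmetic: you run the two steps in the opposite order inside a proof by contradiction, whereas the paper argues directly that $\tM(r)\not\succ_r h$ and then that $h\not\succ_r\tM(r)$, concluding $\tM(r)=h$.
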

\begin{proof}
    The fact that $r$ proposed to $h$ during the execution of Algorithm~\ref{algo:HRHTSSM} implies that each pair $(r,h')$ such that $h'\succ_{r} h$ is deleted. Thus, by applying Claim~\ref{cl:algoHRHTSSMPair}, we know that $\tM(r) \not\succ_{r} h$. If $h\succ_{r}\tM(r)$, then $(r,h)$ blocks $\tM$ as $|\tM(h)|<q(h)$.  Therefore, $r\in \tM(h)$. 
\qed\end{proof}
\begin{lem}\label{lem:algoHRHT2}
    If there exists a hospital $h$ such that $full(h)=$ true, but $|M(h)|<q(h)$, then $G$ does not admit a \SSM.
\end{lem}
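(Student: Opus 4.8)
I would argue by contradiction. Suppose $G$ admits a \SSM\ $\tM$, and let $M$ denote the matching obtained when the proposal sequence of Algorithm~\ref{algo:HRHTSSM} terminates. Rather than trying to exhibit a blocking pair for $\tM$ directly, the plan is to first run a counting argument that pins down $|\tM(g)|$ for \emph{every} hospital $g$, thereby reducing the whole situation to the single hospital $h$ and a resident it has bumped.

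The counting rests on two inequalities. \emph{Resident side:} for every resident $r$ one has $M(r)\succeq_r\tM(r)$. This is the natural strengthening of Claim~\ref{cl:algoHRHTUnmatched} and is proved the same way: a resident proposes strictly down its list, so if $M(r)=h'$ then every hospital $h''\succ_r h'$ had the edge $(r,h'')$ deleted during the algorithm (either $h''$ bumped $r$, or $(r,h'')$ was deleted when $h''$ rejected at a rank weakly worse than $r$'s rank in $\prefhp$), and by Claim~\ref{cl:algoHRHTSSMPair} no deleted edge lies in $\tM$, hence $\tM(r)\preceq_r h'$. Consequently every resident matched in $\tM$ is matched in $M$, so $\sum_{g\in\HH}|\tM(g)|\le\sum_{g\in\HH}|M(g)|$. \emph{Hospital side:} $|M(g)|\le|\tM(g)|$ for every hospital $g$. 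Indeed $|M(g)|\le q(g)$ since $M$ is a valid matching; if $\tM(r)=g$ for all $r\in M(g)$ then $M(g)\subseteq\tM(g)$ and we are done, and otherwise some $r\in M(g)$ satisfies $g\succ_r\tM(r)$ by the resident-side inequality, so if $g$ were under-subscribed in $\tM$ the pair $(r,g)$ would block $\tM$ by condition~(i) of Definition~\ref{def:strong} (taking $r'=\bot\in\tM(g)$), contradicting strong stability; hence $|\tM(g)|=q(g)\ge|M(g)|$. Summing over $g$ and combining with the resident-side inequality forces $|M(g)|=|\tM(g)|$ for every hospital $g$, and in particular $|\tM(h)|=|M(h)|<q(h)$, so $h$ is under-subscribed in $\tM$ as well.

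To conclude: $full(h)=$ true means $h$ was fully subscribed at some moment of the execution, while $|M(h)|<q(h)$ at termination; since a hospital loses residents only by bumping them when over-subscribed, $h$ bumped some resident $r^\ast$, so $(r^\ast,h)$ was deleted and hence $\tM(r^\ast)\neq h$ by Claim~\ref{cl:algoHRHTSSMPair}. As $r^\ast$ had proposed to $h$ and then continued down its list, $h\succ_{r^\ast}M(r^\ast)\succeq_{r^\ast}\tM(r^\ast)$. Since $h$ is under-subscribed in $\tM$, the pair $(r^\ast,h)$ blocks $\tM$ (condition~(i) of Definition~\ref{def:strong}, with $r'=\bot$), contradicting that $\tM$ is a \SSM; therefore $G$ admits no \SSM. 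I expect the main obstacle to be the temptation to build the blocking pair directly in the case where $h$ is fully subscribed in $\tM$: that approach forces one into unbounded rotation/augmenting-path chains that need not terminate. The clean move is the global counting in the previous paragraph, which eliminates that case outright and leaves only the trivial bumped-resident argument.
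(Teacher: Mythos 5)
Your proof is correct and follows essentially the same route as the paper: the global counting argument (hospital-side bound $|M(g)|\le|\tM(g)|$, which is the content of Claim~\ref{cl:algoHRHTSSMMustMatch}, against the resident-side bound coming from Claims~\ref{cl:algoHRHTSSMPair} and~\ref{cl:algoHRHTUnmatched}) forces $|\tM(h)|=|M(h)|<q(h)$, after which the resident bumped by $h$ yields the blocking pair exactly as in the paper's proof. The only cosmetic difference is that you state the per-resident inequality $M(r)\succeq_r\tM(r)$ explicitly and re-derive the hospital-side bound inline rather than citing the corresponding claims.
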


\begin{proof}
    Suppose for contradiction that $G$ admits a \SSM, say $\tM$. The fact (i) $full(h)=$ true, and (ii) $|M(h)|<q(h)$ together imply that there exists a resident $r^*$ such that the proposal by $r^*$ was provisionally accepted by $h$ but was later rejected by $h$. This further implies that the pair $(r^*,h)$ was deleted by Algorithm~\ref{algo:HRHTSSM}. Applying Claim~\ref{cl:algoHRHTSSMPair}, we know that $(r^*,h)\notin \tM$. Also, note that $(r^*,h')\notin \tM$ for any $h'$ such that $h'\succ_r h$. This is because all such pairs are deleted (as $r^*$ proposed to the hospital $h$).  If we show that $|\tM(h)|<q(h)$, then it immediately implies that $(r^*,h)$ blocks $\tM$. Next, we show that $|\tM(h)|<q(h)$.
    
    Let us assume that the set of residents matched in $M$ is denoted by $\RR^M$, and the set of residents matched in $\tM$ is denoted by $\RR^{\tM}$. Applying Claim~\ref{cl:algoHRHTSSMMustMatch}, we observe that if $|M(h)|=q(h)$, then $|\tM(h)|=q(h)$, and if $|M(h)|< q(h)$, then $|\tM(h)|\ge |M(h)|$. Therefore, $|\tM(h)|\ge |M(h)|$ for each $h\in \HH$. Thus, $|\RR^{\tM}| \ge |\RR^{M}|$. Now, using Claim~\ref{cl:algoHRHTUnmatched}, we observe that $|\RR^{\tM}| \le |\RR^{M}|$. Thus, $|\RR^{\tM}| = |\RR^{M}|$. Since $|\tM(h)|\ge |M(h)|$, we conclude that $|\tM(h)|= |M(h)|$ for each $h\in \HH$. Thus, $|\tM(h)|=|M(h)|<q(h)$.

    Therefore, $(r^*,h)$ blocks $\tM$.
\qed\end{proof}

Using Lemma~\ref{lem:algoHRHT1} and Lemma~\ref{lem:algoHRHT2} we have the following theorem.
\begin{thm}\label{thm:Irving}
   For a given \HRHT\ instance, the existence of \SSM\ is decidable in polynomial time. Moreover, if the instance admits a \SSM, one such matching can be computed in polynomial time.
\end{thm}
\end{document}